\newtheorem{theorem}{Theorem}
\newtheorem{definition}{Definition}
\newtheorem{lemma}{Lemma}
\newcommand\RedeclareMathOperator{%
  \@ifstar{\def\rmo@s{m}\rmo@redeclare}{\def\rmo@s{o}\rmo@redeclare}%
}
\newcommand\rmo@redeclare[2]{%
  \begingroup \escapechar\m@ne\xdef\@gtempa{{\string#1}}\endgroup
  \expandafter\@ifundefined\@gtempa
     {\@latex@error{\noexpand#1undefined}\@ehc}%
     \relax
  \expandafter\rmo@declmathop\rmo@s{#1}{#2}}
\newcommand\rmo@declmathop[3]{%
  \DeclareRobustCommand{#2}{\qopname\newmcodes@#1{#3}}%
}
\newcommand{\SE}[2][]{\mathbf{SE}_{#2}^{#1}}
\newcommand{\NW}[2][]{\mathbf{NW}^{#1}_{#2}}
\DeclareMathOperator{\cgeq}{\succcurlyeq}
\DeclareMathOperator{\cleq}{\preccurlyeq}
\DeclareMathOperator{\cgt}{\succ}
\DeclareMathOperator{\clt}{\prec}
\RedeclareMathOperator{\S}{\mathbf{S}}
\begin{document}
\title{Axiomatization of the Choquet integral for 2-dimensional heterogeneous product sets}
\author[t1]{Mikhail Timonin}
\ead[t1]{m.timonin@qmul.ac.uk}
\address[t1]{Queen Mary University of London \\ Mile End Road London, UK, E1 4NS \\  +442078826139  }
\date{}

%\tableofcontents

\begin{abstract}
  We prove a representation theorem for the Choquet integral model. The preference relation is defined on a two-dimensional heterogeneous
  product set $X = X_1 \times X_2$ where elements of $X_1$ and $X_2$ are not necessarily comparable with each other. However, making such
  comparisons in a meaningful way is necessary for the construction of the Choquet integral (and any rank-dependent model). We construct the
  representation, study its uniqueness properties, and look at applications in multicriteria decision analysis, state-dependent utility
  theory, and social choice. Previous axiomatizations of this model, developed for decision making under uncertainty, relied heavily on the
  notion of comonotocity and that of a ``constant act''. However, that requires $X$ to have a special structure, namely, all factors of this
  set must be identical. Our characterization does not assume commensurateness of criteria a priori, so defining comonotonicity becomes
  impossible.
\end{abstract}

\begin{keyword}
Choquet integral \sep Decision theory \sep MCDA
\end{keyword}

\maketitle
\section{Introduction}
\label{sec:introduction}

Rank-dependent models appeared in axiomatic decision theory in reply to the criticism of Savage's postulates of rationality
\citep{savage1954foundations}. The renowned Ellsberg paradox \citep{ellsberg1961risk} has shown that people can violate Savage's axioms and
still consider their behaviour rational. First models accounting for the so-called uncertainty aversion observed in this paradox appeared in
the 1980s, in the works of \citet{quiggin1982theory} and others (see \citep{wakker1991additive-RO} for a review). One particular
generalization of the expected utility model (EU) characterized by \citet{schmeidler1989subjective} is the Choquet expected utility (CEU),
where probability is replaced by a non-additive set function (called capacity) and integration is performed using the Choquet integral.

Since Schmeidler's paper, various versions of the same model have been characterized in the literature (e.g. \citep{gilboa1987expected,
  wakker1991additive}). CEU has gained some momentum in both theoretical and applied economic literature, being used mainly for analysis of
problems involving Knightian uncertainty. At the same time, rank-dependent models, in particular the Choquet integral, were adopted in
multiattribute utility theory (MAUT) \citep{keeney1993decisions}. Here the integral gained popularity due to the tractability of
non-additive measures in this context (see \citep{grabisch2008decade} for a review). The model permitted various preferential phenomena,
such as criteria interaction, which were impossible to reflect in the traditional additive models.

The connection between MAUT and decision making under uncertainty has been known for a long time. In the case when the number of states is
finite, which is assumed hereafter, states can be associated with criteria. Accordingly, acts correspond to multicriteria
alternatives. Finally, the sets of outcomes at each state can be associated with the sets of criteria values. However, this last transition
is not quite trivial. It is commonly assumed that the set of outcomes is the same in each state of the world
\citep{savage1954foundations,schmeidler1989subjective}. In multicriteria decision making the opposite is true. Indeed, consider preferences
of consumers choosing cars. Each car is characterized by a number of features (criteria), such as colour, maximal speed, fuel consumption,
comfort, etc. Apparently, sets of values taken by each criterion can be completely different from those of the others. In such context the
ranking stage of rank-dependent models, which in decision under uncertainty involves comparing outcomes attained at various states, would
amount to comparing colours to the level of fuel consumption, and maximal speed to comfort.

Indeed, the traditional additive model \citep{debreu1959topological,krantz1971foundation} only implies meaningful comparability of
\textit{units} between goods in the bundle, but not of their absolute levels. However, in rank-dependent models such comparability seems to
be a necessary condition. This paper develops a characterization of the Choquet integral for two-dimensional sets with comparability
(commensurateness) of the criteria not assumed a priori.

Let $X = X_1 \times X_2$ be a (heterogeneous) product set and $\succcurlyeq$ a binary relation defined on this set. In multiattribute
utility theory, elements of the set $X$ are interpreted as alternatives characterized by two criteria taking values from sets $X_1$ and
$X_2$. In decision making under uncertainty, the factors of the set $X$ usually correspond to outcomes in various states of the world, and an
additional restriction $X_1 = X_2 = Y$ is being made. Thus in CEU, the set $X$ is homogeneous, i.e. $X = Y^n$.

Previous axiomatizations of the Choquet integral have been given for this special case of $X = Y^n$ (see
\citep{kobberling2003preference} for a review of approaches) and its variant $X = \mathbb{R}^n$ (see \citep{grabisch2008decade} for a
review). Another approach using conditions on the utility functions was proposed in \citep{labreuche2012axiom}. A conjoint axiomatization of
the Choquet integral for the case of a general $X$ was an open problem in the literature. One related result that should be mentioned is the
recent conjoint axiomatization of another non-additive integral, the Sugeno integral (\citep{greco2004axiomatic,bouyssou2009conjoint}).

The crucial difference between our result and previous axiomatizations is that the notions of ``comonotonicity'' and ``constant
act'' are no longer available in the heterogeneous case. Recall that two acts are called comonotonic in CEU if their outcomes have the
same ordering. A constant act is plainly an act having the same outcome in every state of the world. Apparently, since criteria sets $X_1$
and $X_2$ in our model can be completely disjoint, neither of the notions can be used anymore due to the fact that there does not exist a
meaningful built-in order between elements of sets $X_1$ and $X_2$. New axioms and proof techniques must be introduced to
deal with this complication.

The paper is organized as follows. Section \ref{sec:choquet-integral} defines the Choquet integral and looks at its properties. Section
\ref{sec:axioms} states and explains the axioms. Section \ref{sec:repr-theor} gives the representation theorem. Section
\ref{sec:discussion-results} discusses the main result and its economic interpretations. The proof of the theorem is presented in the
Appendix; in particular, necessity of axioms is discussed in section \ref{sec:neccessity-axioms}.

% \section{Introduction}
% \label{sec:introduction}

% % \section{Definitions}
% % \label{sec:definitions}

% % \begin{definition}
% %   $bq$ is south-east from $ap$ ($bq \se ap$) iff $ap \cgeq aq$ for all $a$
% %   and $bp \cgeq ap$ for all $p$.
% % \end{definition}

% % \begin{definition}
% %   $ap$ is north-west from $bq$ ($ap \nw bq$) iff $ap \cgeq aq$ for all $a$
% %   and $bp \cgeq ap$ for all $p$.
% % \end{definition}

% % \begin{lemma}
% %   \label{lm:2}
% %   For any $ap \sim bq$, $ap SE bq$ or $ap NW bq$
% % \end{lemma}
% % \begin{proof}
% %   Follows from weak separability.
% % \end{proof}

\section{Choquet integral in MAUT}
\label{sec:choquet-integral}

Let $ N = \{1, 2 \} $ be a set (of criteria) and $ 2^N $ its power set.

\begin{definition}
  \emph{Capacity (non-additive measure, fuzzy measure)} is a set function
  $\nu:2^N\rightarrow \mathbb{R}_+$ such that:
  \begin{enumerate}
  \item $ \nu(\varnothing)=0 $;
  \item $ A \subseteq B \Rightarrow \nu(A)\leq\nu(B), \ \forall A,B \in 2^N $.
  \end{enumerate}
  In this paper, it is also assumed that capacities are normalized, i.e.\ $ \nu(N) = 1 $.
\end{definition}

\begin{definition}
  The \emph{Choquet integral} of a function $ f:N \rightarrow \mathbb{R} $ with respect to a capacity $ \nu $ is defined as
  \begin{equation*}
    C(\nu,f) = \int \limits_{0}^{\infty} \nu( \{ i \in N \colon f(i) \geq r \})dr + \int \limits_{-\infty}^{0} [\nu( \{ i \in N \colon f(i)
    \geq r \}) - 1]dr
  \end{equation*}
\end{definition}
Denoting the range of $f:N \rightarrow \mathbb{R}$ as $(f_1,\ldots,f_n)$, the definition can be expressed as:
   \begin{equation*}
    C(\nu,(f_1,\ldots,f_n)) =
    \sum \limits_{i=1}^{n} (f_{(i)}-f_{(i-1)})\nu(\{ j \in N \colon f_j \geq  f_{(i)} \} )
  \end{equation*}
  where $f_{(1)},\ldots,f_{(n)}$ is a permutation of $f_1,\ldots,f_n$ such that $f_{(1)} \leq f_{(2)} \leq \cdots \leq
  f_{(n)}$, and $f_{(0)}=0$.

\subsection{The model}
\label{sec:model}

Let $\cgeq$ be a binary relation on the set $X = X_1 \times X_2$. $\cgt, \clt, \cleq, \sim, \not \sim$ are defined in the usual way. % In
% MCDA, elements of set $X$ are interpreted as alternatives characterized by criteria from the set $N = \{1,2\}$. Sets $X_1$ and $X_2$ contain
% criteria values for criteria 1 and 2 respectively. 
We say that $\cgeq$ can be represented by a Choquet integral, if there exists a capacity
$\nu$ and functions $f_1: X_1 \rightarrow \mathbb{R}$ and $f_2: X_2 \rightarrow \mathbb{R}$, called value functions, such that:
\begin{equation*}
  x \cgeq y \iff C(\nu,(f_1(x_1),f_2(x_2)) \geq C(\nu,(f_1(y_1),f_2(y_2)).
\end{equation*}

As seen in the definition of the Choquet integral, its calculation involves comparison of the $f_i$'s to each other. It is not immediately
obvious how this operation can have any meaning in the MAUT context. It is well-known that comparing \textit{levels} of value functions for
various attributes is meaningless in the additive model \citep{krantz1971foundation} (recall that the origin of each value function can be
changed independently). In the homogeneous case $X = Y^n$ this problem is readily solved, as we have a single set of outcomes $Y$ (in the
context of decision making under uncertainty). The required order is either assumed as given \citep{wakker1991additive-RO} or is readily
derived from the ordering of the constant acts $(\alpha, \ldots, \alpha)$ \citep{wakker1991additive}. Since there is a single outcome set,
we also have a single value (utility) function $U:Y \rightarrow \mathbb{R}$, and thus comparing $U(y_1)$ to $U(y_2)$ is perfectly sensible, since $U$
represents the order on the set $Y$. None of these methods can be readily applied in the heterogeneous case.

\subsection{Properties of the Choquet integral}
\label{sec:prop-choq-integr}

Below are given some important properties of the Choquet integral:
\begin{enumerate}
\item Functions $f:N \rightarrow \mathbb{R}$ and $g:N \rightarrow \mathbb{R}$ are comonotonic if for no $i,j \in N$ we have $f(i) > f(j)$ and
  $g(i) < g(j)$. For all comonotonic $f$ the Choquet integral reduces to the Lebesgue integral. In the finite case, the integral is
  accordingly reduced to a weighted sum.
\item Particular cases of the Choquet integral (e.g. \citep{grabisch2008decade}). % Assume $N = \{1,2\}$.
  \begin{itemize}
  \item If $\nu(\{1\}) = \nu(\{2\}) = 1$, then $C(\nu,(f_1,f_2)) = \max(f_1,f_2)$.
  \item If $\nu(\{1\}) = \nu(\{2\}) = 0$, then $C(\nu,(f_1,f_2)) = \min(f_1,f_2)$.
  \item If $\nu(\{1\}) + \nu(\{2\}) = 1$, then $C(\nu,(f_1,f_2)) = \nu(\{1\})f_1 + \nu(\{2\})f_2$
  \end{itemize}
\end{enumerate}

Property 1 states that the set $X$ can be separated into subsets corresponding to particular orderings of the value functions. In the case
of two criteria there are only two such sets: $\{x \in X : f_1(x_1) \geq f_2(x_2)\}$ and $\{x \in X: f_2(x_2) \geq f_1(x_1) \}$. Since the
integral on each of the sets is reduced to a weighted sum, i.e. an additive representation, we should expect many of the axioms of the
additive conjoint model to be valid on this subsets. This is the intuition behind several of the axioms given in the following section.

\section{Axioms}
\label{sec:axioms}

\begin{definition}
  A relation $\cgeq$ on $X_1 \times X_2$ satisfies \emph{triple cancellation}, provided that for every $a,b,c,d \in X_1$ and $p,q,r,s \in X_2$ , if
  $ap \cleq bq, ar \cgeq bs$, and $cp \cgeq dq$, then $cr \cgeq ds$.
\end{definition}

\begin{definition}
  A relation $\cgeq$ on $X_1 \times X_2$ is \emph{independent}, iff for $a,b \in X_1, ap \cgeq bp$ for some $p \in X_2$ implies that $aq \cgeq bq$ for every $q \in X_2$; and,
  for $p,q \in X_2, ap \cgeq aq$ for some $a \in X_1$ implies that $bp \cgeq bq$ for every $b \in X_1$.
\end{definition}

\begin{enumerate}[label=\textbf{A{\arabic*}.}]
    \item $\cgeq$ is a weak order. \label{sec:am0}
    \item Weak separability - for any $a_ip_j, b_ip_j \in X$ such that $ a_ip_j \cgt b_ip_j$, we have $a_iq_j \cgeq b_iq_j$ for all
      $q \in X_j$, for $i,j \in \{1,2\}$. \label{sec:am0} 
\end{enumerate}
The separability condition is weaker than the one normally used. \footnote{The condition first appeared in \citep{bliss1975capital}, and in
  this form in \citep{mak1984notes}} In fact, it only rules out a reversal of strict preference. Note, that the
condition implies that for any $a,b \in X_1$ either $ap \cgeq bp$ or $bp \cgeq ap$ for all $p \in X_2$ (symmetrically for the second
coordinate). Apparently, transitivity also holds: if $ap \cgeq bp$ for all $p \in X_2$ and $bp \cgeq cp$ for all $p \in X_2$, then $ap \cgeq
cp$ for all $p \in X_2$. This allows to introduce the following weak orders:
\begin{definition}
  For all $a, b \in X_1$ define $\cgeq_1$ as $a \cgeq_1 b \iff ap \cgeq bp$ for all $p \in X_2$. Define $\cgeq_2$ symmetrically. 
\end{definition}
\begin{definition}
  We call $a \in X_1 $ \emph{minimal} if $b \cgeq_1 a$ for all $b \in X_1$, and \emph{maximal} if $a \cgeq_1 b$ for all $b \in
  X_1$. Symmetric definitions hold for $X_2$.
\end{definition}
\begin{definition}
  For any $z \in X$ define $\SE{z} = \{x \colon x \in X, x_1 \cgeq_1 z_1 ~\text{and}~ z_2 \cgeq_2 x_2\}$, and $\NW{z} = \{x \colon x \in X,
  x_2 \cgeq_2 z_2 ~\text{and}~ z_1 \cgeq_1
  x_1\}$.
\end{definition}
The ``rectangular''cones $\SE{z}$ and $\NW{z}$ play a significant role in the sequel. 
\begin{enumerate}[resume,label=\textbf{A{\arabic*}.}]
    \item For any $z \in X$, triple cancellation holds either on $\SE{z}$ or on $\NW{z}$.  \label{sec:am2}
\end{enumerate}
The axiom says that the set $X$ can be covered by ``rectangular'' cones, such that triple cancellation holds within each cone. We will call
such cones ``3C-cones''. The axiom effectively divides $X$ into subsets, defined as follows.
\begin{definition} We say that
  \begin{itemize}
  \item $x \in \SE{}$ if:
    \begin{itemize}
    \item There exists $z \in X$ such that $z_1$  is not maximal and $z_2$ is not minimal, triple cancellation holds on $\SE{z}$, and $x \in
      \SE{z}$, or
    \item $x_1$ is maximal or $x_2$ is minimal and for no $y \in \SE{x} \setminus x$ triple cancellation holds on $\NW{x}$;
    \end{itemize}
  \item $x \in \NW{}$ if:
    \begin{itemize}
    \item There exists $z \in X$ such that $z_1$  is not maximal and $z_2$ is not minimal, triple cancellation holds on $\SE{z}$, and $x \in
      \SE{z}$, or
    \item $x_1$ is minimal or $x_2$ is maximal and for no $y \in \NW{x} \setminus x$ triple cancellation holds on $\SE{x}$.
    \end{itemize}
  \end{itemize}
Define also $\Theta = \NW{} \cap \SE{}$. 
\end{definition}
\begin{definition}
  We say that $i \in N$ is \emph{essential} on $A \subset X$ if there exist $x_ix_j, y_ix_j \in A$, $i,j \in N$, such that $x_ix_j \cgt y_ix_j$.
\end{definition}
Essentiality of coordinates is discussed in detail in Section \ref{sec:essentiality}.
\begin{enumerate}[resume,label=\textbf{A{\arabic*}.}]
  \item  Whenever $ ap \cleq bq, ar \cgeq bs, cp \cgeq dq $, we have that $cr \cgeq ds$, provided that either:
    \begin{enumerate}[label={\alph*})]
    \item $ap, bq, ar, bs, cp, dq, cr, ds \in \NW{} (\SE{})$, or;
    \item $ap, bq, ar, bs \in \NW{}$ and $i=2$ is essential on $\NW{}$ and $cp, dq, cr, ds \in \SE{}$ or vice versa, or;
    \item $ap, bq, cp, dq \in \NW{}$ and $i=1$ is essential on $\NW{}$ and $cp, dq, cr, ds \in \SE{}$ or vice versa.
    \end{enumerate} \label{sec:am3}
  \end{enumerate}
  Informally, the meaning of the axiom is that ordering between preference differences (``intervals'') is preserved irrespective of the
  ``measuring rods'' used to measure them. However, contrary to the additive case this does not hold on all $X$, but only when either points
  involved in all four relations lie in a single 3C-cone, or points involved in two relations lie in one 3C-cone and those involved in the
  other two in another.
\begin{enumerate}[resume,label=\textbf{A{\arabic*}.}]
\item Whenever $ap \cleq bq, cp \cgeq dq$ and $ay_0 \sim x_0\pi(a), by_0 \sim x_0\pi(b), cy_1 \sim x_1\pi(c), dy_1 \sim x_1\pi(d)$, and also
  $e\pi(a) \cgeq f\pi(b)$, we have $e\pi(c) \cgeq f\pi(d)$, for all $ap,bq,cp,dq \in \NW{}$ or $\SE{}$ provided coordinate $i=1$ is essential on the
  subset which contains these points, $ay_0, by_0, cy_1, dy_1 \in \NW{}$ or $\SE{}$, $x_0\pi(a),x_0\pi(b),x_1\pi(c),x_1\pi(d) \in \NW{}$ or
  $\SE{}$ provided coordinate $i=2$ is essential on the subset which contains these points, $e\pi(a), f\pi(b), e\pi(c), f\pi(d) \in \NW{}$ or
  $\SE{}$. Same condition holds for the other dimension symmetrically.
    \label{sec:am5}
  \end{enumerate}
  The formal statement of \textbf{A5} is rather complicated, but it simply means that the ordering of the intervals is preserved across
  dimensions. Together with \textbf{A4} the conditions are similar to Wakker's trade-off consistency condition
  \citep{wakker1991additive-RO}. The axiom bears even stronger similarity to Axiom 5 (compatibility) from section 8.2.6 of
  \citep{krantz1971foundation}. Roughly speaking, it says that if the interval between $c$ and $d$ is larger than that between $a$ and $b$,
  then projecting these intervals onto another dimension by means of the equivalence relations must leave this order unchanged. We
  additionally require the comparison of intervals and projection to be consistent - meaning that quadruples of points in each part of the
  statement lie in the same 3C-cone. Another version of this axiom, which is going to be used frequently in the proofs, is formulated in
  terms of standard sequences in Lemma \ref{lm:A5}.
\begin{enumerate}[resume,label=\textbf{A{\arabic*}.}]
  \item Bi-independence : Let $ap,bp,cp,dp \in \SE{}(\NW{})$ and $ap \cgt bp$. If for some $q \in X_2$ also exist $cq \cgt dq$, then $cp
    \cgt dp$. Symmetric condition holds for the second coordinate.
  \end{enumerate}
  This axiom is similar to ``strong monotonicity'' in \citep{wakker1991additive-RO}. We analyze its necessity and the intuition behind it in
  section \ref{sec:essentiality}.
\begin{enumerate}[resume,label=\textbf{A{\arabic*}.}]
  \item Both coordinates are essential on $X$. 
  \item Restricted solvability : if $x_ia_j \cgeq y \cgeq x_ic_j$, then there exists $b: x_ib_j \sim y$, for $i,j \in \{1,2\}$.
    \label{am6}
  \item Archimedean axiom: for every $z \in \NW{}(\SE{})$ every bounded standard sequence contained in $\NW{z}(\SE{z})$ is finite.
    \label{am7}
  \end{enumerate}

\paragraph{Structural assumption.} For no $a,b \in X_1$ we have $ap \sim bp$ for all $p \in X_2$. Similarly, for no $p,q \in X_2$ we have $ap
\sim aq$ for all $a \in X_1$. If such points exist, say $ap \sim bp$ for all $p \in X_2$, then we can build the representation for a set
$X_1' \times X_2$ where $X_1' = X_1 \setminus a$, and later extend it to $X$ by setting $f_1(a) = f_1(b)$.
\label{sec:struct-assumpt}
%
% \linebreak
% Note that the structural assumption together with bi-independence implies that if $ap \cgeq dp$ for all $p \in X_2$ and exist $cq, dq \in
% \SE{}(\NW{})$ such that $cq \cgt dq$, then for all $p \in X_2$ such that $ap, bp \in \SE{}(\NW{})$ it holds $ap \cgt bp$. This is very
% similar to ``strong monotonicity'' in \citep{wakker1991additive-RO}. 
\paragraph{$X$ is order dense.} Whenever $x \cgt y$ there exists $z$ such that $x \cgt z \cgt y$. From this and restricted solvability
immediately follows that $\cgeq_i$ is order dense as well, in other words, whenever $a_ip_j \cgt b_ip_j$ there exists $c \in X_i$ such that
$a_ip_j \cgt c_ip_j \cgt b_ip_j$, for $i,j \in N$.

\paragraph{"Closedness" of $\SE{}$ and $\NW{}$.} Finally, we \emph{extend} the set $X$ as follows. Whenever exist $ap \not \in \NW{}$ and
$bp \not \in \SE{}$, there exist also $cp \in \Theta$. Similarly, whenever exist $ap \not \in \NW{}$ and
$aq \not \in \SE{}$, there exist also $ar \in \Theta$.

\subsection{Discussion of axioms}
\label{sec:discussion-axioms}

Roughly speaking, for two dimensional sets the Choquet integral can be characterized by saying that $X$ is divided into two subsets such
that $\cgeq$ on each of them has an additive representation, while the intersection of these subsets (in the representation) is the line
$\{x : f_1(x_1) = f_2(x_2)\}$. In the previous characterizations locating these subsets was straightforward, as they are nothing else but
the comonotonic subsets of $X$. In this paper we take a different approach. Instead, we state that $X$ can be separated into two subsets
without imposing any additional constraints on their location and then use additional axioms to characterize the intersection of these
subsets and to show that it is mapped to the line $\{x : f_1(x_1) = f_2(x_2)\}$.

Our axioms aim to reflect the main properties of the Choquet integral. The first one is that the set $X$ can be divided into subsets, such
that within every such subset the preference relation can be represented by an additive function. The axiom (\textbf{A3}) we introduce is
similar to the ``2-graded'' condition previously used for characterizing of MIN/MAX and the Sugeno integral
(\citep{greco2004axiomatic,bouyssou2009conjoint}). At every point $z \in X$ it is possible to build two ``rectangular cones'': $\{x : x_1
\cgeq_1 z_1 ~\text{and}~ z_2 \cgeq_2 x_2\}$, and $\{x : x_2 \cgeq_2 z_2 ~\text{and}~ z_1 \cgeq_1 x_1\}$. The axiom
states that triple cancellation must then hold on at least one of these cones.

The second property is that the additive representations on different subsets are interrelated, in particular trade-offs between criteria
values are consistent across subsets both within the same dimension and for different ones. This is reflected by two axioms (\textbf{A4,
  A5}), similar to the ones used in \citep{wakker1991additive} and \citep{krantz1971foundation} (section 8.2). One, roughly speaking, states
that triple cancellation holds across cones, while the other says that the ordering of intervals on any dimension must be preserved when
they are projected onto another dimension by means of equivalence relations. 

These axioms are complemented by a new condition called bi-independence (\textbf{A6}) and weak separability (\textbf{A2})
% \citep{bouyssou2009conjoint}
- which together reflect the monotonicity property of the integral.

Standard essentiality,``comonotonic'' Archimedean axiom and restricted solvability (\textbf{A7,A8,A9}) complete the list. Finally, $\cgeq$
is supposed to be a weak order, and $X$ is order dense (\textbf{A1}).

Our most important axioms - \textbf{A3,A4,A5,A6}, are not only sufficient, but also necessary. Necessity and detailed analysis of
\textbf{A6} is given in Section \ref{sec:essentiality}, necessity of \textbf{A4} and \textbf{A5} is proved in Section
\ref{sec:neccessity-axioms}, whereas necessity of \textbf{A3} is immediate (in the representation one of the regions $\NW{z}$ and $\SE{z}$
is necessarily contained in a comonotonic subset of $\mathbb{R}^2$). Necessity of some of the remaining axioms is well-known
\cite{wakker1991additive-RO, bouyssou2004preferences}.

\section{Representation theorem}
\label{sec:repr-theor}

\begin{theorem}
  \label{theo:repr}
  Let $\cgeq$ be an order on $X$ and let $X$ be order dense and the structural assumption hold. Then, if axioms \textbf{A1}-\textbf{A9} are
  satisfied, there exists a capacity $\nu$ and value functions $f_1:X_1 \rightarrow \mathbb{R}, f_2:X_2 \rightarrow
  \mathbb{R}$, such that $\cgeq$ can be represented by the Choquet integral:
  \begin{equation}
    \label{eq:repr}
    x \cgeq y \iff C(\nu,(f_1(x_1),f_2(x_2))) \geq C(\nu,(f_1(y_1),f_2(y_2))),
  \end{equation}
  for all $x,y \in X$.  Moreover, $\nu$ is determined uniquely and value functions have the following uniqueness properties:
  \begin{enumerate}
  \item If $\nu(\{1\}) + \nu(\{2\}) = 1$, then for any functions $g_1:X_1 \rightarrow \mathbb{R}, g_2:X_2 \rightarrow \mathbb{R}$
    such that (\ref{eq:repr}) holds with $f_i$ substituted by $g_i$, we have $f_i (x_i) = \alpha g_i (x_i) + \beta_i$ for some $\alpha > 0$.
  \item If $\nu(\{1\}) \in (0,1)$ and $\nu(\{2\}) \in (0,1)$ and $\nu(\{1\}) + \nu(\{2\}) \neq 1$, then for any functions $g_1:X_1
    \rightarrow \mathbb{R}, g_2:X_2 \rightarrow \mathbb{R}$ such that (\ref{eq:repr}) holds with $f_i$ substituted by $g_i$, we have $f_i
    (x_i) = \alpha g_i (x_i) + \beta$ for some $\alpha > 0$.
  \item If $\nu(\{2\}) \in (0,1)$, $\nu(\{1\}) \in \{0,1\}$, then for any functions $g_1:X_1 \rightarrow \mathbb{R}, g_2:X_2
    \rightarrow \mathbb{R}$ such that (\ref{eq:repr}) holds with $f_i$ substituted by $g_i$, we have : 
    \begin{itemize}
    \item $f_i (x_i) = \alpha g_i (x_i) + \beta$, for all $x$ such that $f_1(x_1) < \max f_2(x_2)$ and $f_2(x_2) > \min f_1(x_1)$;
    \item $f_i(x_i) = \psi_i(g_i(x_i))$ where $\psi_i$ is an increasing function, otherwise.
    \end{itemize}
  \item If $\nu(\{2\}) \in \{0,1\}$, $\nu(\{1\}) \in (0,1)$, then for any functions $g_1:X_1
    \rightarrow \mathbb{R}, g_2:X_2 \rightarrow \mathbb{R}$ such that (\ref{eq:repr}) holds with $f_i$ substituted by $g_i$, we have :
    \begin{itemize}
    \item $f_i (x_i) = \alpha g_i (x_i) + \beta$, for all $x$ such that $f_2(x_2) < \max f_1(x_1)$ and $f_1(x_1) > \min f_2(x_2)$;
    \item $f_i(x_i) = \psi_i(g_i(x_i))$ where $\psi_i$ is an increasing function, otherwise.
    \end{itemize}
  \item If $\nu(\{1\}) = \nu(\{2\}) = 0$ or $\nu(\{1\}) = \nu(\{2\}) = 1$, then for any functions $g_1:X_1 \rightarrow \mathbb{R}, g_2:X_2
    \rightarrow \mathbb{R}$ such that (\ref{eq:repr}) holds with $f_i$ substituted by $g_i$, we have : $f_i(x_i) = \psi_i(g_i(x_i))$ where
    $\psi_i$ are increasing functions such that $f_1(x_1) = f_2(x_2) \iff g_1(x_1) = g_2(x_2)$.
  \end{enumerate}
\end{theorem}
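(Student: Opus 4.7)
The plan is to reduce the problem to a pair of additive conjoint measurement problems, one on each of the sets $\SE{}$ and $\NW{}$ delivered by axiom \textbf{A3}, and then to glue the two additive representations into a globally defined triple $(f_1,f_2,\nu)$.

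First, I restrict $\cgeq$ to a cone $\SE{z}$ on which \textbf{A3} grants triple cancellation and verify the hypotheses of a Debreu-style additive conjoint measurement theorem on that cone: \textbf{A1} gives weak order, \textbf{A2} together with \textbf{A6} supply the within-cone monotonicity and essentiality that make the marginal orders $\cgeq_i$ well-behaved, \textbf{A4}(a) delivers triple cancellation throughout the cone, \textbf{A8} and \textbf{A9} furnish restricted solvability and the Archimedean property (their statements already restrict to $\NW{z}$ or $\SE{z}$), and \textbf{A7} ensures both coordinates are essential. Although $\SE{z}$ is not literally a product set, its rectangular shape $\{x_1\cgeq_1 z_1\}\times\{x_2\cleq_2 z_2\}$ is enough for the classical construction to yield an additive representation $V^{SE}(x)=u_1^{SE}(x_1)+u_2^{SE}(x_2)$, unique up to a common positive affine transformation of the summands. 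Running the same argument on $\NW{}$ produces $V^{NW}=u_1^{NW}+u_2^{NW}$.

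Next comes the alignment step. Points of $\Theta=\SE{}\cap\NW{}$ carry both representations, and order-density together with restricted solvability furnishes enough such points to pin the two constructions against each other. Axiom \textbf{A4}(b,c), cross-cone triple cancellation under essentiality, forces the unit ratios $u_1^{SE}/u_1^{NW}$ and $u_2^{SE}/u_2^{NW}$ to coincide; axiom \textbf{A5}, which transports intervals from one dimension to the other by chains of equivalences, then fixes the ratio of $u_1$-units to $u_2$-units so that a single rescaling serves both coordinates simultaneously. After this rescaling one obtains $f_i:X_i\to\mathbb{R}$ whose restriction to either cone recovers $u_i^{SE}$ or $u_i^{NW}$, together with coefficients $\alpha^{SE}_1,\alpha^{SE}_2$ on $\SE{}$ and $\alpha^{NW}_1,\alpha^{NW}_2$ on $\NW{}$, normalised so that $\alpha^{SE}_1+\alpha^{SE}_2=\alpha^{NW}_1+\alpha^{NW}_2=1$. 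Setting $\nu(\{1\})=\alpha^{SE}_1$, $\nu(\{2\})=\alpha^{NW}_2$, $\nu(\varnothing)=0$, $\nu(N)=1$ defines the capacity; it remains to observe that $\SE{}$ is exactly the region where $f_1(x_1)\geq f_2(x_2)$ and $\NW{}$ the region where $f_2(x_2)\geq f_1(x_1)$ (equivalently, $\Theta$ maps onto the diagonal $\{f_1=f_2\}$), so that the piecewise-additive formulas coincide with $C(\nu,(f_1(x_1),f_2(x_2)))$ on both cones. Uniqueness is then inherited from the uniqueness of additive representations: generically $f_1,f_2$ are determined up to a common positive affine transformation, while when a coordinate becomes inessential on one cone (the boundary cases $\nu(\{i\})\in\{0,1\}$) extra freedom appears on the ``flat'' part, producing the five alternatives listed in the statement.

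The main obstacle is the alignment step. The cones overlap only on $\Theta$, so a priori nothing links the additive unit of $u_1^{SE}$ to that of $u_1^{NW}$, nor the scale of $u_1$ to $u_2$ across cones; the whole content of \textbf{A4}(b,c) and \textbf{A5} is to secure these two linkages. Translating the somewhat intricate statement of \textbf{A5} into the standard-sequence form used in classical additive measurement proofs, and then treating the degenerate sub-cases in which a cone collapses to a ``line'' and one coordinate becomes inessential on it, is where the bulk of the technical work will sit and is precisely what generates the case split in the uniqueness conclusion.
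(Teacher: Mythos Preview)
Your high-level architecture matches the paper's: build additive representations on the $\SE{}$- and $\NW{}$-cones, align them using \textbf{A4} and \textbf{A5}, read off the capacity from the two sets of weights, and derive uniqueness case-by-case from the essentiality pattern. Two technical steps that the paper carries out explicitly are missing from your sketch, and the role you assign to \textbf{A5} is slightly displaced.

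First, a single cone $\SE{z}$ does not in general exhaust $\SE{}$. The paper covers $\SE{}$ by the family $\{\SE{z}:z\in\Theta\}$, builds an additive representation on each, and then uses the interval-scale uniqueness of additive conjoint measurement on the pairwise intersections $\SE{z}\cap\SE{z'}=\SE{q}$ to glue them into a single $V^{SE}$ on all of $\SE{}$; the same is done for $\NW{}$. You need this intermediate gluing before the cross-region alignment can even be posed. Second, the paper devotes a separate section to extreme points of $\Theta$ (points where a coordinate is maximal or minimal), showing that $\phi_i$ stays bounded there and extending the representation to them; without this the global representation can fail at the boundary.

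On the alignment itself: in the paper \textbf{A4} alone (via standard sequences) gives both $V_1^{SE}=V_1^{NW}$ and $V_2^{SE}=\lambda V_2^{NW}$ on the common domain. Axiom \textbf{A5} is not used to match units across regions but rather, after rescaling so that $\phi_1(r^0_1)=\phi_2(r^0_2)=0$ and $\phi_1(r^1_1)=\phi_2(r^1_2)=1$ for reference points $r^0,r^1\in\Theta$, to prove that \emph{every} $x\in\Theta$ satisfies $\phi_1(x_1)=\phi_2(x_2)$ (unless $\lambda=1$). That lemma is what turns ``$\SE{}$ is the region where $f_1\ge f_2$'' from an observation into a theorem, and it is where the standard-sequence reformulation of \textbf{A5} you allude to is actually deployed.
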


\subsection{Uniqueness properties imply commensurateness}
\label{sec:uniq-impl-comm}

As the uniqueness part of Theorem \ref{theo:repr} states, unless $\cgeq$ can be represented by an additive functional on all of $X$ (Case
1), the representation implies commensurateness of \textit{levels} of utility functions defined on different factors of the product
set. Indeed, we have that if $f_1(x_1) \geq f_2(x_2)$ in one representation, then necessarily $g_1(x_1) \geq g_2(x_2)$ in another one. This
is a much stronger uniqueness result in comparison to the traditional additive models. In Section \ref{sec:discussion-results} we discuss
some economic implications of this.

\subsection{Extension to n dimensions}
\label{sec:extens-n-crit}

This paper provides a characterization of the Choquet integral for two-dimensional sets, which allows to have simpler proofs. We believe
that an extension to $n$ dimensions is mostly a technical task. Axiom \textbf{A3} would be separated into two conditions. One is similar to
the current \textbf{A3}, and holds for any pair of dimensions with the remaining coordinates fixed, and the other is acyclicity of the
absence of additivity on the n-criteria counterparts of regions $\NW{z}$ and $\SE{z}$ in between pairs of coordinates. Just as in the
present paper, stronger uniqueness would be due to the lack of additivity. The remaining differences are technical.

% It seems that the approach can also be used to characterize a version of Gilboa-Schmeidler's ``min of means''(or MEU) model
% \citep{gilboa1989maxmin} in the heterogeneous context. In fact, for two-dimensional sets MEU is a strict subset of CEU corresponding to
% submodular capacities ($\nu(A \cup B) \geq \nu(A) + \nu(B) - \nu(A \cap B)$ for all $A,B \subset N$).

\section{Applications}
\label{sec:discussion-results}

\subsection{Multicriteria decision analysis}
\label{sec:interpretation-maut}

In multicriteria context our result implies that the decision maker constructs a one-to-one mapping between elements of criteria sets (their
subsets to be precise). Some authors interpret this by saying that criteria elements sharing the same utility values present the same level
of ``satisfaction'' for the decision maker \citep{grabisch2008decade}. Technically, such statements are meaningful, in the sense that
permissible scale transformations do not render them ambiguous or incorrect, unless the representation is additive. However, the substance
of statements like ``$x_1$ on criterion 1 is at least as good as $x_2$ on criterion 2'' (which would correspond to $f_1(x_1) \geq f_2(x_2)$)
is not easy to grasp. Perhaps, it is possible to think about workers performing various tasks within a single project. From the perspective
of a project manager, achievements of various workers, serving as criteria in this example, can be level-comparable despite being physically
different, if the project has global milestones (i.e. scale) which are mapped to certain personal milestones for every involved person. The
novelty of our characterization is that this scale is not given a priori. Instead, we only observe preferences of the project manager and
infer all corresponding mappings from them.

% Another observation is related to rank-dependent models in general. It can be said that the decision maker has at max two (in the two
% dimensional case) regions where his trade-off rates, measured in utility units, are constant. Our results show that the set of alternatives
% and the frontier between the two regions can be given in quite sparse and general terms, however, provided certain additional properties
% (mainly \textbf{A4} and \textbf{A5}) are met, it is possible to construct a well-defined rank-dependent model of preferences defined of such
% sets.
% In fact, value functions $f_1$ and $f_2$ can be seen to form a so-called Guttman scale (or a biorder)
% \citep{guttman1944basis,doignon1984realizable}. This is usually formalized as follows. $X_1$ is taken to be a set of individuals, and $X_2$ a

\subsection{Rank-dependent state-dependent utility }
\label{sec:state-depend-util}

State-dependent utility concept is evoked when the nature of the state itself is of significance to the decision-maker. One popular example
is healthcare, where various outcomes can have major effects on the personal value of the insurance premium \citep{karni1985decision}. In
the state-dependent context preferences of the decision maker are given by a binary relation on a set $X = Y^2$. However, unlike in CEU,
there exist two (by the number of states) utility functions $U_1 :Y \rightarrow \mathbb{R}$ and $U_2 :Y \rightarrow \mathbb{R}$, so that $x
\cgeq y \iff C(\nu,(U_1(x_1),U_2(x_2))) \geq C(\nu,(U_1(x_1),U_2(x_2)))$, where $x,y$ are acts, and $x_1,x_2,y_1,y_2$ are respectively
outcomes of $x$ and $y$ in each of the states $(1,2)$. We note that regions $\{z :z \in X, U_1(z_1) \geq U_2(z_2)\}$ and $\{z :z \in X,
U_1(z_1) \leq U_2(z_2)\}$ do not necessarily correspond to comonotonic regions of $X$ anymore. Constant acts also do not have any special
value in such context, since $U_1(x_1)$ and $U_2(x_1)$ are not necessarily equal. However, our characterization allows to construct this
model. Moreover, as a further generalization, in our framework sets of outcomes in every state can be completely disjoint as well.

\subsection{Social choice}
\label{sec:interpr-soci-choice}

If we think of the set $N$ as of a society with two agents, then $X$ is the set of all possible welfare distributions. Moreover, contrary to
the classical scenario, agents could be receiving completely different goods, for example $X_1$ might correspond to healthcare options,
whereas $X_2$ to various educational possibilities. In this case it is not a trivial task to build a correspondence between different
options across agents. Our result basically states that, provided the preferences of the social planner abide by the axioms given in Section
\ref{sec:axioms}, the decisions are made as if the social planner has associated cardinal utilities with the outcomes of each agent which
are \textit{unit} and \textit{level} comparable (cardinal fully comparable or CFC in terms of \citet{roberts1980interpersonal}). Such
approach is not conventional in social choice problems, where the global (social) ordering is usually not considered as given. Instead, the
conditions are normally given on individual utility functions and the ``aggregating'' functional that is used to derive the global
ordering. However, one of the important questions in social choice literature is that of the interpersonal utility comparability and whether
it is justifiable to assume it or not. Our results show that in case the global ordering of alternatives made by the society (or the social
planner) satisfy certain conditions, it is in principle possible to have individual preferences represented by utility functions that are
not only unit but also level comparable between each other.

\section*{Appendix}
\label{cha:appendix}

Subsequent sections are organized as follows. Section \ref{sec:proof-plan} contains a brief sketch of the proof. Section
\ref{sec:essentiality} investigates monotonicity properties, Sections \ref{sec:technical-lemmas},
\ref{sec:build-addit-repr}-~\ref{sec:repr-choq-integr} contain the main body of the proof: construction of capacity and value functions,
Section \ref{sec:uniqueness} analyses uniqueness of the obtained representation, finally, necessity of the axioms is shown in Section
\ref{sec:neccessity-axioms}.

\renewcommand{\thesection}{A.\arabic{section}}
\setcounter{section}{0}

\section{Proof sketch}
\label{sec:proof-plan}

\begin{enumerate}
\item Define extreme points of $\SE{}$ and $\NW{}$ and temporarily remove them from $X$ (Section \ref{sec:build-addit-repr}).
\item Take any point $z$, show that there exists an additive representation for $\cgeq$ on $\NW{z}$ if $z \in \NW{}$ or $\SE{z}$ if
  otherwise (Section \ref{sec:build-addit-repr}).
\item Having built additive representations for $\cgeq$ on $\SE{z_1}$ and $\SE{z_2}$, show that there exists an 
  additive representation on $\SE{z_1} \cup \SE{z_2}$ (Section \ref{sec:build-addit-repr}).
\item Cover all $\SE{}$ with 3C-cones, and show that the joint representation, call it $V^{SE}$, can also be extended to cover all $\SE{}$ (Section \ref{sec:build-addit-repr}). 
\item Perform steps 2 and 3 for \textbf{NW} and obtain $V^{NW}$ (Section \ref{sec:build-addit-repr}).
\item Align and scale $V^{SE}$ and $V^{NW}$ such that $V^{SE}_1 = V^{NW}_1$ on the common domain, and $V^{SE}_2 = \lambda V^{NW}_2$ on their
  common domain (Section \ref{sec:join-togeth-addit}).
\item Pick two points $r^0, r^1$ from $\Theta$ and set $r^0$ as a common zero. Set $V^{SE}_1(r^1_1) = 1$ and define $\phi_1(x_1) =
  V^{SE}(x_1), \phi_2(x_2) = V^{SE}(x_2)/V^{SE}(r^1_2)$ (Section \ref{sec:join-togeth-addit}).
\item Show that for all $x \in X$ we have $\phi_1(x_1) = \phi_2(x_2)$ iff  $x \in \Theta$ unless $\cgeq$ can be represented by an additive
  functional on all $X$ (Section \ref{sec:0-set}).
\item Representations now are $\phi_1 + k\phi_2$ on $\SE{}$ and $\phi_1 + \lambda k \phi_2$ on $\NW{}$ (Section \ref{sec:0-set}).
\item Rescale so that ``weights'' sum up to one : $\frac{1}{1+k}\phi_1 + \frac{k}{1+k}\phi_2$, $\frac{1}{1+\lambda k}\phi_1 + \frac{\lambda k}
  {1+\lambda k}\phi_2$ (Section \ref{sec:0-set}).
\item Extend the representation to the extreme points (Section \ref{sec:extreme-points}).
\item Show that $\cgeq$ can be represented on $X$ by these two representations (Section \ref{sec:vnw-vse-represent}).
\item Show that $\cgeq$ can be represented by the Choquet integral (Section \ref{sec:repr-choq-integr}). 
\end{enumerate}

\section{Technical lemmas}
\label{sec:technical-lemmas}

\begin{lemma}
  \label{lm:27}
  If $\cgeq$ satisfies triple cancellation then it is independent.
\end{lemma}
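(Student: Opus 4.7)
The plan is to derive independence directly from triple cancellation by choosing a degenerate substitution that turns two of the three antecedents of triple cancellation into reflexive identities, leaving the third as exactly our independence hypothesis.

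For the first-coordinate clause, I would fix $a,b \in X_1$ with $ap \cgeq bp$ for some $p \in X_2$, and let $q \in X_2$ be arbitrary. I would then instantiate triple cancellation with (writing its variables primed and my fixed letters unprimed) $a' := b$, $b' := b$, $c' := a$, $d' := b$, $p' := p$, $q' := p$, $r' := q$, $s' := q$. The three antecedents of triple cancellation then read $bp \cleq bp$, $bq \cgeq bq$, and $ap \cgeq bp$, the first two being reflexive and the third being precisely the hypothesis. The conclusion $c'r' \cgeq d's'$ reads $aq \cgeq bq$, which is what independence requires.

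For the second-coordinate clause I would first observe that triple cancellation is symmetric under swapping the two factors of the product set: if I relabel, so the roles of $a,b,c,d$ and $p,q,r,s$ are exchanged, the eight pairs $ap,bq,ar,bs,cp,dq,cr,ds$ are permuted among themselves and the three premises together with the conclusion get mapped bijectively to another valid instance of the same condition. Hence the identical substitution trick applied to the $X_2$-side yields: if $ap \cgeq aq$ for some $a \in X_1$, then $bp \cgeq bq$ for every $b \in X_1$.

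There is no substantive obstacle here; the only thing to get right is the bookkeeping of which variable is being identified with which. The ``hard'' part, if any, is noticing that triple cancellation is invariant under exchanging the two coordinates of $X$, so that a single substitution argument handles both halves of the definition of independence.
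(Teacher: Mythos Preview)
Your proof is correct and takes essentially the same approach as the paper: the paper's one-line proof reads ``$ap \cleq ap,\ aq \cgeq aq,\ ap \cgeq bp \Rightarrow aq \cgeq bq$'', i.e.\ the same degenerate substitution making two premises reflexive (using $a$ where you use $b$, an immaterial difference). The paper does not spell out the second-coordinate clause at all, leaving it implicit; your explicit symmetry remark is a fine way to handle it, and one can equally see it directly via the substitution $a'=b'=a$, $c'=d'=b$, $p'=q'$, $r'=p$, $s'=q$.
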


\begin{proof}
  $ap \cleq ap, aq \cgeq aq, ap \cgeq bp \Rightarrow aq \cgeq bq.$
\end{proof}

\begin{lemma}
  \label{lm:5}
  $X = \SE{} \cup \NW{}$.
\end{lemma}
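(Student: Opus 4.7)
I would classify each $x \in X$ by a direct application of Axiom~\textbf{A3} at $z = x$. By \textbf{A3}, triple cancellation holds on $\SE{x}$ or on $\NW{x}$. Since $x_1 \cgeq_1 x_1$ and $x_2 \cgeq_2 x_2$ trivially, $x$ lies in both $\SE{x}$ and $\NW{x}$, so the first clause of whichever definition ($\SE{}$ or $\NW{}$) corresponds to the active cone is witnessed by $z = x$, \emph{provided} the admissibility condition on the witness is met --- $x_1$ not maximal and $x_2$ not minimal for $\SE{}$, respectively $x_1$ not minimal and $x_2$ not maximal for $\NW{}$. This immediately settles every point whose coordinates are interior in the orders $\cgeq_1, \cgeq_2$, and forms the bulk of the argument.

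\textbf{Boundary points.} For $x$ with an extreme coordinate the witness $z = x$ is inadmissible in the first clause, and I would appeal to the second clauses of the definitions. Assume triple cancellation holds on $\SE{x}$ and that $x_1$ is maximal or $x_2$ is minimal; then the first prerequisite of the second clause of $\SE{}$ is met. Either no $y \in \SE{x} \setminus \{x\}$ supports triple cancellation on its opposite cone --- in which case the second clause fires and $x \in \SE{}$ --- or such a $y$ exists, and then $y$ satisfies the first clause of $\NW{}$. A short geometric observation --- that $y \in \SE{x}$ is equivalent to $x \in \NW{y}$ by swapping the orientations of the two inequalities defining the cones --- gives $x \in \NW{y} \subseteq \NW{}$. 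The symmetric argument handles the case where triple cancellation holds on $\NW{x}$. The extension ("closedness") assumption at the end of Section~\ref{sec:axioms} is invoked to cover the remaining corner cases where both coordinates are simultaneously extreme and neither cone $\SE{x}, \NW{x}$ is nontrivial.

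\textbf{Main obstacle.} The interior case is essentially automatic from \textbf{A3}; the genuine work is the boundary bookkeeping, where the two clauses of the definitions of $\SE{}$ and $\NW{}$ must be correctly paired against each location of the extreme coordinate. The key geometric identity to articulate cleanly is $y \in \SE{x} \iff x \in \NW{y}$, which is what transports an $\NW{}$-classification of a nearby $y$ back to $x$, and one must check that the closedness extension indeed supplies a suitable witness at every boundary configuration so that no corner of $X$ escapes $\SE{} \cup \NW{}$.
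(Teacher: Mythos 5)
Your proposal is correct and follows essentially the same route as the paper: interior points are handled by applying \textbf{A3} at $z=x$ and using $x$ as its own witness in the first clause, while points with an extreme coordinate are handled by the dichotomy in the second clause — either it fires, or its failure produces a $y$ with triple cancellation on the opposite cone, and the duality $y \in \SE{x} \iff x \in \NW{y}$ places $x$ in the other region. The paper phrases this as a proof by contradiction and does not need the closedness extension you mention, but the substance is identical.
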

\begin{proof}
  Assume $x \not \in \SE{}, x \not \in \NW{}$. First assume that $x$ is such that its coordinates are not maximal or minimal. Then, there
  does not exist $z$ such that $x \in \SE{z}$ and triple cancellation holds on $\SE{z}$. At the same time, there does not exist $z$ such
  that $x \in \NW{z}$ and triple cancellation holds on $\NW{z}$. This implies that triple cancellation does not hold on $\SE{x}$ or $\NW{x}$
  (otherwise we could have taken $z = x$). This violates \textbf{A3}.

Now assume $x_1$ is maximal. $x \not \in \SE{}$ implies that there exists $y \in \SE{x}$ such that triple cancellation holds on
$\NW{y}$. But then $x \in \NW{}$, a contradiction. Other cases are symmetrical.  
\end{proof}

\begin{lemma}
  \label{lm:A5}
  Axiom \textbf{A5} implies the following condition. Let $\{g^{(i)}_1 : g^{(i)}_1y_0 \sim g^{(i+1)}_1y_1, g^{(i)}_1 \in X_1, i \in N \}$ and
  $\{h^{(i)}_2 : x_0h^{(i)}_2 \sim x_1h^{(i+1)}_2, h^{(i)}_2 \in X_x, i \in N \}$ be two standard sequences, each entirely contained in
  $\NW{}$ or $\SE{}$. Assume also, that there exist $z_1, z_2 \in X$, $p,q \in X_2, a,b \in X_1$ such that $g^{(i)}_1p, g^{(i)}_1q \in \NW{}$ or
  $\SE{}$, and $ah^{(i)}_2, bh^{(i)}_2 \in \NW{}$ or $\SE{}$ for all $i$, and $g^{(i)}_1p \sim bh^{(i)}_2$ and $g^{(i+1)}_1p \sim
  bh^{(i+1)}_2$, then for all $i \in N$, $g^{(i)}_1p \sim bh^{(i)}_2$.
\end{lemma}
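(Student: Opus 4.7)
The plan is to prove the lemma by induction on the index $i$, with each inductive step powered by a direct application of axiom \textbf{A5}. Heuristically, \textbf{A5} encodes the fact that two standard sequences placed alongside one another in orthogonal dimensions must have ``step sizes'' that match under the projection; the hypothesis that $g^{(i)}_1 p \sim b h^{(i)}_2$ at two consecutive indices fixes this matching, after which alignment at every other index is forced. In the eventual additive representation, this is nothing more than the statement that two arithmetic progressions agreeing at two consecutive terms must have the same common difference.

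First I would handle the forward inductive step. Assuming $g^{(k)}_1 p \sim b h^{(k)}_2$ and $g^{(k+1)}_1 p \sim b h^{(k+1)}_2$ for some $k$, I would derive both $g^{(k+2)}_1 p \cgeq b h^{(k+2)}_2$ and the reverse inequality by two symmetric applications of \textbf{A5}. In each application the standard-sequence relations $g^{(k+1)}_1 y_0 \sim g^{(k+2)}_1 y_1$ and $x_0 h^{(k+1)}_2 \sim x_1 h^{(k+2)}_2$ supply the equivalence-projection clauses of \textbf{A5} (the ``$ay_0 \sim x_0\pi(a)$'', ``$cy_1 \sim x_1\pi(c)$'', etc.), while the known alignment at index $k$ or $k+1$ provides the input inequality which \textbf{A5} then transports to the desired conclusion. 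Backward induction (propagating to indices below the given pair) is entirely symmetric, and combining the two directions covers every $i \in N$.

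Before invoking \textbf{A5} I would carefully verify its cone-containment preconditions. The lemma's hypotheses, namely that each standard sequence lies entirely in $\NW{}$ or $\SE{}$ and that the auxiliary points $g^{(i)}_1 p, g^{(i)}_1 q, a h^{(i)}_2, b h^{(i)}_2$ lie in $\NW{}$ or $\SE{}$ for every $i$, are tailored precisely to supply the containment required of each quadruple appearing in \textbf{A5}. The witness elements $z_1, z_2, p, q, a, b$ furnish the concrete points needed to populate these quadruples, and each quadruple falls into one of the three admissible configurations (a)--(c) that \textbf{A5} allows.

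The main obstacle is essentially clerical. Axiom \textbf{A5} is stated with many variables and several alternative cone configurations, so the delicate part is choosing the right substitution on each side of each inequality and then confirming that one of \textbf{A5}'s admissible configurations is indeed realized by that choice. Once the substitutions are nailed down, no new conceptual ingredient beyond \textbf{A5} itself is required, and the inductive machinery yields the claimed equivalence at every index of the two standard sequences.
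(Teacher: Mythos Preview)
Your inductive plan is right, and \textbf{A5} is indeed the engine of each step, but the claim that ``no new conceptual ingredient beyond \textbf{A5} itself is required'' hides a genuine gap. To invoke \textbf{A5} you must supply four projection equivalences of the form $ay_0\sim x_0\pi(a)$, one for each of $a,b,c,d$. The alignment hypotheses $g^{(k)}_1p\sim bh^{(k)}_2$ and $g^{(k+1)}_1p\sim bh^{(k+1)}_2$ give you projections for the $k$-th and $(k{+}1)$-th elements, but there is \emph{no} available equivalence linking $g^{(k+2)}_1$ to any element of $X_2$ (or $h^{(k+2)}_2$ to any element of $X_1$) except the standard-sequence relations themselves. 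If you try to use $g^{(k+1)}_1y_0\sim g^{(k+2)}_1y_1$ or $x_0h^{(k+1)}_2\sim x_1h^{(k+2)}_2$ as projection clauses, the resulting $\pi(d)$ is $y_1$ or $x_0$, not $h^{(k+2)}_2$ or $g^{(k+2)}_1$, and the output of \textbf{A5} says nothing about the point you care about. So a direct two-sided application of \textbf{A5} does not go through.

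The paper closes this gap by arguing by contradiction: assume $x_0h^{(k+2)}_2\cgt g^{(k+2)}_1y_0$, use \emph{restricted solvability} to manufacture an auxiliary $g\in X_2$ with $g^{(k+2)}_1y_0\sim x_0g$ (this is precisely the missing projection equivalence for the $(k{+}2)$-th element), then apply \textbf{A5} to obtain $x_0h^{(k+1)}_2\sim x_1g$, and finally combine this with the defining relation $x_0h^{(k+1)}_2\sim x_1h^{(k+2)}_2$ and \emph{independence} to force $x_0h^{(k+2)}_2\sim x_0g\sim g^{(k+2)}_1y_0$, contradicting the assumed strict inequality. Restricted solvability and independence are therefore essential ingredients, not clerical overhead; your proposal should incorporate them explicitly in the inductive step.
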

\begin{proof}
  The proof is very similar to the one from \cite{krantz1971foundation} (Lemma 5 in section 8.3.1).  Assume wlog that $\{g^{(i)}_1:
  g^{(i)}_1y_0 \sim g^{(i+1)}_1y_1\}$ is an increasing standard sequence on $X_1$, which is entirely in $\SE{}$, whereas $\{h^{(i)}_1:
  x_0h^{(i)}_1 \sim x_1h^{(i+1)}_1\}$ is an increasing standard sequence on $X_2$, and lies entirely in $\NW{}$. Assume also for some $k$ it
  holds $g_1^{(k)}y_0 \sim x_0h_2^{(k)}, g_1^{(k+1)}y_0 \sim x_0h_2^{(k+1)}$. We need to show that $g_1^{(i)}y_0 \sim x_0h_2^{(i)}$ for all
  $i$. We will show that $g_1^{(k+2)}y_0 \sim x_0h_2^{(k+2)}$ from which everything holds by induction.

  Assume $x_0h_2^{(k+2)} \cgt g_1^{(k+2)}y_0$. Since the sequences are increasing, by restricted solvability exists $g \in X_2$ such that
  $g_1^{(k+2)}y_0 \sim x_0g$. By \textbf{A5'}, $g^{(k)}_1y_0 \sim g^{(k+1)}_1y_1, g^{(k+1)}_1y_0 \sim g^{(k+2)}_1y_1, x_0h^{(k)}_2 \sim
  x_1h^{(k+1)}_2$ imply $x_0h^{(k+1)}_2 \sim x_1g$. By definition of $\{h^{(i)}_2\}$, $x_0h^{(k+1)}_2 \sim x_1h^{(k+2)}_2$. Thus,
  $x_1h^{(k+2)}_2 \sim x_1g$ and by independence $x_0h^{(k+2)}_2 \sim x_0g$, hence also $g^{(k+2)}_1y_0 \sim x_0h^{(k+2)}$, a
  contradiction. The case $x_0h_2^{(k+2)} \clt g_1^{(k+2)}y_0$ is symmetrical. Showing that $g_1^{(k-1)}y_0 \sim x_0h_2^{(k-1)}$ can be done
  in a similar fashion.
\end{proof}

\begin{lemma}
  \label{lm:34}
  The following statements hold:
  \begin{itemize}
  \item If $\NW{}$($\SE{}$) has only $X_2$ essential, then for all $x \in \NW{}$($\SE{}$) there exists $y_2 \in X_2$ such that $x_1y_2 \in \Theta$.
  \item If $\NW{}$($\SE{}$) has only $X_1$ essential, then for all $x \in \NW{}$($\SE{}$) there exists $y_1 \in X_1$ such that $y_1x_2 \in \Theta$.
  \end{itemize}
\end{lemma}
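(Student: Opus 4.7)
The strategy is to apply the second form of the ``closedness'' extension axiom, which states that whenever $x_1 p \not\in \NW{}$ and $x_1 q \not\in \SE{}$ for some $p, q \in X_2$, there exists $r \in X_2$ with $x_1 r \in \Theta$. If $x \in \Theta$ the conclusion holds with $y_2 = x_2$, so we may assume $x \in \NW{} \setminus \Theta$, which already supplies $x_1 x_2 \not\in \SE{}$. The task then reduces to producing some $p \in X_2$ with $x_1 p \not\in \NW{}$. I argue by contradiction: suppose $x_1 p \in \NW{}$ for every $p \in X_2$.

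By essentiality of coordinate $1$ on $X$ (\textbf{A7}) together with the structural assumption, there exists $b \in X_1$ with $b \not\sim_1 x_1$; without loss of generality $b \cgt_1 x_1$, so some $q \in X_2$ satisfies $bq \cgt x_1 q$. Since coordinate $1$ is not essential on $\NW{}$, having both $bq$ and $x_1 q$ in $\NW{}$ would force $bq \sim x_1 q$, a contradiction; hence $bq \in \SE{}\setminus \NW{}$, and in particular $bq \not\in \NW{}$. Now invoke the first form of the closedness extension at the shared second coordinate $q$: under the contrapositive hypothesis that no $y_2$ with $x_1 y_2 \in \Theta$ exists, we have $x_1 q \in \NW{}\setminus \Theta$ and so $x_1 q \not\in \SE{}$. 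Combined with $bq \not\in \NW{}$, this yields some $c \in X_1$ with $cq \in \Theta$. Non-essentiality of coordinate $1$ on $\NW{}$ gives $cq \sim x_1 q$, but $cq \in \SE{}$ while $x_1 q \not\in \SE{}$ rules out $c = x_1$ in the reduced set, so the completeness of $\cgeq_1$ leaves $c \cgt_1 x_1$ or $c \clt_1 x_1$.

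The hard part is closing this loop into an actual contradiction. The plan is to exploit weak separability: from $c \not\sim_1 x_1$ one extracts some $p_0 \in X_2$ where $cp_0$ and $x_1 p_0$ are strictly ordered; since $x_1 p_0 \in \NW{}$ by the standing hypothesis, non-essentiality of coordinate $1$ on $\NW{}$ forbids $cp_0 \in \NW{}$, forcing $cp_0 \in \SE{}\setminus \NW{}$. Iterating the two forms of the closedness extension on $c$, together with the fact that the structural assumption allows at most one element of (reduced) $X_1$ whose entire slice lies in $\NW{}$, must be played against the existence of the $\Theta$-point $cq$ sharing its $\NW{}$-value with $x_1 q$; a careful case analysis on the two orderings of $c$ and $x_1$ then yields the required inconsistency. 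This eliminates the supposed Case A, so $p$ with $x_1 p \not\in \NW{}$ exists and the closedness extension delivers the desired $y_2$. The three symmetric statements (interchanging $\NW{}$ with $\SE{}$, or coordinate $1$ with coordinate $2$) follow by a parallel argument.
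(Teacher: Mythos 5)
The paper disposes of this lemma in one line (``immediately follows from the structural assumption and closedness''), and your proposal correctly identifies the same two ingredients and the right reduction: if $x\in\Theta$ take $y_2=x_2$; otherwise $x_1x_2\notin\SE{}$, so by the second form of the closedness extension it suffices to produce one $p$ with $x_1p\notin\NW{}$. Your use of the structural assumption plus non-essentiality of $X_1$ on $\NW{}$ to get $bq\in\SE{}\setminus\NW{}$, and then the first closedness form to get $cq\in\Theta$ with $cq\sim x_1q$, is all sound.

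The problem is that the argument stops exactly where you say ``the hard part is closing this loop into an actual contradiction'': the contradiction is announced as a plan (``a careful case analysis on the two orderings of $c$ and $x_1$ then yields the required inconsistency'') but never derived, so the proof is not complete. Concretely, the two cases behave very differently. If $x_1\cgeq_1 c$, you do get a contradiction, because then $x_1q\in\SE{cq}$ and $cq\in\SE{}$ forces $x_1q\in\SE{}$ (the cone $\SE{cq}$ sits inside the 3C-cone witnessing $cq\in\SE{}$), contradicting $x_1q\notin\SE{}$ --- but you do not carry this out. If instead $c\cgt_1 x_1$, the configuration you reach ($cq\in\Theta$, $cq\sim x_1q$, and a new point $cp'\in\SE{}\setminus\NW{}$ where $cp'\cgt x_1p'$) is exactly of the same shape as the one you started from: applying closedness again only manufactures another $\Theta$-point weakly above $x_1$ in the first coordinate, and nothing in your sketch shows this iteration terminates or ever produces an inconsistency. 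Until that case is actually closed (or the problem is restructured to avoid it), the lemma is not proved.
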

\begin{proof}
  Immediately follows from the structural assumption and closedness of $\NW{}$($\SE{}$).
\end{proof}

\section{Essentiality and monotonicity}
\label{sec:essentiality}

In what follows the essentiality of coordinates within various $\SE{z}(\NW{z})$ is critical. The central mechanism to guarantee consistency
in number of essential coordinates within various 3C-cones is bi-independence which is closely related to comonotonic strong monotonicity of
\cite{wakker1989additive}. 

% \citet{wakker1991additive-RO}, starting with an order on $Y$, imposes a different version of strong monotonicity directly, stating that
% $\alpha \cgeq' \beta \iff x_{-i} \alpha \cgeq x_{-i}\beta$. However, in fact this condition is not necessary for additive representations on
% comonotonic cones as can be seen in the following example.

% Assume there exists a comonotonic cone $C^n \subset Y^n$, and there exists a maximal $y^{\max} \in Y$. Let all conditions from
% \citep{wakker1991additive-RO} hold. Then we can build an additive representation on $C^n$. Now extend it in the following way: let $Y' = Y
% \cup Y^+$ such that $x \cgt' y^{\max}$ for all $x \in Y^+$. Accordingly $C^n$ is extended to $C'^n \subset Y'^n$. Let also $xp \sim
% y^{\max}p$ for all $xp, y^{\max}p \in C'^n$. Apparently, we can extend the additive representation onto $C'^n$ by letting $U(x) =
% U(y^{\max})$ for all $x \in Y^{+}$. Hence, strong monotonicity is not, in fact, a necessary condition for the existence of an additive
% representation on a comonotonic cone.

In the Choquet integral representation problem for a heterogeneous product set $X = X_1 \times X_2$, strong monotonicity is actually a
necessary condition because of the following. Assume $ap, bp, cp, dp \in \SE{}$ and $ap \cgt bp, cp \sim dp$. Assume also there exist $cq,dq
\in \NW{}$ such that $cq \cgt dq$. Then, provided the representation exists, we get

\begin{equation*}
  \begin{aligned}
    \alpha_1f_1(a) + \alpha_2f_2(p) & > \alpha_1f_1(b) + \alpha_2f_2(p) \\
    \alpha_1f_1(c) + \alpha_2f_2(p) & = \alpha_1f_1(d) + \alpha_2f_2(p) \\
    \beta_1f_1(c) + \beta_2f_2(q) & > \beta_1f_1(d) + \beta_2f_2(q).
  \end{aligned}
\end{equation*}
The first inequality entails $\alpha_1 \neq 0$. From this and the following equality follows $f_1(c) = f_1(d)$, which contradicts with the
last inequality. Thus $cq \cgt dq$ implies $cp \cgt dp$ but only in the presence of $ap \cgt bp$ in the same ``region'' ($\SE{}$ or $
\NW{}$). This is also the reason behind the name we gave to this condition - ``bi-independence''.

\begin{lemma} Pointwise monotonicity.\par
  \label{lm:4}
  If for all $i,j \in N$ we have $a_ix_j \cgeq a_iy_j$ for all $a_i \in X_i$, then $x \cgeq y$.
\end{lemma}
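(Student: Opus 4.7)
The hypothesis, once the notation is unpacked, is equivalent to $x_1 \cgeq_1 y_1$ and $x_2 \cgeq_2 y_2$: taking $(i,j)=(1,2)$ reads ``$a_1 x_2 \cgeq a_1 y_2$ for every $a_1 \in X_1$'', which is exactly the definition of $x_2 \cgeq_2 y_2$, and the choice $(i,j)=(2,1)$ yields $x_1 \cgeq_1 y_1$ symmetrically. So the real content of the lemma is: if $x_1 \cgeq_1 y_1$ and $x_2 \cgeq_2 y_2$, then $x \cgeq y$.

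My plan is to connect $x$ to $y$ through the single intermediate point $z = (x_1, y_2)$, which differs from $x$ only in the second coordinate and from $y$ only in the first. Specialising the defining clause of $x_2 \cgeq_2 y_2$ to first coordinate $x_1$ gives $x \cgeq z$, and specialising the defining clause of $x_1 \cgeq_1 y_1$ to second coordinate $y_2$ gives $z \cgeq y$. One application of transitivity of the weak order $\cgeq$ (Axiom \textbf{A1}) then delivers $x \cgeq y$.

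There is no real obstacle: the argument is a direct reading of the definitions of $\cgeq_1$ and $\cgeq_2$ combined with transitivity, with no appeal to the 3C-cone machinery, trade-off consistency, or bi-independence. The lemma is painless in dimension two precisely because a single intermediate point suffices to interpolate between $x$ and $y$ by changing one coordinate at a time; an $n$-dimensional analogue would chain $n-1$ such one-coordinate substitutions. Observe also that weak separability (\textbf{A2}) need not be invoked, since the hypothesis already supplies the required weak inequalities at every ``level'' of the fixed coordinate, whereas \textbf{A2} would only upgrade a single strict inequality to weak inequalities throughout.
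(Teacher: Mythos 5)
Your proof is correct and is essentially identical to the paper's: the paper's entire argument is the chain $x = x_1x_2 \cgeq x_1y_2 \cgeq y_1y_2 = y$, which is exactly your interpolation through the intermediate point $(x_1,y_2)$ followed by transitivity. Your additional remarks on unpacking the hypothesis via $\cgeq_1$ and $\cgeq_2$ are accurate but add nothing beyond the paper's one-line proof.
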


\begin{proof}
  $x = x_1x_2 \cgeq x_1y_2 \cgeq y_1y_2 = y$.
\end{proof}

Bi-independence, together with the structural assumption also implies some sort of ``strong monotonicity''. 

\begin{lemma}
  \label{lm:29}
  If $X_1$ is essential on $\SE{}(\NW{})$, $a \cgeq_1 b$ iff $ap \cgt bp$ for all $ap,bp \in \NW{}$.
\end{lemma}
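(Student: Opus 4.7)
The backward direction is immediate from weak separability (\textbf{A2}): since $X_1$ is essential on $\NW{}$, there is at least one pair with $ap\cgt bp$ and $ap,bp\in\NW{}$; by \textbf{A2} this strict preference propagates to $aq\cgeq bq$ for every $q\in X_2$, i.e.\ $a\cgeq_1 b$. Throughout I focus on the forward direction, which is where the essentiality hypothesis and bi-independence (\textbf{A6}) enter.

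For the forward direction, suppose $a\cgeq_1 b$ with $a\neq b$. The structural assumption rules out $ap\sim bp$ for all $p$, so in fact $a\cgt_1 b$, and there exists some $q^{*}\in X_2$ with $aq^{*}\cgt bq^{*}$. Fix $ap,bp\in\NW{}$; by definition of $\cgeq_1$ we already have $ap\cgeq bp$, and the content of the lemma is to rule out $ap\sim bp$. I would argue by contradiction: suppose $ap\sim bp$ and let $c,d\in X_1$, $p_0\in X_2$ be witnesses of essentiality on $\NW{}$, so that $cp_0,dp_0\in\NW{}$ and $cp_0\cgt dp_0$ (hence by \textbf{A2} also $c\cgt_1 d$ and $cp\cgeq dp$).

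The heart of the plan is to produce a \emph{seed} of strict preference at the specific second coordinate $p$ inside $\NW{}$: a pair $ep,fp\in\NW{}$ with $ep\cgt fp$. Once this is in hand, \textbf{A6} applied at $p$ with seed $ep\cgt fp$ and side condition $aq^{*}\cgt bq^{*}$ forces $ap\cgt bp$, contradicting $ap\sim bp$ and closing the argument. The easy case is when $cp,dp\in\NW{}$ and $cp\cgt dp$: here $(c,d)$ serves directly as the seed.

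The main obstacle is the residual case in which $cp\sim dp$, or one of $cp,dp$ slips outside $\NW{}$: bi-independence is inherently tied to a single second coordinate, so transferring the strict preference from $p_0$ to the given $p$ requires more than \textbf{A6} alone. My plan in this residual case is to use restricted solvability (\textbf{A8}) together with the extended closedness of $\NW{}$ to construct substitute points at $p$ that stay inside $\NW{}$, and then appeal to \textbf{A4} (triple cancellation across cones) and the standard-sequence reformulation of \textbf{A5} given in Lemma~\ref{lm:A5} to bridge the seed $cp_0\cgt dp_0$ at $p_0$ with a seed at $p$. I expect this bridging, combined with a careful case analysis of where the auxiliary points fall in the $\NW{}/\SE{}$ partition, to be the technical crux of the proof.
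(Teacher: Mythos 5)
Your skeleton matches the paper's: backward direction from \textbf{A2}, forward direction from the structural assumption (which upgrades $a \cgeq_1 b$, $a \neq b$ to a strict witness $aq^{*} \cgt bq^{*}$), and then bi-independence. Your diagnosis that \textbf{A6} as stated needs a strict pair at the \emph{specific} column $p$ is also accurate. The problem is that the case you yourself label the technical crux is exactly the part you do not prove: the plan to bridge from $p_0$ to $p$ via restricted solvability, closedness, \textbf{A4} and the standard-sequence form of \textbf{A5} is left entirely as a sketch, so as written the proposal establishes only the easy case in which a seed at column $p$ happens to be handed to you.

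Beyond being unexecuted, the bridging machinery you reach for (Lemma~\ref{lm:A5}, standard sequences) is heavier than needed and is not the paper's route. The paper cases on where the witness pair $aq,bq$ falls. If $aq,bq$ lie in the same region as the target pair $ap,bp$, no seed is required at all: all four points sit in one region, so \textbf{A4} (case (a)) applies to $ap \cleq bp$, $aq \cgeq bq$, $bp \cgeq ap$ and yields $bq \cgeq aq$, contradicting $aq \cgt bq$ if one assumes $ap \sim bp$. This ``strict independence'' step transports strict preference for the \emph{fixed} pair $(a,b)$ between columns of one region with no interval-matching argument, which is precisely the observation your plan is missing. Bi-independence is needed only when the witness lies in the opposite region, and there the seed at column $p$ is obtained the same way: take the essentiality witness $e p_0 \cgt f p_0$ in $\NW{}$ and move it, as a fixed pair, to column $p$ by the same \textbf{A4} argument before applying \textbf{A6}. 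Finally, when $aq$ and $bq$ straddle the two regions, closedness inserts $cq \in \Theta$ so that $aq \cgt cq$ or $cq \cgt bq$ gives a strict comparison inside a single region, reducing to the previous cases. If you rewrite your residual case along these lines --- within-region triple cancellation to move a fixed pair's strict preference between columns, closedness only to split a straddling witness --- the argument closes without invoking Lemma~\ref{lm:A5} at all.
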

\begin{proof}
  Let $X_1$ be essential on $\SE{}$. By the structural assumption, $a \cgeq_1 b$ implies existence of some $q \in X_2$
  such that $aq \cgt bq$. If $aq,bq \in \SE{}$ the result follows by independence. If $aq, bq \in \NW{}$ the result follows by
  bi-independence. If $bq \in \NW{}, aq \in \SE{}$, then by closedness assumption there exists $cq \in \Theta$, and either $bq \cgt cq$, or $cq
  \cgt aq$. The result follows by independence or bi-independence.
\end{proof}

Conceptually, Lemma \ref{lm:29} implies that if a coordinate is essential on some 3C-cone $\NW{z}(\SE{z})$, then it is also essential on $\NW{x}(\SE{x})$
for all $x \in \NW{}(\SE{})$. This allows us to make statements like ``coordinate $i$ is essential on $\NW{}$''.

\section{Building additive value functions on $\NW{}$ and $\SE{}$}
\label{sec:build-addit-repr}

In this section we assume that $\SE{}(\NW{})$ has two essential coordinates.

\subsection{Covering $\SE{}$ and $\NW{}$ with maximal $\SE{z}$ and $\NW{z}$}
\label{sec:covering-se-nw}

In the sequel we could have covered areas $\SE{}$ and $\NW{}$ by sets $\SE{z}(\NW{z})$ for all $z \in \SE{}(\NW{})$, but it is convenient to
introduce the following lemma.

\begin{lemma}
  \label{lm:21}
  For every $x \in \SE{}$ there exists $z \in \Theta$ such that $\SE{x} \subset \SE{z}$. Accordingly, for every $y \in \NW{}$ there exists
  $z \in \Theta$ such that $\NW{y} \subset \NW{z}$.
\end{lemma}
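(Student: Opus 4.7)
Plan. It suffices to exhibit a $z\in \Theta$ satisfying $z_1 \cleq_1 x_1$ and $z_2 \cgeq_2 x_2$: the containment $\SE{x}\subseteq \SE{z}$ then follows immediately by transitivity of $\cgeq_1$ and $\cgeq_2$, since every $y\in \SE{x}$ satisfies $y_1\cgeq_1 x_1 \cgeq_1 z_1$ and $z_2\cgeq_2 x_2 \cgeq_2 y_2$. The $\NW{}$ half of the lemma follows by an identical argument with the roles of the two coordinates and of $\SE{}/\NW{}$ interchanged, so I focus on the $\SE{}$ case.

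The two tools I plan to use are the closedness extension from Section \ref{sec:axioms} together with a pair of monotonicity properties of the regions $\SE{}$ and $\NW{}$ that are immediate from the definitions of the rectangular cones $\SE{y}, \NW{y}$: the region $\SE{}$ is upward-closed in the $X_1$-direction and downward-closed in the $X_2$-direction, while $\NW{}$ is downward-closed in $X_1$ and upward-closed in $X_2$. In particular, if $x_1 x_2\notin \NW{}$ and $r\cleq_2 x_2$, then $x_1 r\notin \NW{}$ as well, with the symmetric statement for rows and for $\SE{}$.

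If $x\in \Theta$, take $z=x$. Otherwise $x\in \SE{}\setminus \NW{}$, so $x_1 x_2\notin \NW{}$. My first attempt is to find a $\Theta$-point in the column through $x_1$: if there exists $q\in X_2$ with $x_1 q\notin \SE{}$, the vertical version of the closedness extension applied to the pair $(x_1 x_2, x_1 q)$ produces $r\in X_2$ with $x_1 r\in \Theta$. Monotonicity forces $r\cgeq_2 x_2$, since otherwise $x_1 r\notin \NW{}$ contradicting $x_1 r\in \Theta\subseteq \NW{}$; setting $z:=x_1 r$ then works. If no such $q$ exists, the entire column through $x_1$ lies in $\SE{}$, and I pass to the row through $x_2$: locate $a\in X_1$ with $ax_2\notin \SE{}$, apply horizontal closedness to $(x_1 x_2, ax_2)$ to obtain $c\in X_1$ with $cx_2\in \Theta$, and use the symmetric monotonicity to conclude $c\cleq_1 x_1$; then $z:=cx_2$ works.

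The principal obstacle is justifying that one of the two attempts always succeeds, i.e.\ that we cannot simultaneously have the entire column through $x_1$ and the entire row through $x_2$ sitting inside $\SE{}\setminus \NW{}$. I would rule this case out by combining the standing essentiality assumption of Section \ref{sec:build-addit-repr} with Lemma \ref{lm:29}: both coordinates being essential on $\NW{}$ forces $\NW{}$ to be non-empty and to exhibit variation in each direction, and then downward-closedness of $\NW{}$ in $X_1$ together with upward-closedness in $X_2$ forces the existence of either $bx_2\in \NW{}$ with $b\cleq_1 x_1$, or $x_1 q\in \NW{}$ with $q\cgeq_2 x_2$; in either subcase closedness (or the already-located $\Theta$-point) supplies the required $z$.
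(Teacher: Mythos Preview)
Your overall strategy—monotonicity of the regions $\SE{}$ and $\NW{}$ combined with the closedness extension—is the paper's strategy, and your first two attempts (find a $\Theta$-point in the column through $x_1$, then in the row through $x_2$) together with the accompanying monotonicity checks are correct.

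The gap is in your resolution of the ``principal obstacle''. When the entire column $\{x_1\}\times X_2$ and the entire row $X_1\times\{x_2\}$ lie in $\SE{}\setminus\NW{}$, monotonicity of $\NW{}$ does \emph{not} produce a point $bx_2\in\NW{}$ or $x_1q\in\NW{}$ as you claim; indeed, any such point would lie in the row or column and hence in $\SE{}\setminus\NW{}$, contradicting itself. What monotonicity actually gives is that every $w\in\NW{}$ satisfies $w_1\clt_1 x_1$ and $w_2\cgt_2 x_2$: if $x_1\cleq_1 w_1$ then downward-closure of $\NW{}$ in $X_1$ would put $x_1w_2\in\NW{}$, and if $x_2\cgeq_2 w_2$ then upward-closure in $X_2$ would put $w_1x_2\in\NW{}$, both impossible. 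So $\NW{}\subset\NW{x}$ strictly. Now pick any such $w$ (essentiality on $\NW{}$ guarantees one exists). If $w\in\Theta$, set $z=w$. Otherwise $w\notin\SE{}$, and on the row $w_2$ you have $x_1w_2\notin\NW{}$ (column hypothesis) together with $w_1w_2\notin\SE{}$, so closedness yields $cw_2\in\Theta$; the usual monotonicity check gives $c\cleq_1 x_1$, and $w_2\cgeq_2 x_2$ was already known, so $z=cw_2$ works.

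This fix is exactly the paper's organization: it first locates a point $y\in\NW{x}\cap\NW{}$ and then applies closedness on row $y_2$ or on column $x_1$, rather than on row $x_2$.
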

\begin{proof}
  Take $x \in \SE{}$ such that $x \not \in \NW{}$. If there exists $y \in \NW{x}$ such that $y \in \NW{}$, then by closedness assumption there
  must exist either $ay_2 \in \Theta$ and $x \in \SE{ay_2}$ or $x_1p \in \Theta$ and $x \in \SE{x_1p}$. If such $y$ does not exist, $X =
  \Theta$. Other cases are symmetrical.
\end{proof}
It follows from Lemma \ref{lm:21} that $\SE{} = \bigcup _{z \in \Theta} \SE{z}$, while $\NW{} = \bigcup _{z \in \Theta} \NW{z}$. Comparing
this to definitions of $\SE{}$ and $\NW{}$ we are able to define also the following notions: 
\begin{definition}
  We write $x \in \SE{ext}$ and say that $x \in X$ is \emph{extreme in} $\SE{}$ if $x \in \Theta \text{ and }[x_2 \text{ is minimal or } x_1 \text{ is maximal}]$.
  We write $x \in \NW{ext}$ and say that $x \in X$ is \emph{extreme in} $\NW{}$ if $x \in \Theta \text{ and }[x_1 \text{ is minimal or } x_2 \text{ is maximal}]$.
  $x \in X$ is extreme if it is extreme in $\SE{}$ or in $\NW{}$.
\end{definition}
Note that contrary to the homogeneous case $X = Y^n$, extreme points for $\SE{}$ and $\NW{}$ can be asymmetric, i.e. if a point $z$ is
extreme in $\SE{}$ it is not necessarily extreme in $\NW{}$.

\subsection{Representations within $\SE{z}$}
\label{sec:repr-with-ez}

In the following we will build an additive representation on $\SE{}$. The case of $\NW{}$ is symmetric. We proceed by building
representations on sets $\SE{z}$ for all $z \in \Theta \setminus \SE{ext}$ (i.e. for all non-extreme points of $\Theta$). 

\paragraph{Essential coordinates.} For now we assume that both coordinates are essential on $\NW{}$ and $\SE{}$.

\begin{theorem}
  \label{theo:1}
  For any $z \in \Theta \setminus \SE{ext}$ there exists an additive representation of $\cgeq$ on $\SE{z}$:
  \begin{equation*}
    x \cgeq y \Leftrightarrow V^z_1(x_1) + V^z_2(x_2) \geq V^z_1(y_1) + V^z_2(y_2).
  \end{equation*}
\end{theorem}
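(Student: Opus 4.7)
The plan is to recognize $\SE{z}$ as a non-degenerate product subset of $X$ and then invoke the classical additive conjoint measurement theorem of \citet{debreu1959topological} (equivalently, Theorem 6.2 of \citet{krantz1971foundation}). Since $z \in \Theta \setminus \SE{ext}$, neither $z_1$ is maximal nor $z_2$ is minimal, so writing $X_1' = \{a \in X_1 : a \cgeq_1 z_1\}$ and $X_2' = \{p \in X_2 : z_2 \cgeq_2 p\}$ we have $\SE{z} = X_1' \times X_2'$ and each factor contains at least two $\cgeq_i$-distinct elements.

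I would then verify that the restriction of $\cgeq$ to $\SE{z}$ inherits every hypothesis needed by the classical theorem. Weak order follows from \textbf{A1}. Triple cancellation on $\SE{z}$ is immediate: since $z \in \SE{}$ is non-extreme, the first clause of the definition of $\SE{}$ yields some $z'$ with triple cancellation on $\SE{z'}$ and $z \in \SE{z'}$, whence $\SE{z} \subseteq \SE{z'}$ and triple cancellation transfers. From triple cancellation I obtain independence (Lemma \ref{lm:27}) and the Thomsen condition by a standard direct argument. Essentiality of both coordinates on $\SE{z}$ follows from independence together with the fact that $X_1'$ and $X_2'$ each contain strictly $\cgeq_i$-ordered elements (using that $z_1$ is not maximal and $z_2$ is not minimal). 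The Archimedean hypothesis transfers from \textbf{A9}, since any bounded standard sequence contained in $\SE{z}$ is a bounded standard sequence contained in $\SE{}$.

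The main obstacle will be restricted solvability within $\SE{z}$, because \textbf{A8} only guarantees a solution in the full sets $X_i$. I would handle this as follows: given $x_ia_j \cgeq y \cgeq x_ic_j$ with all four points in $\SE{z}$, axiom \textbf{A8} yields $b \in X_i$ with $x_ib_j \sim y$, and independence (already established on $\SE{z}$) forces $b$ to lie between $a$ and $c$ in $\cgeq_i$, and hence inside $X_i'$. Thus the solvability used inside the Debreu/Krantz construction never escapes $\SE{z}$.

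Once every hypothesis has been verified on the product set $\SE{z}$, the classical additive representation theorem produces functions $V_1^z : X_1' \to \mathbb{R}$ and $V_2^z : X_2' \to \mathbb{R}$ such that
\begin{equation*}
x \cgeq y \iff V_1^z(x_1) + V_2^z(x_2) \geq V_1^z(y_1) + V_2^z(y_2)
\end{equation*}
for all $x, y \in \SE{z}$, which is exactly the claim. Joining the local representations obtained for different choices of $z$ into a single additive representation on all of $\SE{}$ is deferred to the subsequent subsection and will rely on axioms \textbf{A4} and \textbf{A5}.
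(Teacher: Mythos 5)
Your proposal follows essentially the same route as the paper: identify $\SE{z}$ as a Cartesian product, check the hypotheses of the Krantz et al.\ additive representation theorem, and observe that the only hypothesis requiring real work is restricted solvability, since \textbf{A8} only guarantees a solution somewhere in $X_j$. Your justifications for weak order, triple cancellation (via $\SE{z} \subseteq \SE{z'}$) and the Archimedean property are fine, and in fact more explicit than the paper's.

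However, your argument for the one substantive step is circular as written. You claim that ``independence, already established on $\SE{z}$, forces $b$ to lie between $a$ and $c$'' --- but independence on $\SE{z}$ (obtained from triple cancellation via Lemma \ref{lm:27}) is a statement only about points already known to lie in $\SE{z}$, and whether $x_ib_j \in \SE{z}$ is precisely what you are trying to establish. The paper instead invokes the \emph{global} axiom \textbf{A2} (weak separability): from $x_ia_j \cgt x_ib_j$ it gives $a_j \cgeq_2 b_j$ without any assumption on where $x_ib_j$ lives, whence $z_2 \cgeq_2 a_j \cgeq_2 b_j$ and $x_ib_j \in \SE{z}$. The conclusion you want is true, but the tool you cite cannot deliver it; replace it with \textbf{A2}. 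A similar remark applies to your derivation of essentiality on $\SE{z}$: independence alone transfers only weak preference into the cone, so strict preference (essentiality) inside $\SE{z}$ requires bi-independence (\textbf{A6}, cf.\ Lemma \ref{lm:29}) or, as the paper does, the standing assumption of that section that both coordinates are essential on $\SE{}$.
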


\begin{proof}
  $\SE{z}$ is a Cartesian product, $\cgeq$ is a weak order on $\SE{z}$, $\cgeq$ satisfies triple cancellation on $\SE{z}$, $\cgeq$ satisfies
  Archimedean axiom on $\SE{z}$, both coordinates are essential. It remains to  show that $\cgeq$ satisfies restricted solvability on
  $\SE{z}$.
  
  Assume that for some $xa, y, xc \in \SE{z}$, we have $xa \cgeq y \cgeq xc$, hence exists $b \in X_2: xb \sim y$. We need to show that $xb
  \in \SE{z}$. If $xb \sim xa$ or $xb \sim xc$, then the result is immediate. Hence, assume $xa \cgt xb \cgt xc$. By definition, $x \in
  \SE{z}$, if $x \cgeq_1 z_1$, and $z_2 \cgeq_2 b$. For $xb$ we need to check only the latter condition. It holds, since $xa \cgt xb \cgt
  xc$, and by weak separability $a \cgeq_2 b$.

  Therefore all conditions for the existence of an additive representation are met \citep{krantz1971foundation}.

\end{proof}

\subsection{Joining representations for different $\SE{z}$ (or $\NW{z}$)}
\label{sec:join-repr-diff}

This section closely follows \mbox{\citep{wakker1991additive-RO}}.

\begin{theorem}
  \label{theo:2}
  There exists an additive interval scale $V^{SE}$ on $\bigcup \SE{z}$, with $z \in \Theta \setminus \SE{ext}$, which represents $\cgeq$ on
  every $\SE{z}$.
\end{theorem}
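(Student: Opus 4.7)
The plan is to take the additive representations $V^z$ produced by Theorem~\ref{theo:1} on each $\SE{z}$, $z \in \Theta \setminus \SE{ext}$, and splice them together by positive affine rescaling so that they agree on overlaps, yielding a single additive pair $V^{SE} = (V^{SE}_1, V^{SE}_2)$ on $\bigcup_z \SE{z}$. This mirrors the gluing argument of \citep{wakker1991additive-RO}, adapted to our rectangular cone decomposition.

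First I would show that whenever $z^{(1)}, z^{(2)} \in \Theta \setminus \SE{ext}$ have overlapping cones, the intersection $\SE{z^{(1)}} \cap \SE{z^{(2)}}$ contains a cone $\SE{w}$ on which both coordinates are essential. One takes $w_1$ as a $\cgeq_1$-upper bound of $\{z^{(1)}_1, z^{(2)}_1\}$ and $w_2$ as a $\cgeq_2$-lower bound of $\{z^{(1)}_2, z^{(2)}_2\}$; existence is secured by restricted solvability (\textbf{A8}), the closedness extension of $X$, and the non-extremality of $z^{(1)}, z^{(2)}$, which leaves room for both coordinates to remain essential at $w$. On $\SE{w}$, both $V^{z^{(1)}}$ and $V^{z^{(2)}}$ are additive representations of the same $\cgeq$, so by the uniqueness theorem for additive conjoint measurement \citep{krantz1971foundation} there exist $\alpha > 0$ and $\beta_1, \beta_2 \in \mathbb{R}$ with
\begin{equation*}
V^{z^{(1)}}_i(x_i) = \alpha V^{z^{(2)}}_i(x_i) + \beta_i, \quad i = 1,2,
\end{equation*}
on the common domain. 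Since each $V^{z}_i$ depends only on $x_i$, this relation extends verbatim to every $x_i$ where both sides are defined, letting me rescale $V^{z^{(2)}}$ to match $V^{z^{(1)}}$ and thereby define a single additive representation on $\SE{z^{(1)}} \cup \SE{z^{(2)}}$.

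Next I would fix a base $z^* \in \Theta \setminus \SE{ext}$ and set $V^{SE} := V^{z^*}$. For an arbitrary $z$, if $\SE{z} \cap \SE{z^*}$ is non-degenerate, I rescale $V^z$ directly as above; otherwise I connect $z^*$ to $z$ by a finite chain of cones with pairwise non-degenerate overlaps, each step obtained by passing to a common refinement further to the ``south-east'' via Lemma~\ref{lm:21}. The main obstacle is well-definedness across alternative chains: two different paths from $z^*$ to $x \in \SE{z}$ must assign the same value. This is handled by common refinement: any two chains from $z^*$ to $z$ share a cone $\SE{w}$ with two essential coordinates lying in the intersection of all their pairwise overlaps, and on such a rich overlap all rescaled representations must agree as additive representations of $\cgeq$, pinning down the scale factor $\alpha$ and translations $\beta_i$ uniquely. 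The delicate point throughout is ensuring that overlaps retain two essential coordinates---otherwise the affine rescaling is non-unique and a hidden shear (independent scaling of the two coordinates) could be introduced. Order density, the structural assumption, and non-extremality of the reference points are exactly what guarantee that each overlap can be thickened to such a two-dimensional cone.
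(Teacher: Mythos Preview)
Your strategy is the same as the paper's---glue the local additive representations $V^z$ by positive-affine rescaling on overlaps, following \citep{wakker1991additive-RO}---but you are missing a structural observation that makes the argument much shorter and renders the chain machinery unnecessary.

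The key point is that for \emph{any} two $z,z' \in \Theta$ one has $\SE{z} \cap \SE{z'} = \SE{q}$, where $q_1$ is the $\cgeq_1$-larger of $z_1,z'_1$ and $q_2$ is the $\cgeq_2$-smaller of $z_2,z'_2$; this intersection is never empty (all $\SE{}$-cones open towards the same ``south-east'' corner). Moreover, both coordinates are automatically essential on every such $\SE{q}$: this is not a consequence of \textbf{A8} or non-extremality as you suggest, but of the standing assumption that both coordinates are essential on $\SE{}$ together with Lemma~\ref{lm:29} (bi-independence propagates essentiality to every $\SE{z}$). Hence the overlap is always rich enough for the additive uniqueness theorem, and every $V^z$ can be rescaled directly against a fixed reference $V^r$---no chains, no connectivity argument, no worry about degenerate overlaps.

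Well-definedness then also simplifies: for any $s,t$, the triple intersection $\SE{s}\cap\SE{t}\cap\SE{r}$ is again a cone $\SE{q'}$ with two essential coordinates, and since both $V^s$ and $V^t$ were aligned with $V^r$ there, they agree with each other on all of $\SE{s}\cap\SE{t}$. Your common-refinement-of-chains argument would work in principle, but it is more delicate than needed and the justification you sketch (``passing to a common refinement further south-east via Lemma~\ref{lm:21}'') is not what that lemma provides.
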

\begin{proof} 

  Choose the ``reference'' points - pick any $r \in \SE{}$ and any $r^0, r^1 \in \SE{r}$ such that $r^1_1p \cgeq r^0_1p$ for every $p \in
  X_2$. Set $V^{r}_1(r^0_1) = 0, V^{r}_2(r^0_2) = 0 , V^{r}_1(r^1_1) = 1$.  Now, we align representations on the other sets $\SE{z}$ with
  the reference one.  Assume that for some $z \in \Theta$ we have already obtained an additive representation $V^z$ on $\SE{z}$.  Observe
  that $V^z$ and $V^r$ are additive value functions for $\cgeq$ on $\SE{z} \cap \SE{r}$. Morevover $\SE{z} \cap \SE{r} = \SE{q}$, where $q_1
  = r_1$ if $r_1 \cgeq_1 z_1$ and $q_1 = z_1$ if the opposite is true. Similarly, $q_2 = r_2$ if $az_2 \cgeq ar_2$ for
  all $a \in X_1$ and $q_2 = z_2$ in the opposite case. Hence, uniqueness results from \citet{krantz1971foundation} can be applied. In
  particular, this means that on $\SE{z} \cap \SE{r}$ we have $V^r_i = \alpha V^z_i + \beta_i$, so the functions are defined up to a common
  unit and location.

  We choose the unit and location of $V^z_1$ so that $V^z_1(x_1) = V^r_1(x_1)$ for all $ x \in \SE{z} \cap \SE{r}$. Next, we choose the location of $V^z_2$ so
  that it coincides with $V^r_2$ on $\SE{z} \cap \SE{r}$.

  Finally, we show that $V^s_i(x_i) = V^t_i(x_i)$ for any $s,t \in \Theta$ and $x \in \SE{s} \cap \SE{t}$. This immediately follows, since
  $V^s$ and $V^t$ coincide (with $V^r$) on $\SE{s} \cap \SE{t} \cap \SE{r}$. This defines their unit and locations, hence they also coincide
  on $\SE{s} \cap \SE{t}$.  Now define $V^{SE}$ as a function which coincides with $V^{z_i}$ on the respective domains
  $\SE{z_i}$. By the above argument, this function is well-defined.
\end{proof}

\begin{theorem}
  \label{theo:4}
  Representation $V^{SE}$ obtained in Theorem \ref{theo:2} is globally representing on $\SE{} \setminus \SE{ext} = \bigcup _{z \in \Theta \setminus \SE{ext}} \SE{z}$.
\end{theorem}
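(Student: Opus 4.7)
The plan is to reduce global representation on $\bigcup_{z \in \Theta \setminus \SE{ext}} \SE{z}$ to the per-cone case already handled. Specifically, for any $x, y$ in this union it suffices to exhibit a single $z^{*} \in \Theta \setminus \SE{ext}$ with $x, y \in \SE{z^{*}}$. Once this is done, Theorem \ref{theo:1} gives an additive representation $V^{z^{*}}$ of $\cgeq$ on $\SE{z^{*}}$, Theorem \ref{theo:2} ensures that $V^{SE}$ restricted to $\SE{z^{*}}$ coincides with $V^{z^{*}}$, and so $x \cgeq y \iff V^{SE}(x) \geq V^{SE}(y)$ is immediate.

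To construct $z^{*}$, take $s, t \in \Theta \setminus \SE{ext}$ with $x \in \SE{s}$ and $y \in \SE{t}$ and assume without loss of generality $s_1 \cleq_1 t_1$. In the nested subcase $s_2 \cgeq_2 t_2$ we have $\SE{t} \subseteq \SE{s}$, so $z^{*} := s$ works. In the crossed subcase $s_2 \clt_2 t_2$, set $z^{*} := (s_1, t_2)$; then $\SE{z^{*}} \supseteq \SE{s} \cup \SE{t}$, hence contains both $x$ and $y$. Since $s_1$ is not maximal and $t_2$ is not minimal (as $s, t \notin \SE{ext}$), $z^{*} \notin \SE{ext}$ by definition. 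For $z^{*} \in \NW{}$, observe $\NW{z^{*}} \subseteq \NW{t}$, and $t \in \NW{}$ provides a 3C-cone $\NW{z'} \supseteq \NW{t}$, so triple cancellation is inherited by $\NW{z^{*}}$.

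The hard part is verifying $z^{*} \in \SE{}$ in the crossed subcase. Unlike the $\NW{}$ direction, neither the 3C-cone witnessing $s \in \SE{}$ nor the one witnessing $t \in \SE{}$ need contain $\SE{z^{*}}$: the first may have upper $X_2$-parameter strictly below $t_2$, the second may have lower $X_1$-parameter strictly above $s_1$. The natural move is to invoke Axiom \textbf{A3} at $z^{*}$, which forces triple cancellation on at least one of $\SE{z^{*}}$, $\NW{z^{*}}$; if it falls on $\SE{z^{*}}$ we conclude $z^{*} \in \SE{}$ and we are done. If A3 returns only $\NW{z^{*}}$, I fall back on a chain argument: by restricted solvability pick $w \in \SE{s} \cap \SE{t} = \SE{(t_1, s_2)}$ with $V^{SE}(w) = V^{SE}(y)$, apply Theorem \ref{theo:1} inside $\SE{s}$ to compare $x$ with $w$, then inside $\SE{t}$ to get $w \sim y$, and conclude by transitivity of $\cgeq$.

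The main obstacle is exactly this crossed subcase: the corner point $(s_1, t_2)$ is the natural candidate for $z^{*}$, but showing it lies in $\Theta$ is not automatic from $s, t \in \Theta$ alone. This is where the closedness extension of $X$ must be invoked, ensuring that for any such configuration of non-$\NW{}$ and non-$\SE{}$ points in the relevant fibres a $\Theta$-witness actually exists; verifying that this witness can always be chosen dominating $\SE{s} \cup \SE{t}$ while remaining non-extreme is where I expect the bulk of the technical care to lie.
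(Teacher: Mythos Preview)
Your strategy of locating a single cone $\SE{z^*}$ containing both $x$ and $y$ is more ambitious than necessary, and the paper takes a noticeably simpler route. Rather than splitting on the relative position of the apices $s,t$, the paper splits on the relative coordinate ordering of $x$ and $y$ themselves. If the coordinates are oppositely ordered (say $y_1 \cgeq_1 x_1$ and $x_2 \cgeq_2 y_2$), then $y \in \SE{x}$, and since $x \in \SE{s}$ for some $s \in \Theta \setminus \SE{ext}$ we get $x,y \in \SE{s}$ immediately. If instead $x$ componentwise dominates $y$ (the case corresponding to your ``crossed'' difficulty), the paper does not look for a larger cone at all: it simply observes that the corner point $x_1y_2$ lies in $\SE{s} \cap \SE{t}$, and pointwise monotonicity (Lemma~\ref{lm:4}) yields the chain $x \cgeq x_1y_2 \cgeq y$. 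The first comparison is inside $\SE{s}$ and the second inside $\SE{t}$, so each translates to an inequality in $V^{SE}$, and we are done. No existence of a common apex, no appeal to closedness, no solvability step.

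Your fallback chain argument is closer in spirit to this, but it has a genuine gap: you invoke restricted solvability to produce $w \in \SE{s} \cap \SE{t}$ with $V^{SE}(w) = V^{SE}(y)$, yet nothing guarantees that the intersection $\SE{(t_1,s_2)}$ contains any point equivalent to $y$. Restricted solvability gives intermediate points along a fibre between two given bounds; you have not exhibited such bounds inside the intersection. The paper's choice of the concrete corner $x_1y_2$ sidesteps this entirely: its membership in both cones follows by direct coordinate comparison, and its position between $x$ and $y$ is automatic from monotonicity rather than from solving for a prescribed value.
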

\begin{proof}

  Let $x \cgeq y$. There can be two cases. First, assume that $x_2 \cgeq_2 y_2$, but $y_1 \cgeq_1 x_1$ (or vice versa). In this case, $x$
  and $y$ belong to the same $\SE{z}$ (e.g. $\SE{x}$) and therefore $V^{SE}$ is a valid representation.

Next, assume that $x_j \cgeq_j y_j$ for all $i,j \in N$. Assume, that $x \in \SE{s}, y \in \SE{t}$. Observe that
  $x_1y_2 \in \SE{s} \cap \SE{t}$ because by the made assumptions, $x_1y_2 \in \SE{x}, x_1y_2 \in \SE{y}$.
  % \begin{equation*}
  %   \begin{aligned}
  %     a_1x_2 & \cgeq a_1y_2 \cgeq a_1t_2 &  \forall a_1 \in X_1 & ~~ (\SE{t}) \\
  %     t_1a_2 & \cgeq x_1a_2 \cgeq y_1a_2 &  \forall a_2 \in X_2 & ~~ (\SE{s}) 
  %       \end{aligned}
  % \end{equation*}
By pointwise monotonicity (Lemma \ref{lm:4}) $x_1x_2 \cgeq x_1y_2 \cgeq y_1y_2$, hence $V_1(x_1) + V_2(x_2) \geq V_1(x_1) +
V_2(y_2) \geq V_1(y_1) + V_2(y_2)$, with first inequalities lying in $\SE{s}$, and second in $\SE{t}$. The reverse implication is also true. 
\end{proof}

% \paragraph*{Remark.}
% \label{sec:remark}
% Since $\SE{} = \bigcup_{z \in \SE{ext}} \SE{z}$, it is easy to see that $\bigcup \SE{z}$ where $z \in \SE{ext}$ with boundary points
% excluded, is equal to $\SE{}$ with exactly boundary extreme points excluded.  
  
% \subsection{Monotonicity}
% \label{sec:monotonicity}

\section{Aligning $V^{SE}$ and $V^{NW}$}
\label{sec:join-togeth-addit}

First we will show that it is not possible for the common domain of $V^{SE}_i$ and $V^{NW}_i$ for some $i$ to contain a single point.

\subsection{Analysis of the common domain of $V^{SE}$ and $V^{NW}$}
\label{sec:technical-lemmas-2}

\begin{lemma} % Special cases of $\SE{}$ and $\NW{}$ - horizontal and vertical division. \par
  \label{lm:10}
  Let $a_0 \cgeq_1 b_0$, and for some $p \in X_2$ we have $a_0p, b_0p \in \Theta$. Define $X_{a_0b_0} = \{x_1: x_1 \in X_1, b_0 \cgeq_1 x_1
  \cgeq_1 a_0\}$.  Then, triple cancellation holds everywhere on $X_{a_0b_0} \times X_2$.
\end{lemma}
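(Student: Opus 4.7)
The geometric picture behind this lemma is that the two hypotheses $a_0p, b_0p \in \Theta$ pin two points of the ``diagonal'' $\Theta$ on the same horizontal line $X_1 \times \{p\}$, and I want to show that triple cancellation --- which by \textbf{A3} may fail on $X$ as a whole --- still holds on the narrow vertical strip of first coordinates between $b_0$ and $a_0$. I read $X_{a_0b_0}$ as the interval $\{x_1 : a_0 \cgeq_1 x_1 \cgeq_1 b_0\}$; if instead the bracketing in the displayed definition is taken literally, then by transitivity the set collapses to the $\sim_1$-equivalence class of $a_0 \sim_1 b_0$ and the lemma becomes trivial by substitution of $\sim_1$-equivalents into the triple cancellation statement. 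The non-trivial strategy is to exploit that this strip cuts $\Theta$ in a horizontal segment and to deduce triple cancellation region-by-region from \textbf{A4} and \textbf{A5}.

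The first step is to locate the strip inside $\SE{} \cup \NW{}$. Since $a_0p \in \NW{}$, I pick a 3C-cone $\NW{\zeta}$ containing $a_0p$, with $\zeta_1 \cgeq_1 a_0$ and $p \cgeq_2 \zeta_2$; for any $x_1 \in X_{a_0b_0}$ one has $\zeta_1 \cgeq_1 a_0 \cgeq_1 x_1$, so $x_1p \in \NW{\zeta}$. Symmetrically, $b_0p \in \SE{\zeta'}$ with $b_0 \cgeq_1 \zeta'_1$ and $\zeta'_2 \cgeq_2 p$ gives $x_1p \in \SE{\zeta'}$. Hence $X_{a_0b_0} \times \{p\} \subseteq \Theta$. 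Extending the same cones in the second coordinate: for any $y_2$ with $p \cgeq_2 y_2$ the point $x_1y_2$ remains in $\SE{\zeta'}$, and for any $y_2$ with $y_2 \cgeq_2 p$ it remains in $\NW{\zeta}$, so the whole strip decomposes into an $\SE{}$-half below the horizontal line through $p$ and an $\NW{}$-half above it.

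The second step is triple cancellation itself. Take $a,b,c,d \in X_{a_0b_0}$ and $u,v,w,x \in X_2$ with $au \cleq bv$, $aw \cgeq bx$, $cu \cgeq dv$; the goal is $cw \cgeq dx$. By Step 1, each of the eight points lies in $\SE{}$ or $\NW{}$ according to the position of its second coordinate relative to $p$. If $u,v,w,x$ all sit on one side of $p$, then all eight points live in a single region and \textbf{A4(a)} immediately delivers the conclusion. For mixed positions I would insert, via restricted solvability \textbf{A8}, intermediate second coordinates indifferent to $p$ so that each of the three premises can be chained through the $\Theta$-strip $X_{a_0b_0} \times \{p\}$; the resulting quadruples then satisfy the cone-membership hypotheses of clauses \textbf{A4(b)} or \textbf{A4(c)} (depending on which side of the quadruple straddles $p$), and the ordering of $X_1$-intervals is transferred from one region to the other via Lemma \ref{lm:A5}.

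The hard part will be the mixed subcases: neither \textbf{A4(b)} nor \textbf{A4(c)} alone covers every pattern of second coordinates straddling $p$, and the bookkeeping required to certify that each invocation of those clauses --- or of Lemma \ref{lm:A5} --- is applied to a quadruple simultaneously meeting the cone-membership and essentiality conditions is where the substantive work sits. The crucial leverage is that the strip is exactly narrow enough that every $x_1p$ lands on $\Theta$, which is precisely what lets restricted solvability supply the bridge points needed to pass from an $\SE{}$-relation to an $\NW{}$-relation without ever leaving $X_{a_0b_0} \times X_2$.
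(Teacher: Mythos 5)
Your Step 1 (the strip $X_{a_0b_0}\times X_2$ splits into an $\SE{}$-half below the line through $p$ and an $\NW{}$-half above it, with $X_{a_0b_0}\times\{p\}\subseteq\Theta$) is correct and is exactly the geometric setup the paper uses, and your observation about the bracketing typo in the definition of $X_{a_0b_0}$ is right. The problem is that your Step 2 stops precisely where the lemma's content begins: you handle the pure case (all eight points on one side of $p$), which is a one-line application of \textbf{A4}(a), and then explicitly defer the mixed case as ``the hard part'' with ``bookkeeping'' still to be done. That case analysis is not bookkeeping --- it is the proof. As written, you have not verified that the restricted-solvability calls you want to make are actually licensed (solvability only applies when the target is bracketed between two points with a common first or second coordinate, and establishing those brackets is a nontrivial chain of comparisons), nor which clause of \textbf{A4} applies to each substituted quadruple, nor that the essentiality side-conditions of clauses (b)/(c) are met. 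Your phrasing ``insert intermediate second coordinates indifferent to $p$'' also points in a slightly wrong direction: what is needed is to replace a point $ax$ in the $\NW{}$-half by an indifferent point $fp$ \emph{on the line} $X_{a_0b_0}\times\{p\}$, i.e.\ to solve in the \emph{first} coordinate along that line, not to perturb the second coordinate.

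For comparison, the paper's proof reduces (after symmetry) to the case where only $x\cgeq_2 p$, so $ax,cx\in\NW{}$ and the other six points are in $\SE{}$. It first disposes of degenerate subcases ($ax\sim ap$ or $by\sim bp$) by direct substitution into \textbf{A4}; otherwise it establishes the chain $bp\cgt by\cgeq ax\cgt ap$ to solve $fp\sim ax$ on the $\Theta$-line, similarly $gp\sim cx$ when $cx\cleq dp$, and $fq\sim aw$, and then applies \textbf{A4} twice: once to the quadruple involving $ax,cx,fp,gp$ to transfer $aw\sim fq$ into $cw\sim gq$, and once to the all-$\SE{}$ premises $fp\cleq by$, $gp\cgeq dy$, $fq\cgeq bz$ to conclude $gq\cgeq dz$, hence $cw\cgeq dz$; the subcase $cx\cgt dp$ follows from \textbf{A4} directly. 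Note that no appeal to \textbf{A5} or Lemma \ref{lm:A5} is needed anywhere, contrary to what your sketch anticipates. Until you carry out an argument of this kind, the proposal is a correct plan but not a proof.
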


\begin{figure}[h!]
  \centering

\begin{tikzpicture}[scale=1.2]

    % invis axes

    \draw[dashed, name path=a0vert] (-3,-3) -- (-3,1);
    \draw[dashed, name path=b0vert] (3,-3) -- (3,1);
    \draw[dashed, name path=a0b0] (-3.5,0) coordinate (p) -- (3.5,0);

    \draw (-3,-3 |- -3.5,0) node[font=\scriptsize, below left] {$b_0$};
    \draw (3,-3 |- -3.5,0) node[font=\scriptsize, below right] {$a_0$};
    \draw (p) node[font=\scriptsize, left] {$p$};

    \fill[black] (-2.5,0.5) coordinate (ax) circle (1.5pt);
    \fill[black] (1.5,0.5) coordinate (cx) circle (1.5pt);

    \fill[black] (-1.5,-0.5) coordinate (by) circle (1.5pt);
    \fill[black] (2.0,-0.5) coordinate (dy) circle (1.5pt);

    \draw (ax) -- (-2.3,0) coordinate (fp) node[font=\scriptsize, above right=1pt] {$f$};
    \draw (fp) -- (-1.5, -0.7) coordinate (ae1);

    \draw (cx) -- ++($(fp) - (ax)$) coordinate (gp);
    \draw (gp) -- ++($(ae1) - (fp)$);

    \fill[black] (-2.5,-1.5) coordinate (aw) circle (1.5pt);
    \fill[black] (1.5,-1.5) coordinate (cw) circle (1.5pt);

    \fill[black] (-1.5,-2.5) coordinate (bz) circle (1.5pt);
    \fill[black] (2.0,-2.5) coordinate (dz) circle (1.5pt);

    \draw[name path=aw2] (aw) -- ++($1.4*(ae1) - 1.4*(fp)$);
    \draw (cw) -- ++($1.4*(ae1) - 1.4*(fp)$);

    \path[name path=fpproj] (fp) -- +(0,-4);
    \draw[dashed, name intersections={of=aw2 and fpproj, by=fq}] (fp) -- (fq);

    \draw ($(-3,0.5) - (1.5pt,0)$) -- ($(-3,0.5) - (-1.5pt,0)$) node[font=\scriptsize, left=3pt] {$x$};
    \draw ($(-3,-0.5) - (1.5pt,0)$) -- ($(-3,-0.5) - (-1.5pt,0)$) node[font=\scriptsize, left=3pt] {$y$};

    \draw ($(-3,-1.5) - (1.5pt,0)$) -- ($(-3,-1.5) - (-1.5pt,0)$) node[font=\scriptsize, left=3pt] {$w$};
    \draw ($(-3,-2.5) - (1.5pt,0)$) -- ($(-3,-2.5) - (-1.5pt,0)$) node[font=\scriptsize, left=3pt] {$z$};

    \path (fq) -- (fq -| -3,0) coordinate (q);
    \draw ($(q) - (1.5pt,0)$) -- ($(q) - (-1.5pt,0)$) node[font=\scriptsize, left=3pt] {$q$};

    \draw ($(-2.5,0) - (0,1.5pt)$) -- ($(-2.5,0) - (0,-1.5pt)$) node[font=\scriptsize, below left=1pt and -1pt] {$a$};
    \draw ($(-1.5,0) - (0,1.5pt)$) -- ($(-1.5,0) - (0,-1.5pt)$) node[font=\scriptsize, below right=1pt and -1pt] {$b$};

    \draw ($(2,0) - (0,1.5pt)$) -- ($(2,0) - (0,-1.5pt)$) node[font=\scriptsize, below right=1pt and -1pt] {$d$};
    \draw ($(1.5,0) - (0,1.5pt)$) -- ($(1.5,0) - (0,-1.5pt)$) node[font=\scriptsize, below left=1pt and -1pt] {$c$};

\end{tikzpicture}

  \caption{Lemma \ref{lm:10}}
  \label{fig:lm10}
\end{figure}

\begin{proof}
  All points in the below proof are from $X_{a_0b_0} \times X_2$.  Let $ax \cleq by, aw \cgeq bz, cx \cgeq dy$. We will show that together
  with the assumptions of the Lemma, this implies $cw \cgeq dz$.
  \\
  \\
  The case when all points belong to $\SE{}$ or $\NW{}$, or two pairs belong to $\SE{}$ and two to $\NW{}$ is covered by \textbf{A4}. Thus,
  assume wlog $x \cgeq_2 p$, so that $ax, cx \in \NW{}$ and the remaining points are in $\SE{}$ (Fig. \ref{fig:lm10}).  Assume also $dp
  \cgeq cp$ and $b \cgeq_1 a$.  Assume also $ax \cgt ap$ (hence by independence also $cx \cgt cp$), $bp \cgt by$ (hence
  also $dp \cgt dy$), otherwise the result immediately follows by \textbf{A4} (e.g. if $ax \sim ap$, we can replace $ax$ by $ap$ and $cx$ by
  $cp$ in the assumptions of the lemma, which brings all points to $\SE{}$).

  \begin{enumerate}
  \item $ax \cleq by$. \\
    $bp \cgt by$, hence $ ax \clt bp$. $ax \cgt ap$, hence $bp \cgt by \cgeq ax \cgt ap$, $bp \cgt ap$,
    therefore, by restricted solvability exists $fp \sim ax$. Also, $fp \cgt ap, bp \cgt fp$.

  \item $cx \cgeq dy$.
    There can be two cases:
    \begin{enumerate}[label={\alph*})]
    \item If $cx \cleq dp$, then $dp \cgeq cx \cgt cp$, hence exists $gp \sim cx$.
    \item $cx \cgt dp$.
    \end{enumerate}
  \item $aw \cgeq bz$. \\
    Solve for $q$: $aw \sim fq$. % Since $bp \cgeq ap$ for all $p$, we have $bz \cgeq az$, hence $aw \cgeq bz \cgeq az$.
    By the results in point 1 and independence we have $fw \cgt aw \cgeq bz \cgt fz$, therefore by restricted solvability exists $q: fq \sim
    aw$.
  \item
    Cases correspond to those in point 2 above: 
    \begin{enumerate}[label={\alph*})]
    \item $fp \sim ax, gp \sim cx, aw \sim fq$, hence by \textbf{A4} $cw \sim gq$ \\
      $fp \cleq by, gp \cgeq dy, fq \cgeq bz$, hence by \textbf{A4} $gq \cgeq dz$ and $cw \cgeq dz$.
    \item $ax \clt bp, cx \cgt dp, aw \cgeq bz$, hence by \textbf{A4} $cw \cgeq dz$. 
    \end{enumerate}

  \end{enumerate}
\end{proof}

From this it follows that it is impossible that for some $i$ the common domain of $V^{SE}_i$ and $V^{NW}_i$ includes a single point. Let
(wlog) $i=1$ and $a \in X_1$ be such a point. Apparently $ap \in \Theta$ for all $a \in X_1$. Then, from Lemma ~\ref{lm:10} it follows that
$\SE{}=\NW{}=X$.

% Let $\sum V^{SE}_i$, $i = 1,2$ be an additive representation of $\cgeq$ on $\SE{}$ (with boundary points of $\SE{ext}$ removed), and $\sum
% V^{NW}_i$ a representation on $\NW{}$.  Assume first that for some $i$ the common domain of $V^{SE}_i$ and $V^{NW}_i$ includes a single
% point. Let (wlog) $i=1$ and $a \in X_1$ be such a point. Apparently $ap \in \Theta$ for all $a \in X_1$. Then, according to Lemma
% ~\ref{lm:hor-spl-add}, there exists an additive representation of $\cgeq$ on $X$.

  % \begin{lemma}
  %   \label{lm:hor-spl-add}
  % Let $a_0q \cgeq b_0q$, for all $q \in X_2$, and exist $ap, bp \in \Theta$. Define $X_{a_0b_0} = \{x_1: x_1 \in X_1, b_0p \cgeq x_1p \cgeq a_0p\}$ for all $p \in X_2$. 
  % Then there exist an additive representation of $\cgeq$ on $X_{a_0b_0} \times X_2$. 
  % \end{lemma}
    
  % \begin{proof}
  %   Follows from Lemma \ref{lm:10}.
  % \end{proof}

\subsection{Aligning representations on \textbf{SE} and \textbf{NW}}
\label{sec:align-addit}

There can be four cases, depending on the number of essential coordinates on $\NW{}$ and $\SE{}$:
\begin{enumerate}
\item Both areas have two essential coordinates;
\item One area has two essential coordinate, another has one essential coordinate;
\item Both areas have one essential coordinates;
\item An area does not have any essential coordinates.
\end{enumerate}

We start with the case where both coordinates are essential on $\NW{}$ and $\SE{}$.

\subsubsection{Both coordinates are essential}
\label{sec:both-coordinates-are}

\begin{lemma}
  \label{lm:7}
  Choose $r^0_1 \in X_1$ and $r^1_1 \in X_1$ from the common domain of $V^{SE}_1$ and $V^{NW}_1$ such that $r^1_1 \cgeq_1 r^0_1$, and set
  $V^{SE}_1(r^0_1) = V^{NW}_1(r^0_1) = 0$ and $V^{SE}_1(r^1_1) = V^{NW}_1(r^1_1) = 1$. Then, $V^{SE}_1(x) = V^{NW}_1(x)$ on all $x$ from
  their common domain.
\end{lemma}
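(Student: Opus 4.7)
The plan is to exploit the uniqueness of the additive conjoint measurement representation on two-dimensional rectangles lying inside $\Theta$, and then propagate the normalization fixed at $r^0_1, r^1_1$ to the entire common domain via a chain of overlapping rectangles.

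First I would observe that for any $z, w \in \Theta$ with $w_1 \cgt_1 z_1$ and $z_2 \cgt_2 w_2$, the intersection $R(z,w) = \SE{z} \cap \NW{w}$ is a genuine two-dimensional product sub-rectangle contained in $\Theta$. On any such $R(z,w)$, both $V^{SE}$ and $V^{NW}$ are additive representations of the same weak order $\cgeq$. Since on $R(z,w)$ both coordinates are essential (this is where I would invoke Lemma \ref{lm:29} together with the standing assumption that both coordinates are essential on $\SE{}$ and $\NW{}$), the uniqueness part of the additive conjoint measurement theorem used in the proof of Theorem \ref{theo:2} yields $\alpha_{R} > 0$ and constants $\beta^1_{R}, \beta^2_{R}$ such that $V^{NW}_i = \alpha_{R} V^{SE}_i + \beta^i_{R}$ on $R(z,w)$ for $i=1,2$.

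Next, since $r^0_1, r^1_1$ lie in the common domain, there exists $p^* \in X_2$ with $r^0_1 p^*, r^1_1 p^* \in \Theta$. Using order-denseness and the closedness assumption for $\SE{}$ and $\NW{}$, I would pick $z^0, w^0 \in \Theta$ so that the rectangle $R_0 = R(z^0, w^0)$ contains both reference points $r^0_1 p^*$ and $r^1_1 p^*$ in its interior (with coordinate $1$ essential on $R_0$). Evaluating the affine relation at these two points, together with $V^{SE}_1(r^0_1) = V^{NW}_1(r^0_1) = 0$ and $V^{SE}_1(r^1_1) = V^{NW}_1(r^1_1) = 1$, forces $\beta^1_{R_0} = 0$ and $\alpha_{R_0} = 1$, so $V^{SE}_1 = V^{NW}_1$ on $\pi_1(R_0)$.

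For an arbitrary $x_1$ in the common domain, pick $q^* \in X_2$ with $x_1 q^* \in \Theta$ and construct a finite chain of rectangles $R_0, R_1, \ldots, R_n$ in $\Theta$, each of the form $R(z^k, w^k)$, such that $x_1 q^* \in R_n$ and each consecutive overlap $R_k \cap R_{k+1}$ contains an open product piece with coordinate $1$ essential. On each such overlap the two affine relations must agree, forcing $\alpha_{R_{k+1}} = \alpha_{R_k}$ and $\beta^1_{R_{k+1}} = \beta^1_{R_k}$; by induction $\alpha_{R_n} = 1$ and $\beta^1_{R_n} = 0$, giving $V^{SE}_1(x_1) = V^{NW}_1(x_1)$.

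The main obstacle is building the chain: verifying that any point of the common domain can be connected to $R_0$ by finitely many overlapping, non-degenerate $\Theta$-rectangles on whose intersections coordinate $1$ remains essential. This is where I would carefully invoke order-denseness, restricted solvability, the closedness/extension assumption on $\SE{}, \NW{}$, and Lemma \ref{lm:29} to guarantee that such connecting intermediate points $z^k, w^k \in \Theta$ exist and that successive rectangles meet in a sufficiently rich set; Lemma \ref{lm:10} further rules out the degenerate case in which the common domain collapses to a single point.
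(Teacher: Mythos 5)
Your argument collapses at its very first step. You need rectangles $R(z,w)=\SE{z}\cap\NW{w}$ with $z,w\in\Theta$, $w_1\cgt_1 z_1$ and $z_2\cgt_2 w_2$, that are genuinely two-dimensional and contained in $\Theta$. But $\Theta=\SE{}\cap\NW{}$ is precisely the set that the representation maps onto the diagonal $\{f_1(x_1)=f_2(x_2)\}$: in the non-additive case (the only case where the lemma has content) no two distinct points $ap,bp\in\Theta$ can share a second coordinate, and no two points $aq,ar\in\Theta$ can share a first coordinate (this is part 1 of Lemma \ref{cor:crit_order}; it also follows directly from Lemma \ref{lm:10} plus the structural assumption, since two such points would force triple cancellation, hence additivity, on a whole strip). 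Equivalently, if $z,w\in\Theta$ with $w_1\cgt_1 z_1$ then necessarily $w_2\cgt_2 z_2$, so the pairs $(z,w)$ you quantify over do not exist, and your rectangles are vacuous. The deeper point is that there is \emph{no} product subset of $X$ with both coordinates essential on which $V^{SE}$ and $V^{NW}$ are simultaneously additive representations of $\cgeq$: the common domain of $V^{SE}_1$ and $V^{NW}_1$ consists of those $x_1$ for which $x_1p\in\SE{}$ for some $p$ and $x_1q\in\NW{}$ for some \emph{different} $q$, so the two representations never overlap on a region where the conjoint-measurement uniqueness theorem could be invoked. Your chain of rectangles therefore has nothing to chain.

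What is missing is exactly the mechanism the paper uses: axiom \textbf{A4}. Its cross-cone clause guarantees that a standard sequence $\{\alpha^{(i)}_1\}$ on $X_1$ built inside $\SE{}$ (with mesh $y^s_1,y^s_2$) is simultaneously a standard sequence inside $\NW{}$ (with some other mesh $y^n_1,y^n_2$), because one equivalence link in $\NW{}$ propagates to all links. Counting how many steps of this common sequence are needed to reach $r^1_1$ and to reach an arbitrary $y_1$ then shows that $V^{SE}_1(y_1)$ and $V^{NW}_1(y_1)$ are approximated by the same ratio $m/n$, and order-density lets the mesh shrink so the approximation becomes exact. Any correct proof must transport scale information across the two cones through an axiom of this kind; a purely ``uniqueness-on-overlaps'' argument cannot succeed because the overlap is the thin set $\Theta$, on which neither coordinate is essential.
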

\begin{proof}
  This follows directly from \textbf{A4}. Assume, we want to show that $V^{SE}_1(y_1) = V^{NW}_1(y_1)$ for some $y_1 \in X_1$.  Starting
  from $r^0_1$ build any standard sequence on $X_1$ in $\SE{}$, say $\{\alpha^{(i)}_1 : \alpha^{(i)}_1y^s_1 \sim
  \alpha^{(i+1)}_1y^s_2\}$. Then, all $\alpha^{(i)}_1 y^n_1, \alpha^{(i)}_1 y^n_2$ which are in $\NW{}$ also form a sequence in $\NW{}$: if
  $\alpha^{(i)}_1 y^s_1 \sim \alpha^{(i+1)}_1 y^s_2, \alpha^{(i+1)}_1 y^s_1 \sim \alpha^{(i+2)}_1 y^s_2$ and $\alpha^{(i)}_1 y^n_1 \sim
  \alpha^{(i+1)}_1 y^n_2$, for some $y^n_1, y^n_2 \in X_2$, then by \textbf{A4}, necessarily $\alpha^{(i+1)}_1 y^n_1 \sim \alpha^{(i+2)}_1
  y^n_2$. 

  Now let 
  \begin{equation*}
    \begin{aligned}
      1 = & V^{SE}_1(r^1_1) \approx  n[V^{SE}_2(y^s_2) - V^{SE}_2(y^s_1)] \\
      & V^{SE}_1(y_1) \approx m[V^{SE}_2(y^s_2) - V^{SE}_2(y^s_1)]  \approx \frac{m}{n}.
    \end{aligned}
  \end{equation*}
  Such $n$ and $m$ exist by the Archimedean axiom.
  By the argument above we get 
    \begin{equation*}
    \begin{aligned}
      1 = & V^{NW}_1(r^1_1) \approx  n[V^{NW}_2(y^n_2) - V^{NW}_2(y^n_1)] \\
      & V^{NW}_1(y_1) \approx m[V^{NW}_2(y^n_2) - V^{NW}_2(y^n_1)]  \approx \frac{m}{n}\\
    \end{aligned}
  \end{equation*}

By denserangedness, approximations become exact in the limit, so we obtain $V^{SE}_1 (y_1) = V^{NW}_1 (y_1)$ on
  all $y_1 \in X_1$ from their common domain.
\end{proof}

\begin{lemma}
  \label{lm:8}
  Assume, $V^{SE}$ is an additive representation of $\cgeq$ on $\SE{} \setminus \SE{ext}$, and $V^{NW}$ is a representation on $\NW{}
  \setminus \NW{ext}$, with $V^{SE}_1$ and $V^{NW}_1$ scaled so that they have a common zero and unit (as in Lemma \ref{lm:7}). Then,
  $V^{SE}_2 = \lambda V^{NW}_2$ on the common domain.
\end{lemma}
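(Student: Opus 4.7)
The argument is a direct adaptation of Lemma~\ref{lm:7}, swapping the roles of $X_1$ and $X_2$ in the standard sequences. Fix reference points $p^0, p^1 \in X_2$ in the common domain of $V^{SE}_2$ and $V^{NW}_2$ with $p^1 \cgt_2 p^0$; such a pair exists because by Lemma~\ref{lm:10} the common domain of $V^{SE}_2$ and $V^{NW}_2$ contains more than one point. Adjust the free location constants of the two value functions so that $V^{SE}_2(p^0) = V^{NW}_2(p^0) = 0$ and set $\lambda := V^{SE}_2(p^1)/V^{NW}_2(p^1)$. The goal is to show $V^{SE}_2(q) = \lambda V^{NW}_2(q)$ for every $q$ in the common domain.

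\textbf{Standard sequence on $X_2$ inside $\SE{}$.} Given a target $q$, pick $x^s_0 \cgt_1 x^s_1$ in $X_1$ so that the pairs $x^s_j h$ remain in $\SE{}$ over the relevant range, and by restricted solvability (\textbf{A8}) construct a standard sequence $h^{(0)} = p^0, h^{(1)}, \ldots, h^{(n)} = p^1$ on $X_2$ entirely inside $\SE{}$ with $x^s_0 h^{(i)} \sim x^s_1 h^{(i+1)}$. The Archimedean axiom (\textbf{A9}) allows $n$ to be chosen large enough that some $h^{(m)}$ lies within any prescribed tolerance of $q$. Additivity of $V^{SE}$ on $\SE{}$ forces equally-spaced values $V^{SE}_2(h^{(i)}) = i\,V^{SE}_2(p^1)/n$, hence $V^{SE}_2(q) \approx (m/n)\,V^{SE}_2(p^1)$. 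Next, using closedness of $\SE{}/\NW{}$ and Lemma~\ref{lm:34}, pick $x^n_0, x^n_1 \in X_1$ with $x^n_0 h^{(i)}, x^n_1 h^{(i+1)} \in \NW{}$ for every $i$, and arrange by restricted solvability that $x^n_0 h^{(0)} \sim x^n_1 h^{(1)}$ in $\NW{}$.

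\textbf{Transfer via \textbf{A4} and conclusion.} Induct on $i$ to show $x^n_0 h^{(i)} \sim x^n_1 h^{(i+1)}$ throughout. The inductive step applies \textbf{A4} case~(b) to the premises $x^s_0 h^{(i)} \cleq x^s_1 h^{(i+1)}$, $x^s_0 h^{(i+1)} \cgeq x^s_1 h^{(i+2)}$ (from the $\SE{}$ sequence) and $x^n_0 h^{(i)} \cgeq x^n_1 h^{(i+1)}$ (from the inductive hypothesis), concluding $x^n_0 h^{(i+1)} \cgeq x^n_1 h^{(i+2)}$; a symmetric relabelling yields the reverse inequality, so $\sim$ holds. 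Case~(b)'s essentiality requirement on coordinate~2 is satisfied by our standing assumption that both coordinates are essential on both regions. Thus $\{h^{(i)}\}$ is simultaneously a standard sequence in $\NW{}$, so $V^{NW}_2(h^{(i)}) = i\,V^{NW}_2(p^1)/n$ and $V^{NW}_2(q) \approx (m/n)\,V^{NW}_2(p^1)$. Dividing the two approximations and passing to the limit using order-denseness yields $V^{SE}_2(q)/V^{SE}_2(p^1) = V^{NW}_2(q)/V^{NW}_2(p^1)$, i.e.\ $V^{SE}_2(q) = \lambda V^{NW}_2(q)$.

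\textbf{Main obstacle.} The delicate point is the simultaneous geometric availability of measuring points: $x^s_0, x^s_1$ such that the whole sequence $\{h^{(i)}\}$ stays in $\SE{}$ when paired with $x^s_j$, and $x^n_0, x^n_1$ such that $\{h^{(i)}\}$ stays in $\NW{}$ when paired with $x^n_j$. This relies on the closedness extension of $\SE{}/\NW{}$ together with Lemmas~\ref{lm:10} and~\ref{lm:34}, which ensure that enough points of each region sit below $p^0$ and above $p^1$ in the relevant sense. Once these preconditions are secured, the inductive propagation of the equivalence through \textbf{A4} is routine, in close parallel with the $X_1$ argument of Lemma~\ref{lm:7}.
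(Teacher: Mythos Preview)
The proposal is correct and follows essentially the same route as the paper: build a standard sequence on $X_2$ inside $\SE{}$, use \textbf{A4} to transfer it to a standard sequence in $\NW{}$, and then compare the resulting approximations in the limit. Your write-up is in fact more explicit than the paper's sketch about the inductive invocation of \textbf{A4} and about why the measuring rods $x^s_j$, $x^n_j$ exist; the only cosmetic slip is writing $h^{(n)} = p^1$ exactly rather than approximately, but you correctly handle this via approximations afterwards.
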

\begin{proof}
  By Lemma \ref{lm:7}, $V^{SE}_1 = V^{NW}_1$ on the common domain. Assume $V^{SE}_2(r^2_2) = \lambda, V^{NW}_2(r^2_2) = 1$. We will now show
  that $V^{SE}_2 (x_2) = \lambda V^{NW}_2 (x_2)$ for all $x_2 \in X_2$ from the common domain of these functions. Construct a standard
  sequence within $\SE{z}$, this time on $X_2$. By \textbf{A4}, it is also a sequence in $\NW{}$. We obtain

  \begin{equation*}
    \begin{aligned}
      \lambda = & V^{SE}_2(r^2_2) \approx  n[V^{SE}_1(x^s_1) - V^{SE}_1(x^s_1)] \\
      & V^{SE}_2(x_2) \approx m[V^{SE}_1(x^s_1) - V^{SE}_1(x^s_1)]  \approx \frac{\lambda m}{n}\\
    \end{aligned}
  \end{equation*}

  By the argument above we get 
    \begin{equation*}
    \begin{aligned}
      1 = & V^{NW}_2(r^2_2) \approx  n[V^{NW}_2(x^n_1) - V^{NW}_2(x^n_1)] \\
      & V^{NW}_2(x_2) \approx m[V^{NW}_1(x^n_1) - V^{NW}_1(x^n_1)]  \approx \frac{m}{n}\\
    \end{aligned}
  \end{equation*}
  From this in the limit we obtain $V^{SE}_2 (x_2) = \lambda V^{NW}_2 (x_2)$ on all $x_2 \in X_2$ from the common domain of $V^{SE}_2$ and $V^{NW}_2$.
\end{proof}

At this point we can drop superscripts and say that we have representations $V_1 + V_2$ on $\SE{}$ and $V_1 + \lambda V_2$ on $\NW{}$. Fix
two non-extreme points in $\Theta: r^0$ and $r^1$, such that $r^1_1 \cgeq_1 r^0_1$ and $r^1_2 \cgeq_2 r^0_2$. If such points do not exist,
then by Lemma \ref{lm:10} triple cancellation holds everywhere and $\cgeq$ can be represented by an additive function (i.e. $\lambda
=1$). Rescale $V_1$ and $V_2$ so that $V_1(r^0_1) = 0, V_2(r^0_2) = 0, V_1(r^1_1) = 1$. Assume that after rescaling we get
$V_1(r^1_2)=k$. Define $\phi_2(x_2) = V_2(x_2)/k$, i.e. $\phi_2(r^1_2) = 1$. Define $\phi_1(x_1) = V_1(x_1)$. Thus we get representations
$\phi_1 + k\phi_2$ on $\SE{}$ and $\phi_1 + \lambda k \phi_2$ on $\NW{}$.  Finally rescale in the following way: $\frac{1}{1+k}\phi_1 +
\frac{k}{1+k}\phi_2$ on $\SE{}$ and $\frac{1}{1+\lambda k}\phi_1 + \frac{\lambda k}{1 + \lambda k}\phi_2$ on $\NW{}$. We have thus defined
the following representations:
\begin{equation}
  \label{eq:phi_def}
  \begin{aligned}
    \phi^{SE}(x) & = \frac{1}{1+k}\phi_1(x_1) + \frac{k}{1+k}\phi_2(x_2) \\
    \phi^{NW}(x) & = \frac{1}{1+\lambda k}\phi_1(x_1) + \frac{\lambda k}{1 + \lambda k}\phi_2(x_2).
  \end{aligned}
\end{equation}

Note, that it follows that $\phi^{SE}(r^1) = \phi^{NW}(r^1) = 1$.

\subsubsection{One area has a single essential coordinate}
\label{sec:areas-with-single}

Assume $\SE{}$ has two essential coordinates and $\NW{}$ only has $X_1$ essential. After an additive representation $V^{SE}$ has been built
on $\SE{}$, and re-scaled as in (\ref{eq:phi_def}) we have values $\phi_1$ and $\phi_2$ for all points in $\SE{}$, in particular those in
$\Theta$. Let $\phi^{NW}(x) = \phi_1(x_1) + 0\phi_2(x_2)$ (in other words, set $\lambda = 0$ in (\ref{eq:phi_def})) for those $x_i$ where
$\phi_i$ are defined. By structural assumption, bi-independence and additivity $\phi^{NW}$ represents $\cgeq$ on those points for which it
is defined. For example, let $ap, bp \in \NW{}$ be such that $ap \cgt bp$. Since both coordinates are essential on $\SE{}$ by
bi-independence we get also $aq \cgt bq$ for all $q \in X_2$ such that $aq, bq \in \SE{}$. Additivity implies $\phi_1(a) > \phi_1(b)$. For
the remaining $x_1 \in X_1$, i.e. for $x_1 \in X_1$ such that there are no points in $\Theta$ first coordinate of which is $x_1$, build a
simple ordinal representation. Lemma \ref{lm:34} shows that values for all $x_2 \in X_2$ have already been defined at this point. Other cases are similar.
% \linebreak \linebreak
% \hl{!!!!!!!!!!!!!!!!!!!!!!!!!!!MOVE!!!!!!!!!!!!!!!!!!!!!!!!!!!!!}\\
% By Theorem \ref{theo:3} all $x \in \Theta$ get the same value in both $\NW{}$ and $\SE{}$ - $\phi^{NW}(x) = \phi^{SE}(x)$ for all $x \in
% \Theta$.
% \hl{!!!!!!!!!!!!!!!!!!!!!!!!!!!MOVE!!!!!!!!!!!!!!!!!!!!!!!!!!!!!}\\

% Now assume $\SE{}$ only has one essential coordinate and $\NW{}$ also has only one essential coordinate. The procedure is effectively the
% same - first build an ordinal representation on $\SE{}$ and then, using points from $\Theta$, set values for the remaining points in $X_1$
% and $X_2$. 

\subsubsection{Both areas have a single essential coordinate}
\label{sec:both-areas-have}

An interesting result is that \textbf{A3} is sufficient for characterization of cases where both $\SE{}$ and $\NW{}$ have one essential
coordinate. There are two cases in total:
\begin{enumerate}
\item $X_1$ is essential on $\NW{}$, $X_2$ is essential on $\SE{}$;
\item $X_2$ is essential on $\NW{}$, $X_1$ is essential on $\SE{}$.
\end{enumerate}

We will need the following lemma.

\begin{lemma}
  \label{lm:25}
  Let $X_1$ be essential on $\NW{}$ and $X_2$ be essential on $\SE{}$ or $X_2$ be essential on $\NW{}$ and $X_1$ be essential on
  $\SE{}$. Then, either for all $x \in \SE{}$ exists $z \in \Theta$ such that $z \sim x$, or for all $y \in \NW{}$ exists $z \in \Theta$
  such that $z \sim y$.
\end{lemma}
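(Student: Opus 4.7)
The plan is to argue by contradiction: assume both alternatives in the conclusion fail, so there exist $x \in \SE{}$ with no $\Theta$-equivalent and $y \in \NW{}$ with no $\Theta$-equivalent, then combine the closedness extension of $\SE{}$ and $\NW{}$ with the single-coordinate non-essentiality hypothesis to produce a $\Theta$-point equivalent to $x$. I work out case 1 in full ($X_1$ essential on $\NW{}$, $X_2$ essential on $\SE{}$, hence $X_2$ non-essential on $\NW{}$ and $X_1$ non-essential on $\SE{}$); case 2 is symmetric with the roles of the two factors swapped.

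Starting from the bad $x$: since $x \notin \Theta$ we have $x_1 x_2 \notin \NW{}$, so the ``same second coordinate'' closedness extension applies at $p = x_2$ provided there is also some $b \in X_1$ with $b x_2 \notin \SE{}$. If such a $b$ exists, closedness delivers $c$ with $c x_2 \in \Theta$, and then $c x_2$ and $x$ both lie in $\SE{}$ sharing their second coordinate, so non-essentiality of $X_1$ on $\SE{}$ forces $c x_2 \sim x$, contradicting the badness of $x$. Hence the extra existence hypothesis must fail, so $b x_2 \in \SE{}$ for every $b \in X_1$. A symmetric application to the bad $y$, via the ``same first coordinate'' closedness extension at $a = y_1$ combined with non-essentiality of $X_2$ on $\NW{}$, yields $y_1 p \in \NW{}$ for every $p \in X_2$.

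Combining the two slice statements, the point $y_1 x_2$ belongs to $\SE{}$ (as a member of the horizontal slice at $x_2$) and to $\NW{}$ (as a member of the vertical slice at $y_1$), hence $y_1 x_2 \in \Theta$ by definition. But $y_1 x_2$ and $x$ share their second coordinate and both lie in $\SE{}$, so non-essentiality of $X_1$ on $\SE{}$ gives $y_1 x_2 \sim x$, contradicting the assumption that $x$ has no $\Theta$-equivalent. The main subtlety in setting this up is pairing each side of the disjunctive conclusion with the correct direction of the closedness extension and with the correct non-essentiality axis, so that the final equivalence $y_1 x_2 \sim x$ (or $x_1 y_2 \sim x$ in case 2) can actually be invoked; once those pairings are chosen consistently with the case, the rest reduces to a short structural combination using Lemma \ref{lm:5}.
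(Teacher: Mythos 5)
Your proof is correct, but it takes a genuinely different route from the paper's. The paper fixes one case, supposes some $x \in \SE{}$ strictly dominates every element of $\Theta$, and then argues in two stages about the \emph{order} structure of $\Theta$: first that no $y \in \NW{}$ can lie strictly outside the range of $\Theta$ (via \textbf{A3} applied to $x_1y_2$ plus closedness), and second that $\Theta$ has no ``gaps'' relative to $\NW{}$; the remaining positions of the bad point (below the range, or inside a gap of $\Theta$ on the $\SE{}$ side) are dispatched by an appeal to symmetry. You instead assume both a bad $x \in \SE{}$ and a bad $y \in \NW{}$ simultaneously, use the closedness extension together with single essentiality to show that the entire horizontal slice through $x$ lies in $\SE{}$ and the entire vertical slice through $y$ lies in $\NW{}$, and then observe that the crossing point $y_1x_2$ lies in $\Theta$ and is forced by non-essentiality of $X_1$ on $\SE{}$ to be equivalent to $x$. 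Your version is purely structural: it never invokes maximal elements, ranges, or gaps of $\Theta$, treats all positions of the two bad points uniformly, and is symmetric between the two disjuncts, so nothing is left to ``other cases are symmetrical'' beyond swapping the roles of the coordinates; the paper's version yields the gaplessness of $\Theta$ relative to $\NW{}$ as a byproduct but is less self-contained. Two small remarks: you (like the paper) rely on the section's standing hypothesis that each of $\SE{}$ and $\NW{}$ has exactly \emph{one} essential coordinate, which is not literally in the lemma's statement, so it is worth flagging; and Lemma \ref{lm:5} is not actually needed anywhere in your argument, since the closedness extension and $\Theta = \SE{} \cap \NW{}$ do all the work.
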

\begin{proof}
  We only consider one case, others being symmetrical. Assume $X_1$ is essential on $\SE{}$ and $X_2$ is essential on $\NW{}$. Assume also
  there exists $x \in \SE{}$ such that $x \cgt z$ for all $z \in \Theta$, in particular some maximal $z^{\max}$. We will show that this
  implies that there does not exist $y \in \NW{}$ such that $y \cgt z$ or $z \cgt y$ for all $z \in \Theta$.

  Assume such $y$ exists. Take $x_1y_2$. By \textbf{A3} it belongs either to $\SE{}$ or $\NW{}$. If it belongs to $\NW{}$, by closedness
  assumption exists $x_1t \in \Theta$. We get $x_1t \sim x \cgt z^{\max}$ - a contradiction. If $x_1y_2 \in \SE{}$, exists $ay_2 \in
  \Theta$. We have $y \sim ay_2$ which contradicts both $y \cgt z$ and $z \cgt y$ for all $z \in \Theta$.

  Finally, we need to show that $\Theta$ does not have gaps. Assume there exists $y \in \NW{}$ and $z^1,z^2 \in \Theta$ such that $z^1 \cgt
  y \cgt z^2$ but there is no $z \in \Theta$ such that $z \sim y$. Since only $X_2$ is essential on $\NW{}$, we get $z^1_1y_2 \in
  \SE{}$. By closedness assumption exists $x_1 \in X_1$ such that $x_1y_2 \in \Theta$. Since only $X_2$ is essential we conclude $x_1y_2 \sim y$,
  which is a contradiction. Therefore, for every $y \in \NW{}$ there exists $z \in \Theta$ such that $y \sim z$.
\end{proof}

\paragraph{Defining value functions.}
\label{sec:defin-value-funct-1}

Lemma \ref{lm:25} guarantees that for all points in $\SE{}$ or all points in $\NW{}$ exists an equivalent point in $\Theta$. Assume for example
that $\Theta$ is such that for all $x \in \SE{}$ exists $z \in \Theta$. Assume also that $X_1$ is essential on $\SE{}$ and $X_2$ is
essential on $\NW{}$. Now define value functions $\phi_1:X_1 \rightarrow \mathbb{R}$ and $\phi_2:X_2 \rightarrow \mathbb{R}$ as follows.
Choose $\phi_1$ to be any real-valued function such that $\phi_1(x_1) > \phi_1(y_2)$ iff $x_1 \cgeq_1 y_1$. Now for all
$z$ from $\Theta$ set $\phi_2(z_2) = \phi_1(z_1)$. Finally, extend $\phi_2$ to the whole $X_2$ by choosing any function such that
$\phi_2(x_2) > \phi_2(y_2)$ iff $x_2 \cgeq_2 y_2$. Lemma \ref{lm:34} guarantees that the functions have been defined for
all $x_1 \in X_1$ and all $x_2 \in X_2$.

\subsubsection{Areas without essential coordinates}
\label{sec:areas-without-ess}

\begin{lemma}
  \label{lm:14}
  If \textbf{A1 - A9} and the structural assumption hold, there can not be $\NW{z}(\SE{z})$ with no essential coordinates.
\end{lemma}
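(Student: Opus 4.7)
The plan is to argue by contradiction. Suppose $\NW{z}$ has no essential coordinates (the $\SE{z}$ case being symmetric). By the definition of essentiality, every pair of points of $\NW{z}$ differing in exactly one coordinate is $\sim$-equivalent; transitivity of $\sim$ then forces all of $\NW{z}$ to collapse into a single $\sim$-class.

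The first step would be to invoke bi-independence (axiom \textbf{A6}), as packaged in Lemma \ref{lm:29}, to propagate this non-essentiality outward. If $X_1$ were essential at some pair in $\NW{}$, then by Lemma \ref{lm:29} every strict pair $a \cgt_1 b$ with $ap, bp \in \NW{}$ would satisfy $ap \cgt bp$. Placed inside $\NW{z}$ this would contradict the triviality just established, provided $\{a \in X_1 : a \cleq_1 z_1\}$ contains more than one element. The structural assumption rules out $a \sim_1 b$ for distinct $a,b$, so the only escape is that $z_1$ is the unique $\cgeq_1$-minimum of $X_1$. A symmetric argument for the second coordinate gives: either $X_2$ is non-essential on $\NW{}$, or $z_2$ is the unique $\cgeq_2$-maximum.

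I would then work through the resulting cases. The fully degenerate case, where $z_1$ is the minimum and $z_2$ is the maximum, reduces $\NW{z}$ to $\{z\}$; this I would rule out by noting that \textbf{A7} together with the structural assumption provides some $a \cgt_1 z_1$, and the closedness assumption applied to the row $X_1 \times \{z_2\}$ forces points into $\NW{}$ on which bi-independence would yield essentiality, a contradiction. The mixed cases (one coordinate at an extreme, the other non-essential on all of $\NW{}$) are handled similarly: the structural assumption and \textbf{A7} produce an essential pair somewhere in $X$, and restricted solvability together with \textbf{A5} (as packaged in Lemma \ref{lm:A5}) transport it into the $\NW{}$-row/column where it ought not to exist. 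The central case is when neither coordinate is essential anywhere on $\NW{}$; then $\NW{}$, and hence $\Theta \subseteq \NW{}$, is $\sim$-trivial. By \textbf{A7} some essential pair exists in $X$, and by $X = \SE{} \cup \NW{}$ (Lemma \ref{lm:5}) it lives in $\SE{}$. I would then use bi-independence on $\SE{}$ together with the closedness assumption to propagate that essentiality into $\Theta$, contradicting its $\sim$-triviality.

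The hard part will be this last transport: converting essentiality existing somewhere in $\SE{}$ into essentiality on $\Theta$. The closedness assumption yields $\Theta$-points only in those specific columns/rows that meet both $\SE{} \setminus \NW{}$ and $\NW{} \setminus \SE{}$, so one must arrange the essential pair to fall into such a column. I expect to need bi-independence on $\SE{}$ to migrate essentiality across columns while staying in $\SE{}$, combined with Lemma \ref{lm:10} (whose conclusion would make $\SE{} = \NW{} = X$ if $\Theta$ were ``too thick'', an alternative degeneracy that must also be excluded to finish the contradiction).
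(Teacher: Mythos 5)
Your plan for the main case is the same as the paper's in outline --- propagate non-essentiality over all of $\NW{}$ by bi-independence, use \textbf{A7} and $X=\SE{}\cup\NW{}$ to locate essentiality in $\SE{}$, then use closedness to produce points of $\Theta$ on which $\NW{}$-triviality and $\SE{}$-essentiality collide --- but the step you defer as ``the hard part'' is the entire content of the proof, and your proposal does not supply it. The paper's construction is short and does not need Lemma \ref{lm:10}, restricted solvability, or Lemma \ref{lm:A5}. Take $ap\sim bp\in\NW{z}$ with $a\neq b$. The structural assumption yields $q$ with $aq\cgt bq$; since no coordinate is essential on $\NW{}$, $aq$ and $bq$ cannot both lie in $\NW{}$, so they lie in $\SE{}$. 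Closedness applied to the rows of $a$ and of $b$ (each row meets $\NW{}$ at $p$ and leaves it at $q$) gives $w,z'\in X_2$ with $aw,bz'\in\Theta$. Triviality of $\NW{}$ forces $aw\sim bz'$, while independence inside $\SE{}$ transfers $aq\cgt bq$ to $aw\cgt bw$, whence $bz'\cgt bw$ and $z'\cgeq_2 w$ strictly. The single point $az'$ then lies both in $\SE{bz'}$ (because $a\cgeq_1 b$) and in $\NW{aw}$ (because $z'\cgeq_2 w$); independence in $\SE{}$ gives $az'\cgt aw$ from $bz'\cgt bw$, while triviality of $\NW{}$ gives $az'\sim aw$. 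That is the contradiction, and it is exactly the ``transport into $\Theta$'' you leave open. Without it your argument is a plan, not a proof.

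On your preliminary case split: your worry that Lemma \ref{lm:29} only bites when $\NW{z}$ contains a non-trivial pair in the relevant coordinate is legitimate --- the paper's one-line assertion that bi-independence and the structural assumption kill essentiality on \emph{every} $\NW{z}$ silently assumes the cone is non-degenerate, and if $\NW{z}$ collapses to a single row, column, or point, ``no essential coordinate'' is vacuous and the lemma must be read as applying only to cones with a non-trivial section. So the extra care is not misplaced. However, the machinery you propose for the mixed and degenerate cases (Lemma \ref{lm:A5}, restricted solvability, an auxiliary appeal to Lemma \ref{lm:10}) is heavier than needed and is likewise only sketched: once $\NW{z}$ contains two points differing in one coordinate the single argument above already applies verbatim, and if it does not, there is nothing to disprove.
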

\begin{proof}
  Assume for some $z \in \Theta$ the set $\NW{z}$ has no essential coordinates. By bi-independence and the structural assumption
  it follows that there are no essential coordinates on any $\NW{z}$. This implies (by \textbf{A7}) that both coordinates are essential on
  $\SE{}$. Take $ap,bp \in \NW{z}$. Apparently, $ap \sim bp$. By structural assumption there must exist $q \in X_2$ such that $aq \cgt
  bq$. It can't be that $aq, bq \in \NW{}$, hence $aq, bq \in \SE{}$.

  By closedness assumption there exist $w, z \in X_2$ such that $aw, bz \in \Theta$. Also, since no coordinate is essential in $\NW{}$ we have
  $aw \sim bz$.  Since $aq \cgt bq$ it must be $bz \cgt bw$, since otherwise by strict monotonicity (Lemma \ref{lm:29}) it can't be that $aw
  \sim bz$.

  By independence we have $aw \cgt bw$. By definition of $\NW{}$ and $\SE{}$ we have $aw \in \SE{}$ (since by weak separability $a \cgeq_1
  b$) and $aw \in \NW{}$ (since by weak separability $z \cgeq_2 w$). Hence, by independence it must
  be $az \cgt aw$ (since $az \in \SE{}$) and $az \sim aw$ (since $az \in \NW{}$). We have arrived at a contradiction.
\end{proof}
\section{Properties of the intersection of $\SE{}$ and $\NW{}$}
\label{sec:0-set}

\begin{lemma}
  \label{lm:12}
  For any non-extreme $x \in X$ we have:
  \begin{equation*}
    x \in \Theta \Rightarrow \phi_1(x_1) = \phi_2(x_2),
  \end{equation*}
unless $\cgeq$ can be represented by an additive function (i.e $\lambda = 1$ in (\ref{eq:phi_def})).
\end{lemma}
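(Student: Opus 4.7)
The plan is to exploit that both additive representations $\phi^{SE}$ and $\phi^{NW}$ are valid on $\Theta = \SE{} \cap \NW{}$ and are normalized so as to agree at $r^0$ (value $0$) and at $r^1$ (value $1$). A direct calculation gives, for any $x \in \Theta$,
\[
\phi^{SE}(x) - \phi^{NW}(x) = \frac{(1-\lambda)\,k\,[\phi_2(x_2) - \phi_1(x_1)]}{(1+k)(1+\lambda k)},
\]
so $\phi^{SE}(x) = \phi^{NW}(x)$ iff $\lambda = 1$ or $\phi_1(x_1) = \phi_2(x_2)$. It therefore suffices to prove that, whenever $\lambda \ne 1$, the two representations coincide at every non-extreme point of $\Theta$; the conclusion $\phi_1(x_1) = \phi_2(x_2)$ then follows from the algebraic identity.

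Suppose for contradiction that $\lambda \ne 1$ and $\phi^{SE}(x_*) \ne \phi^{NW}(x_*)$ for some non-extreme $x_* \in \Theta$; say $\phi^{SE}(x_*) < \phi^{NW}(x_*)$. The idea is to produce a point $y \in \Theta$ with $\phi_1(y_1) = \phi_2(y_2) = s$ for some $s$ strictly between $\phi^{SE}(x_*)$ and $\phi^{NW}(x_*)$. For such a $y$ the identity above gives $\phi^{SE}(y) = \phi^{NW}(y) = s$. Comparing $x_*$ with $y$ through the $\SE{}$-representation forces $y \cgt x_*$, while comparing them through the $\NW{}$-representation forces $x_* \cgt y$. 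Since both functions represent the same weak order $\cgeq$ on $\Theta$, this is a contradiction.

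The technical heart is constructing the diagonal point $y$. Starting from $r^0$, I build two matched standard sequences $\{g_1^{(i)}\} \subset X_1$ and $\{h_2^{(i)}\} \subset X_2$ entirely in $\SE{}$, scaled so that $g_1^{(n)} = r^1_1$ and $h_2^{(n)} = r^1_2$; by our construction of $\phi^{SE}$ this forces $\phi_1(g_1^{(i)}) = \phi_2(h_2^{(i)}) = i/n$ for each $i$. Lemma \ref{lm:A5} ensures that these remain matched standard sequences when transported into $\NW{}$ via the equivalence relations, and the closedness extension of $\SE{}$ and $\NW{}$ forces each pair $(g_1^{(i)}, h_2^{(i)})$ to lie in $\Theta$. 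By denserangedness and the Archimedean axiom, refining the mesh produces a family of diagonal points of $\Theta$ whose $\phi$-values are dense in $[0,1]$, and the construction is extended above $1$ and below $0$ by prolonging the sequences in the usual way. Picking the diagonal point closest to $s$ and invoking restricted solvability then yields a $y \in \Theta$ with $\phi_1(y_1) = \phi_2(y_2) = s$.

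The main obstacle is the last step: showing that the matched pairs $(g_1^{(i)}, h_2^{(i)})$ really lie in $\Theta$ and not merely in $\SE{} \cup \NW{}$. This is exactly where Lemma \ref{lm:A5} and the closedness extension combine: matching on one side propagates to the other, and closedness prevents the pair from escaping into only one of the cones. Once this density of diagonal points in $\Theta$ is established, the final contradiction argument is immediate, and the conclusion $\phi_1(x_1) = \phi_2(x_2)$ for every non-extreme $x \in \Theta$ follows directly from the algebraic identity, with the exceptional case $\lambda = 1$ corresponding precisely to the situation where $\cgeq$ admits an additive representation on all of $X$.
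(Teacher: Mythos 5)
Your reduction is sound as far as it goes: the algebraic identity
\begin{equation*}
  \phi^{SE}(x) - \phi^{NW}(x) = \frac{(1-\lambda)\,k\,[\phi_2(x_2) - \phi_1(x_1)]}{(1+k)(1+\lambda k)}
\end{equation*}
is correct, so the lemma is indeed equivalent to showing $\phi^{SE}(x) = \phi^{NW}(x)$ for every non-extreme $x \in \Theta$ when $\lambda \neq 1$, and your final contradiction (comparing $x_*$ with a diagonal point $y$ once through $\phi^{SE}$ and once through $\phi^{NW}$) would be valid if such a $y$ existed. The gap is precisely at the step you yourself flag as the main obstacle, and it is not repaired by the tools you invoke. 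To get the contradiction you need a point $y \in \Theta$ with $\phi_1(y_1) = \phi_2(y_2) = s$ for a prescribed $s$. But the claim that a matched pair $(g_1^{(i)}, h_2^{(i)})$ with $\phi_1(g_1^{(i)}) = \phi_2(h_2^{(i)})$ lies in $\Theta$ is essentially the conjunction of Lemma \ref{lm:12} with its converse (Lemma \ref{lm:24}, which the paper proves \emph{after} and \emph{from} Lemma \ref{lm:12}). The closedness extension only guarantees that \emph{some} point $c\,h_2^{(i)}$ on the line through $h_2^{(i)}$ belongs to $\Theta$; it says nothing about whether that point's first coordinate has $\phi_1$-value $i/n$, which is exactly what is in question. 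Lemma \ref{lm:A5} keeps matched standard sequences matched across the two cones, but it does not place any constructed point in $\Theta$. So the construction of $y$ is circular, and the proof does not close.

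For contrast, the paper avoids constructing new diagonal points altogether. It takes the given $x \in \Theta$ and projects it (and $r^1$) onto the two axes through $r^0$, obtaining $\pi_1(x)r^0_2 \sim x \sim r^0_1\pi_2(x)$ with the first equivalence measurable inside $\SE{}$ and the second inside $\NW{}$. Two standard sequences, $\{\alpha^{(i)}_1\}$ on $X_1$ in $\SE{}$ and $\{\beta^{(i)}_2\}$ on $X_2$ in $\NW{}$, are matched via \textbf{A5} (Lemma \ref{lm:A5}); the single calibration $\phi^{SE}(r^1) = \phi^{NW}(r^1) = 1$ at the known $\Theta$-point $r^1$ yields equation (\ref{eq:1}) relating the step sizes, and counting steps to $\pi_1(x)$ and $\pi_2(x)$ then gives $\phi^{SE}(x) = \phi^{NW}(x)$ directly, from which your identity finishes the argument. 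If you want to salvage your route, you would have to prove the density of diagonal $\Theta$-points by an argument of this measuring type first --- at which point you have already proved the lemma the paper's way. Note also that your sketch does not treat the case where $\SE{}$ or $\NW{}$ has only one essential coordinate, nor the case where the relevant solvability solutions fail to exist; the paper handles both separately.
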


For the case when both $\NW{}$ and $\SE{}$ have a single essential coordinate the result holds by definition of $\phi_i$, so for the
remainder of this section we assume that $\SE{}$ or $\NW{}$ has two essential coordinates.

\begin{figure}[h!]

\begin{tikzpicture}[scale=1.5]

    % invis axes
    \path [name path=axis] (-1,5) |- (5.5,0) ;

    \coordinate (r0) at (1,1);
    \draw[dashed, name path=r0xaxis] (r0) -- (5.5,1) coordinate (r0x5);
    \draw[dashed, name path=r0yaxis] (r0) -- (1,4.7) coordinate (r0y5);
    % Draw two intersecting lines
    \draw[name path=diag] (0,0) coordinate (a_1) -- (4,4) coordinate (a_2);
    \path (1.5,0) coordinate (b_1) -- (1,1) coordinate (b_2) -- (0,1.3) coordinate (b_3);
    % Calculate the intersection of the lines a_1 -- a_2 and b_1 -- b_2
    % and store the coordinate in c.
    \coordinate (c) at (intersection of a_1--a_2 and b_1--b_2);

    \fill[black] (r0) circle (1.5pt) node [below left=4pt and -3pt] {$r^0$};

    \draw [dashed] (1.5,0) -- ($(r0)!(b_1)!(r0x5)$);
    \fill [black] ($(r0)!(b_1)!(r0x5)$) circle (1pt);% node [font=\scriptsize, above right] {$\alpha^{(1)}_1$};
    \draw let \p1 = (b_1) in (\x1,1.5pt) -- (\x1,-1.5pt) node[font=\scriptsize, below] {$\alpha^{(1)}_1$};

    % \coordinate (r1) at (3.8,3.8);
    % \fill[black] (r1) circle (1.5pt) node [above=4pt] {$r^1$};

    \coordinate (dir1) at ($4*(b_2)-4*(b_1)$);
    \coordinate (dir2) at ($4*(b_3)-4*(b_2)$);

    \foreach  \t in {1.5, 2, ..., 4}
    {
    \coordinate (t) at (\t,0);
    \path[name path=line1_\t] (t) -- +(dir1);
    \draw[name intersections={of=diag and line1_\t,by={Int1}}] (t) -- (Int1);
    \path[name path global/.expanded=line2_\t] (Int1) -- +(dir2);
    \path[name intersections={of=axis and line2_\t,by={Int2}}] (Int1) -- (Int2);
  }
  
    \path[name path=line1] (t) -- +(dir1);
 
    \path[name intersections={of=line2_2 and r0yaxis, by=beta1}, name path={beta1_y}] (beta1) -- ++(-5,0);
    \path[name intersections={of=beta1_y and line2_1.5, by={b1y_1}}];
    \fill[black] (beta1) circle (1pt);
    \draw[dashed] (beta1) -- (b1y_1);
    \draw ($(b1y_1) - (1.5pt,0)$) -- ($(b1y_1) - (-1.5pt,0)$) node[font=\scriptsize, left=3pt] {$\beta^{(1)}_1$};

%    \fill [name intersections={of=r0yaxis and line2, by={b_5}}] (b_5) circle (1pt);

    % vis axes
    \draw [<->,thick, name path=axis] let \p1 = (b1y_1) in (\x1,5) node (yaxis) [left] {$X_2$}
        |- (5.5,0) node (xaxis) [below] {$X_1$};

    \foreach  \t in {1.5, 2, ..., 4}
    {
    \coordinate (t) at (\t,0);
    \path[name path global/.expanded=line1_\t] (t) -- +(dir1);
    \path[name intersections={of=diag and line1_\t,by={Int1}}] (t) -- (Int1);
    \path[name path global/.expanded=line2_\t] (Int1) -- +(dir2);
    \draw[name intersections={of=axis and line2_\t,by={Int2}}] (Int1) -- (Int2);
  }
  \fill[black] (Int1) circle (1.5pt) node [below left=4pt and -3pt] {$x$};
% x projection onto X1
  \path[name intersections={of=line1_4 and r0xaxis, by=px}];
  \fill[black] (px) circle (1pt);
  \draw let \p1 = (px) in (\x1,1.5pt) -- (\x1,-1.5pt) node[font=\scriptsize, below] {$\pi_1(x)$};
% x projection onto X2
  \path[name intersections={of=line2_4 and r0yaxis, by=py}];
  \fill[black] (py) circle (1pt);
  \path (py) -| (py -| b1y_1) coordinate (py_ax);
  \draw ($(py_ax) - (1.5pt,0)$) -- ($(py_ax) - (-1.5pt,0)$) node[font=\scriptsize, left=3pt] {$\pi_2(x)$};

    \coordinate (t) at (4.5,0);
    \path[name path=line1] (t) -- +(dir1);
    \draw[name intersections={of=diag and line1,by={Int1}},name path=sl6] (t) -- (Int1);
    \path [name intersections={of=r0xaxis and sl6, by={a_6}}] (a_6) -- (Int1);

    \path[name path=line2] (Int1) -- +(dir2);
    \draw[name intersections={of=axis and line2,by={Int2}}] (Int1) -- (Int2);
        
    \coordinate (t) at (5,0);
    \path[name path=line1_5] (t) -- +(dir1);
    \draw[name intersections={of=diag and line1_5,by={Int1}}, name path=sl7] (t) -- (Int1);
    \fill[black] (Int1) circle (1.5pt) node [below left=4pt and -3pt] {$r^1$};
    \path[name path=line2_5] (Int1) -- +(dir2);
    \draw[name intersections={of=axis and line2_5,by={Int2}}] (Int1) -- (Int2);
% r1 projection on X1
    \path[name intersections={of=line1_5 and r0xaxis, by=pr1}];
    \fill[black] (pr1) circle (1pt);
    \draw let \p1 = (pr1) in (\x1,1.5pt) -- (\x1,-1.5pt) node[font=\scriptsize, below] {$\pi_1(r^1)$};
% r1 projection onto X2
  \path[name intersections={of=line2_5 and r0yaxis, by=pr1_2}];
  \fill[black] (pr1_2) circle (1pt);
  \path (pr1_2) -| (pr1_2 -| b1y_1) coordinate (pr1_ax);
  \draw ($(pr1_ax) - (1.5pt,0)$) -- ($(pr1_ax) - (-1.5pt,0)$) node[font=\scriptsize, left=3pt] {$\pi_2(r^1)$};

%    \draw[dashed] (r0) -- ($(0,0)!(r0)!(0,5)$);
    % \draw let \p1 = (yaxis) in ($3 + 1.5pt$,0) -- ($3 - 1.5pt$,\y1) node[font=\scriptsize, left] {$y_1$};
    % \draw let \p1 = (0,0) in (1.5pt,\y1) -- (-1.5pt,\y1) node[font=\scriptsize, left] {$y_0$};
    
    \coordinate (t) at (5.5,0);
    \path[name path=line1] (t) -- +(dir1);
    \path[name intersections={of=diag and line1,by={Int1}}, name path=sl8] (t) -- (Int1);

\end{tikzpicture}

  \caption{Lemma \ref{lm:12}}
  \label{fig:theo3_easy}
\end{figure}

\begin{proof}

  % First assume that $x$ is non-boundary.
  We start with a case where both coordinates are essential on $\SE{}$ and $\NW{}$. Assume also $x \cgeq r^0$ (without loss of generality,
  other cases are symmetrical and can be proved by the same technique). We are going to show that $x \in \Theta \Rightarrow \phi_1(x_1) =
  \phi_2(x_2)$ or $\lambda = 1$.

Assume for now that we can find the following solutions:
 
\begin{itemize}
\item Solve for $\pi_1(r^1): \pi_1(r^1)r^0_2 \sim r^1$.
\item Solve for $\pi_2(r^1): r^0_1\pi_2(r^1) \sim r^1$.
\item Solve for $\pi_1(x):  \pi_1(x)r^0_2 \sim x$.
\item Solve for $\pi_2(x): r^0_1\pi_2(x) \sim x$.
\end{itemize}

Pick $\alpha^{(1)}_1 \in X_1$ such that $r^1_1 \cgeq_1 \alpha^{(1)}_1$, it exists by denserangedness. Solve for $\beta^{(1)}_2 :
\alpha^{(1)}_1 r^0_2 \sim r^0_1 \beta^{(1)}_2$, exists by restricted solvability.

Now build an increasing standard sequence $\alpha^{(i)}_1: \alpha^{(0)}_1 = r^0_1$ on $X_1$ which lies in $\SE{}$ and an increasing standard
sequence $\beta^{(i)}_2: \beta^{(0)}_2 = r^0_2$ on $X_2$ which lies in $\NW{}$ (see Fig. \ref{fig:theo3_easy}).  Since $\pi_1(r^1)r^0_2 \sim
r^0_1\pi_2(r^1)$, by \textbf{A5} (Lemma \ref{lm:A5}) we have for some $m$:
\begin{equation*}
  \alpha^{(m-1)}_1r^0_2 \cgeq \pi_1(r^1)r^0_2 \cgeq \alpha^{(m)}_1 r^0_2  \iff r^0_1\beta^{(m-1)}_2 \cgeq r^0_1\pi_2(r^1) \cgeq r^0_1 \beta^{(m)}_2.
\end{equation*}

From this (and since $\phi_1(r^0_1) = \phi_2(r^0_2) = 0$) it follows that $\phi_1(\pi_1(r^1)) \approx m\phi_1(\alpha^{(1)}_1), \phi_2(\pi_2(r^1)) \approx m\phi_2(\beta^{(1)}_2)$ and,
since $\phi^{SE}(\pi_1(r^1)r^0_2) = \phi^{SE}(r^1) = \phi^{NW}(r^1) = \phi^{NW}(r^0_1\pi_2(r^1))$, we obtain: 
\begin{equation}
\label{eq:1}
  \frac{1}{1+k}m\phi_1(\alpha^{(1)}_1)  = \frac{\lambda k}{1 + \lambda k}m\phi_2(\beta^{(1)}_2).
\end{equation}

Similarly, $\pi_1(x)r^0_2 \sim r^0_1\pi_2(x)$, so by \textbf{A5} (Lemma \ref{lm:A5}) we have
\begin{equation*}
  \alpha^{(n-1)}_1r^0_2 \cgeq \pi_1(x)r^0_2 \cgeq \alpha^{(n)}_1 r^0_2  \iff r^0_1\beta^{(n-1)}_2 \cgeq r^0_1\pi_2(x) \cgeq r^0_1 \beta^{(n)}_2.
\end{equation*}

From this follows that $\phi_1(\pi_1(x)) \approx n\phi_1(\alpha^{(1)}_1), \phi_2(\pi_2(x)) \approx n \phi_2(\beta^{(1)}_2)$ and by (\ref{eq:1}) it
follows  that $\phi^{SE}(\pi_1(x)r^0_2) = \phi^{NW}(r^0_1\pi_2(x))$. Hence
\begin{equation*}
  \frac{1}{1+k}\phi_1(x_1) + \frac{k}{1+k}\phi_2(x_2) = \frac{1}{1+\lambda k}\phi_1(x_1) + \frac{\lambda k}{1 + \lambda k}\phi_2(x_2),
\end{equation*}
and so $\phi_1(x_1) = \phi_2(x_2)$ or $\lambda = 1$ (i.e. the structure is additive). 

We need to revisit the case where solutions mentioned in the beginning do not exist. Consider Figure \ref{fig:theo3}. Assume this time that
there does not exist $\pi_1(r^1)$ such that $r^1 \sim \pi_1(r^1)r^0_2$. If we choose the step in the standard sequence $\alpha^{(i)}_1$
small enough so that there exist $\alpha^{(k+1)}_1, \alpha^{(k+2)}_1$ such that $\alpha^{(k+1)}_1r^0_2 \cgeq r^1_0r^0_2$ and
$\alpha^{(k+2)}_2r^0_2 \cgeq r^1_0r^0_2$ (which we can do by non-maximality of $r^1$ and denserangedness of $\cgeq$), then we can ``switch''
from the standard sequence $\alpha^{(i)}_1$ on $X_1$ to the standard sequence $\gamma^{(i)}_2$ on $X_2$ keeping the same increment in value
between subsequent members of the sequence.  Indeed, $\gamma^{(k)}_2: r^1_1\gamma^{(k)}_2 \sim \alpha^{(k)}_1r^0_2$ and $\gamma^{(k+1)}_2:
r^1_1\gamma^{(k+1)}_2 \sim \alpha^{(k+1)}_1r^0_2$ exist by monotonicity and restricted solvability, so does $x_1 \colon x_1\gamma^{(k)}_2 \sim
r^1_1\gamma^{(k+1)}_2 \sim \alpha^{(k+1)}_1r^0_2$ % and $\gamma^{(k+2)}_2: r^1_1 \gamma^{(k+2)}_2 \sim x_1 \gamma^{(k+1)}_2$
, and by \textbf{A5} (Lemma \ref{lm:A5}) we
get $[r^1_1\gamma^{(k)}_2 \sim r^0_1\beta^{(k)} _2, r^1_1\gamma^{(k+1)}_2 \sim r^0_1\beta^{(k+1)}_2] \Rightarrow \gamma^{(i)}_2 \sim
\beta^{(i)}_2 $ for all $i$. Note that $\gamma^{(i)}_2r^1_2$ are in $\SE{}$ for all $i$ such that $r^1 \cgeq \gamma^{(i)}_2$ since $r^0$ and
$r^1$ are in $\SE{}$. By monotonicity and restricted solvability there exists $i$ such that $x_1r^1_2 \cgeq x_1\gamma^{(i+1)}_2 \cgeq r^1
\cgeq x_1\gamma^{(i)}_2$.  Finally, the increment in value is the same between members of $\alpha_i$ and $\gamma^{(i)}_2$ since $\alpha_k
\sim \gamma^{(k)}_2$ and $\alpha_{k+1} \sim \gamma^{(k+1)}_2$. The result then follows as above.

\begin{figure}[h!]

\begin{tikzpicture}[scale=1.5]

    % Draw axes
    \draw [<->,thick, name path=axis] (0,5) node (yaxis) [left] {$X_2$}
        |- (5.5,0) node (xaxis) [below] {$X_1$};

    \coordinate (r0) at (1,1);
    \draw[dashed, name path=r0xaxis] (r0) -- (5.5,1) coordinate (r0x5);
    \draw[dashed, name path=r0yaxis] (r0) -- (1,4.7) coordinate (r0y5);
    % Draw two intersecting lines
    \draw[name path=diag] (0,0) coordinate (a_1) -- (4,4) coordinate (a_2);
    \draw (1.5,0) coordinate (b_1) -- (1,1) coordinate (b_2) -- (0,1.3) coordinate (b_3);
    % Calculate the intersection of the lines a_1 -- a_2 and b_1 -- b_2
    % and store the coordinate in c.
    \coordinate (c) at (intersection of a_1--a_2 and b_1--b_2);

    \fill[black] (r0) circle (1.5pt) node [below left=4pt and -3pt] {$r^0$};

    \draw [dashed] (1.5,0) -- ($(r0)!(b_1)!(r0x5)$);
    \fill [black] ($(r0)!(b_1)!(r0x5)$) circle (1pt);% node [font=\scriptsize, above right] {$\alpha^{(1)}_1$};
    \draw let \p1 = ($(r0)!(b_1)!(r0x5)$) in (\x1,1.5pt) -- (\x1,-1.5pt) node[font=\scriptsize, below] {$\alpha^{(1)}_1$};
    \coordinate (r1) at (3.8,3.8);
    \fill[black] (r1) circle (1.5pt) node [above=4pt] {$r^1$};

    \coordinate (dir1) at ($4*(b_2)-4*(b_1)$);
    \coordinate (dir2) at ($4*(b_3)-4*(b_2)$);

    \foreach  \t in {1.5, 2, ..., 4}
    {
    \coordinate (t) at (\t,0);
    \path[name path=line1] (t) -- +(dir1);
    \draw[name intersections={of=diag and line1,by={Int1}}] (t) -- (Int1);
    \path[name path=line2] (Int1) -- +(dir2);
    \draw[name intersections={of=axis and line2,by={Int2}}] (Int1) -- (Int2);
  }
  
    \path[name path=line1] (t) -- +(dir1);
    \path[name path=p6] (r1) -| (xaxis -| r1);
    \draw[name intersections={of=line1 and p6,by={a2_5}}] (r1) -- (a2_5);

    \fill[black] (a2_5) circle (1pt); % node [font=\scriptsize, above right] {$\gamma^{(k)}_2$};
    \draw let \p1 = (a2_5) in (1.5pt,\y1) -- (-1.5pt,\y1) node[font=\scriptsize, left] {$\gamma^{(k)}_2$};
    \fill [name intersections={of=r0xaxis and line1, by={a_5}}] (a_5) circle (1pt); % node [above right] {\scriptsize $\alpha^{(k)}_1$};
    \draw let \p1 = (a_5) in (\x1,1.5pt) -- (\x1,-1.5pt) node[font=\scriptsize, below] {$\alpha^{(k)}_1$};

    \fill [name intersections={of=r0yaxis and line2, by={b_5}}] (b_5) circle (1pt);% node [font=\scriptsize, above right] {$\beta_i$};
    \draw let \p1 = (b_5) in (1.5pt,\y1) -- (-1.5pt,\y1) node[font=\scriptsize, left] {$\beta^{(k)}_2$};

    \coordinate (t) at (4.5,0);
    \path[name path=line1] (t) -- +(dir1);
    \draw[name intersections={of=diag and line1,by={Int1}},name path=sl6] (t) -- (Int1);
    \path [name intersections={of=r0xaxis and sl6, by={a_6}}] (a_6) -- (Int1);
    \fill[black] (a_6) circle (1pt);% node [font=\scriptsize, below right] {$\alpha^{(k+1)}_1$};
    \draw let \p1 = (a_6) in (\x1,1.5pt) -- (\x1,-1.5pt) node[font=\scriptsize, below left=0pt and -12pt] {$\alpha^{(k+1)}_1$};
    \path[name path=line2] (Int1) -- +(dir2);
    \draw[name intersections={of=axis and line2,by={Int2}}] (Int1) -- (Int2);
    
    \coordinate (a2_6) at (intersection of r1--a2_5 and t--Int1);
%    \fill[black] (a2_6) circle (1pt);% node[font=\scriptsize, above left] {$\gamma^{(k+1)}_2$};
    \draw let \p1 = (a2_6) in (1.5pt,\y1) -- (-1.5pt,\y1) node[font=\scriptsize, left] {$\gamma^{(k+1)}_2$};
    \path [name path = ss2_unit] (a2_5)--+(1,0);
    \draw[dashed,name intersections={of=ss2_unit and sl6,by={ss2}}] (a2_5) -- (ss2);
    \fill[black] (ss2) circle (1pt);
    \draw[dashed, name path = ss2_low] let \p1 = (ss2) in (\x1,4) -- ($(0,0)!(ss2)!(6,0)$);% node[font=\scriptsize, below=6pt] {$x_1$};    
    \draw let \p1 = ($(0,0)!(ss2)!(6,0)$) in (\x1,1.5pt) -- (\x1,-1.5pt) node[font=\scriptsize, below=4.5pt] {$x_1$};
    \draw[dashed] (a2_5) -- ($(0,0)!(a2_5)!(6,0)$);    
    
    \fill [name intersections={of=r0yaxis and line2, by={b_6}}] (b_6) circle (1pt);% node [font=\scriptsize, above right] {$\beta_{i+1}$};
    \draw let \p1 = (b_6) in (1.5pt,\y1) -- (-1.5pt,\y1) node[font=\scriptsize, left] {$\beta^{(k+1)}_2$};

    \coordinate (t) at (5,0);
    \path[name path=line1] (t) -- +(dir1);
    \path[name intersections={of=diag and line1,by={Int1}}, name path=sl7] (t) -- (Int1);
    \path[name path=line2] (Int1) -- +(dir2);
    \draw[name intersections={of=axis and line2,by={Int2}}] (Int1) -- (Int2);

    \coordinate (a2_7) at (intersection of r1--a2_5 and t--Int1);
    \path[name path=tint] (t)--(Int1);
    \path [name intersections={of=ss2_low and tint, by=g2_7}];
    \fill[black] (g2_7) circle (1pt);% node [font=\scriptsize, above right] {$\gamma^{(k+2)}_2$};
  
  \draw let \p1 = (a2_7) in (1.5pt,\y1) -- (-1.5pt,\y1) node[font=\scriptsize, left] {$\gamma^{(k+2)}_2$};
    \draw (a2_7) -- (Int1);
    \draw[name intersections={of=sl7 and r0xaxis, by={a7}}] (a2_7) -- (a7);
    \fill[black] (a7) circle (1pt);% node [font=\scriptsize, above right] {$\alpha^{(k+2)}_1$};
    \draw let \p1 = (a7) in (\x1,1.5pt) -- (\x1,-1.5pt) node[font=\scriptsize, below right] {$\alpha^{(k+2)}_1$};
    \fill [name intersections={of=r0yaxis and line2, by={b_7}}] (b_7) circle (1pt);% node [font=\scriptsize, above right] {$\beta_{i+2}$};
    \draw let \p1 = (b_7) in (1.5pt,\y1) -- (-1.5pt,\y1) node[font=\scriptsize, left] {$\beta^{(k+2)}_2$};

    \draw[dashed] (r0) -- ($(0,0)!(r0)!(0,5)$);% node[left] {$y_1$};
    \draw let \p1 = ($(0,0)!(r0)!(0,5)$) in (1.5pt,\y1) -- (-1.5pt,\y1) node[font=\scriptsize, left] {$r^0_2$};
%    \draw (0,0) node[left] {$y_0$};
    \draw let \p1 = (0,0) in (1.5pt,\y1) -- (-1.5pt,\y1) node[font=\scriptsize, left] {$y_0$};
    %\draw [dashed] (r1) -- ($(r0)!(r1)!(r0x5)$);
    
    \coordinate (t) at (5.5,0);
    \path[name path=line1] (t) -- +(dir1);
    \path[name path=r1_g2_dir] (r1) -- ++($-1*(dir1)$);
    \draw[name intersections={of=r1_g2_dir and ss2_low, by=r1_g2}] (r1) -- (r1_g2);
    \fill[black] (r1_g2) circle (1pt);

    \path[name intersections={of=diag and line1,by={Int1}}, name path=sl8] (t) -- (Int1);
    \coordinate (a8) at (intersection of r1--a2_5 and t--Int1);
    \fill[black] let \p1 = (ss2), \p2 = (a8) in (\x1,\y2) circle (1pt);

    \draw let \p1 = (a8) in (1.5pt,\y1+2pt) -- (-1.5pt,\y1+2pt) node[font=\scriptsize, below left=-4pt and 0pt] {$\gamma^{(k+3)}_2$};
    \path[name intersections={of=ss2_low and line1,by=Int3}];
    \fill[black] (Int3) circle (1pt);
    \draw[name path=sl81] (Int3) -- (Int1);
    \path[name path=line2] (Int1) -- +(dir2);
    \draw[name intersections={of=axis and line2,by={Int2}}] (Int1) -- (Int2);

\end{tikzpicture}

  \caption{Lemma \ref{lm:12} - changing direction}
  \label{fig:theo3}
\end{figure}
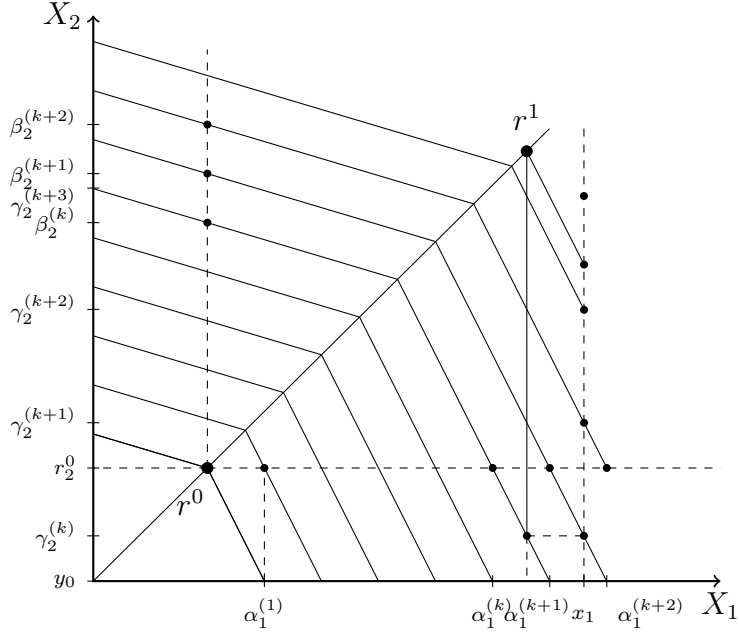

Finally, we look at the case where only one coordinate is essential on either $\NW{}$ or $\SE{}$. First assume that $X_2$ is essential on
$\NW{}$. We defined $\phi^{NW}(x) = 0\phi_1(x_1) + \phi_2(x_2)$. Definition implies $\phi^{NW}(r^0)=0, \phi^{NW}(r^1)=1$. Build a standard
sequence $\{\alpha_1^{(i)}\}$ on $X_1$ from $r^0$ to $r^1$ (in case there exists a solution for $r^1 \sim \pi_1(r^1)r^0_2$, otherwise use the
approach detailed in the previous paragraph), setting $\alpha^{(0)}_1 = r^0_1$. Take $\alpha^{(1)}_1r^0_2$ and $\alpha^{(2)}_1r^0_2$. By restricted
solvability there must exist $\beta^{(1)}_2$ and $\beta^{(2)}_2$, such that $\alpha^{(1)}_1r^0_2 \sim r^0_1\beta^{(1)}_2$ and $\alpha^{(2)}_1r^0_2 \sim
r^0_1\beta^{(2)}_2$.  By closedness assumption for $\beta^{(1)}_2, \beta^{(2)}_2$ there must exist $x_1,x_2$ such that $x_1\beta^{(1)}_2 \in \Theta$,
$x_2\beta^{(2)}_2 \in \Theta $. Also, since only $X_2$ is essential, we get $x_1\beta^{(1)}_2 \sim r^0_1\beta^{(1)}_2$, $x_2\beta^{(2)}_2 \sim
r^0_1\beta^{(2)}_2$. By weak monotonicity and definition of $\SE{}$, $\alpha^{(2)}_1\beta^{(1)}_2 \cgeq \alpha^{(2)}_1r^0_2 \sim x_2\beta^{(2)}_2 \cgeq
x_2\beta^{(1)}_2$, hence by restricted solvability exists $z_1: x_2\beta^{(2)}_2 \sim z_1\beta^{(1)}_2$. By \textbf{A5} then $x_2\beta^{(1)}_2 \sim
z_1r^0_2$. By additivity $x_2\beta^{(2)}_2 \sim z_1\beta^{(1)}_2$ and $x_2\beta^{(1)}_2 \sim z_1r^0_2$ entail $\phi_2(b_2) - \phi_2(b_1) = \phi_2(b_1) -
\phi_2(r^0_2)$. From this the result follows as in the proof above. If now $X_1$ is essential on $\NW{}$ repeat the proof as above this time
starting the sequence from $r^1$ ``towards'' $r^0$.

\end{proof}

\begin{lemma}
  \label{cor:crit_order}
The following statements hold or $\cgeq$ has an additive representation: 
\begin{enumerate}
\item If $ap \in \Theta$ then for no $b \in X_1$ holds $bp \in \Theta$ and also for no $q \in X_2$ holds $aq \in \Theta$.
% \item $\SE{ext} = \NW{ext} = \Theta$
\item $x \in \SE{} \Rightarrow \phi_1(x_1) \geq \phi_2(x_2)$, $y \in \NW{} \Rightarrow \phi_2(y_2) \geq \phi_1(y_1)$. 
\end{enumerate}  
\end{lemma}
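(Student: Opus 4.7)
The plan is to derive both statements directly from Lemma \ref{lm:12} together with the structural assumption and a monotonicity property of the constructed $\phi_i$. I assume throughout that $\cgeq$ admits no additive representation, so Lemma \ref{lm:12} delivers $\phi_1(x_1) = \phi_2(x_2)$ for every non-extreme $x \in \Theta$. I also rely on the auxiliary fact that $\phi_i$ strictly represents $\cgeq_i$ for $i=1,2$, which follows from essentiality of both coordinates on $X$ (\textbf{A7}), additivity of $\phi^{SE}$ on $\SE{}$ and $\phi^{NW}$ on $\NW{}$, and Lemma \ref{lm:29} together with closedness of the regions to traverse the boundary where needed.

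For part 1, suppose $ap \in \Theta$ and also $bp \in \Theta$. Lemma \ref{lm:12} gives $\phi_1(a) = \phi_2(p) = \phi_1(b)$, hence by the auxiliary fact $a \sim_1 b$, i.e.\ $aq \sim bq$ for every $q \in X_2$. The structural assumption excludes this unless $a = b$. Symmetrically, $ap \in \Theta$ and $aq \in \Theta$ force $\phi_2(p) = \phi_2(q)$, hence $p \sim_2 q$, which the structural assumption again rules out for distinct $p, q$.

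For part 2, take $x \in \SE{}$. By Lemma \ref{lm:21} there exists $z \in \Theta$ with $\SE{x} \subset \SE{z}$, so in particular $x \in \SE{z}$; by definition of $\SE{z}$ we have $x_1 \cgeq_1 z_1$ and $z_2 \cgeq_2 x_2$. Monotonicity of $\phi_1$ in $\cgeq_1$ and of $\phi_2$ in $\cgeq_2$ yields $\phi_1(x_1) \geq \phi_1(z_1)$ and $\phi_2(z_2) \geq \phi_2(x_2)$. Chaining these with $\phi_1(z_1) = \phi_2(z_2)$ from Lemma \ref{lm:12} gives $\phi_1(x_1) \geq \phi_2(x_2)$. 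The case $y \in \NW{}$ is entirely symmetric, using the defining inequalities of $\NW{z}$ instead.

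The main (though minor) technical obstacle is the auxiliary monotonicity fact, since $\phi^{SE}$ and $\phi^{NW}$ are defined only on their respective regions. Given $a \cgt_1 b$, the structural assumption supplies some $q$ with $aq \cgt bq$; if both $aq$ and $bq$ lie in $\SE{}$ or both in $\NW{}$ we invoke additivity of the relevant representation directly, and if they straddle, closedness produces an intermediate point at which the comparison can be split, reducing via Lemma \ref{lm:29} to two in-region inequalities. Once this is in place, the remainder of the argument is just a short chain of implications from Lemma \ref{lm:12} and the structural assumption.
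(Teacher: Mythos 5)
Your Part 2 is essentially the paper's own argument (Lemma \ref{lm:21} to get $z \in \Theta$ with $x \in \SE{z}$, then Lemma \ref{lm:12} at $z$ plus monotonicity of the $\phi_i$ along $\cgeq_i$), and your remarks on why $\phi_i$ represents $\cgeq_i$ strictly are in the right spirit.

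Part 1, however, has a genuine gap. You apply Lemma \ref{lm:12} directly to $ap$ and $bp$ to get $\phi_1(a) = \phi_2(p) = \phi_1(b)$, but Lemma \ref{lm:12} is stated only for \emph{non-extreme} points of $\Theta$, and nothing in your hypotheses prevents $ap$ or $bp$ from being extreme (e.g.\ $a$ maximal, $b$ minimal, or $p$ maximal/minimal). Handling exactly these cases is where the paper's proof does its real work: it first shows that $p$ can be neither minimal nor maximal --- if $p$ is minimal then the very definition of membership in $\SE{}$ is violated for $bp$ by the presence of $ap$ with $a \cgeq_1 b$, and symmetrically for $p$ maximal via $\NW{}$ --- and only then, when $a$ is maximal or $b$ is minimal, uses denserangedness to insert $cp$ with $ap \cgt cp \cgt bp$ (which lies in $\SE{bp} \cap \NW{ap} \subseteq \Theta$ and is non-extreme), so that Lemma \ref{lm:12} is invoked only at non-extreme points; the contradiction with the structural assumption is then derived from $\phi_1(c) = \phi_2(p) = \phi_1(b)$ (resp.\ $= \phi_1(a)$). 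Note also that the sub-case where $p$ itself is extreme cannot be repaired by inserting intermediate points on the line $X_1 \times \{p\}$, since every point of that line is then extreme --- it genuinely requires the separate definitional argument. The same issue recurs in your symmetric claim for $ap, aq \in \Theta$. To close the gap you need to add the case analysis on extremality before invoking Lemma \ref{lm:12}.
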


\begin{proof}
  \begin{enumerate}
  \item Assume $ap, bp \in \Theta$. Assume first, that $\SE{}$ or $\NW{}$ has two essential coordinates. If $ap \sim bp$, then $aq \sim bq$
    for all $q \in X_2$, which violates the structural assumption. Let $ap \cgt bp$. By denserangedness there exists also $cp$ such that $ap
    \cgt cp \cgt bp$. If $p$ is minimal, then by definition of $\SE{}$, $bp \in \SE{}$ if for no $c \in X_1$ such that $c \cgeq_1 b$ holds
    $cp \in \NW{}$. This is obviously violated by $ap$. If $p$ is maximal, then by definition of $\NW{}$, $ap \in \NW{}$ if for no $c \in
    X_1$ such that $a \cgeq_1 c$ holds $cp \in \SE{}$, which is violated by $bp$. Hence, $p$ is not maximal or minimal. If $a$ is maximal,
    then $cp, bp$ are non-extreme, hence by Lemma \ref{lm:12}, $\phi_1(c) = \phi_2(p) = \phi_1(b)$, hence $bq \sim cq$ for all $q \in
    X_2$. If $b$ is minimal, $ap, bp$ are non-extreme, hence $\phi_1(a) = \phi_2(p) = \phi_1(c)$, hence $aq \sim cq$ for all $q \in X_2$. In
    both cases structural assumption is violated.
  % \item Take any $ap \in \Theta$. Assume it is not extreme in $\SE{}$, in particular exists $bp \in \SE{}$ such that $\SE{ap} \subset
  %   \SE{bp}$. This entails $aq \cgeq bq$ for all $q \in X_2$, hence $bp \in \NW{} \Rightarrow bp \in \Theta$. This contradicts the previous statement.
  \item Pick any $bq \in \SE{}$. By Lemma \ref{lm:21} there exists $ap \in \Theta$ such that $bq \in \SE{ap}$, hence $b \cgeq_1 a$ and $p
    \cgeq_2 q$. By Lemma \ref{lm:12} $\phi_1(a) = \phi_2(p)$. We also have $\phi_1(b) \geq \phi_1(a), \phi_2(p) \geq \phi_2(q)$. The result
    follows. $\NW{}$ case is symmetric.
  \end{enumerate}
\end{proof}

\section{Extending value functions to extreme points}
\label{sec:extreme-points}

Value functions for the case when both $\SE{}$ and $\NW{}$ have a single essential coordinate were fully defined in Section
\ref{sec:both-areas-have}. Thus in what follows we will consider cases where $\SE{}$ or $\NW{}$ have two essential coordinates. 

As indicated in \citep{wakker1991additive-RO}, value functions might be driven to infinite values at the maximal/minimal points of
rank-ordered subsets, nevertheless not implying existence of infinite standard sequences residing entirely within comonotonic cones. Put it
another way, it might be not possible to ``reach'' a maximal/minimal point with a sequence lying entirely in $\NW{}$ or $\SE{}$. Yet another
way to say it is that for some maximal/minimal point $z$, the set $\NW{z}(\SE{z})$ contains no standard sequences (see also
\citep{wakker1991additive-RO} Remark 24).

The cornerstone of this section is Lemma \ref{lm:12}. It plays the same role as proportionality of value functions plays in
\citep{wakker1991additive-RO}, effectively guaranteeing that both value functions $\phi_1$ and $\phi_2$ are limited if maximal/minimal
elements exist.

\begin{lemma}
  \label{lm:19}
  Assume that $\SE{}$ has two essential coordinates. The following statements hold:
  \begin{itemize}
  \item If there exist a maximal $M_1 \in X_1$, $\phi_1$ is bounded from above.
  \item If there exist a minimal $m_2 \in X_2$, $\phi_2$ is bounded from below.
  \end{itemize}
  Assume that $\NW{}$ has two essential coordinates. The following statements hold:
  \begin{itemize}
  \item If there exist a minimal $M_1 \in X_1$, $\phi_1$ is bounded from below.
  \item If there exist a maximal $m_2 \in X_2$, $\phi_2$ is bounded from above.
  \end{itemize}
\end{lemma}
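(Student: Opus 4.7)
The plan is to prove the first bullet; the remaining three follow by the symmetric argument after exchanging $X_1 \leftrightarrow X_2$ and/or $\SE{} \leftrightarrow \NW{}$. So let $M_1\in X_1$ be maximal and assume that $\SE{}$ has two essential coordinates.

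By weak separability $M_1 \cgeq_1 x_1$ for every $x_1 \in X_1$, so monotonicity of the additive expression $\phi^{SE} = \frac{1}{1+k}\phi_1 + \frac{k}{1+k}\phi_2$ of (\ref{eq:phi_def}) forces $\phi_1(x_1) \leq \phi_1(M_1)$ wherever both sides are defined; it therefore suffices to show $\phi_1(M_1) < \infty$. I pick a non-extreme reference $r^0 \in \Theta$ (guaranteed by the structural assumption, denserangedness, and both coordinates being essential on $\SE{}$), so that $r^0_1$ and $r^0_2$ are neither maximal nor minimal. Since $M_1 \cgeq_1 r^0_1$, the point $M_1 r^0_2$ lies in $\SE{r^0} \subseteq \SE{}$. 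I then split into two cases according to whether $M_1 r^0_2 \in \Theta$.

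\textbf{Case A.} $M_1 r^0_2 \notin \Theta$. Because $M_1$ is maximal, membership in $\SE{ext}$ is equivalent to membership in $\Theta$, so $M_1 r^0_2 \in \SE{}\setminus\SE{ext}$. By Theorem \ref{theo:4} the representation $\phi^{SE}$ is a real-valued function on its domain, hence $\phi^{SE}(M_1 r^0_2) = \frac{1}{1+k}\phi_1(M_1) + \frac{k}{1+k}\phi_2(r^0_2) < \infty$, and rearranging gives $\phi_1(M_1) < \infty$.

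\textbf{Case B.} $M_1 r^0_2 \in \Theta$. Here I sidestep the extreme point by moving the second coordinate. Since $r^0_2$ is non-minimal, denserangedness furnishes $p^* \in X_2$ with $r^0_2 \cgt_2 p^*$ and $p^*$ itself non-minimal. Lemma \ref{cor:crit_order}(1) says $M_1 r^0_2 \in \Theta$ rules out $M_1 p^* \in \Theta$ for any $p^* \neq r^0_2$. Combined with $M_1 p^* \in \SE{r^0} \subseteq \SE{}$ (as $M_1 \cgeq_1 r^0_1$ and $r^0_2 \cgeq_2 p^*$) and with $p^*$ non-minimal (so $M_1 p^* \notin \SE{ext}$), I conclude $M_1 p^* \in \SE{}\setminus\SE{ext}$. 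Hence $\phi^{SE}(M_1 p^*)$ is a finite real number, and the additive formula again yields $\phi_1(M_1)<\infty$.

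The main obstacle is Case B: arranging for the witness $M_1 p^*$ to sit inside $\SE{} \setminus \SE{ext}$, where the additive representation is guaranteed to be real-valued. This rests crucially on Lemma \ref{cor:crit_order}(1), which forbids a second point of the form $M_1 p$ from lying in $\Theta$, and on denserangedness, which supplies a non-minimal $p^* \clt_2 r^0_2$. The other three assertions of the lemma are obtained by the same template after renaming the coordinate and/or the region.
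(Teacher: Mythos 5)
Your argument is correct in outline but takes a genuinely different route from the paper's. The paper follows Wakker's Lemma 20: it fixes $x_1v_2 \in \SE{}$ with $v_2 \cgeq_2 w_2$, notes $M_1w_2, M_1v_2 \in \SE{x_1v_2}$, and splits on whether $M_1w_2 \cleq x_1v_2$. In the first case the additive representation immediately gives the uniform bound $V_1(\beta) \leq V_1(x_1) + V_2(v_2) - V_2(w_2)$; in the second, restricted solvability produces $z_1$ with $z_1v_2 \sim M_1w_2$, whence $z_1v_2 \cgeq \beta w_2$ for every $\beta \in X_1$ (by maximality of $M_1$) and the same kind of bound follows with $V_1(z_1)$ in place of $V_1(x_1)$. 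The point of that detour is that the supremum of $\phi_1$ is bounded by values taken at points that are \emph{not} of the form $M_1q$ — the proof never needs to evaluate $\phi_1$ at $M_1$ itself. You instead evaluate $\phi_1(M_1)$ directly by exhibiting a non-extreme witness $M_1p^* \in \SE{r^0}$ with $r^0 \in \Theta \setminus \SE{ext}$, invoking Theorem \ref{theo:1}/\ref{theo:4} to read off a finite value, and then pushing everything below it by monotonicity. This is shorter and is formally licensed by the statement of Theorem \ref{theo:1} (the representation there is real-valued on all of the Cartesian product $\SE{r^0}$, which does contain $M_1p^*$). But be aware that it leans on exactly the point the paper's preamble to this section flags as delicate — that value functions "might be driven to infinite values at the maximal/minimal points" when those points cannot be reached by standard sequences inside the cone. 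The paper's two-case argument is the standard way of \emph{proving} that this does not happen; your argument \emph{assumes} it does not happen at $M_1p^*$ by citing the representation theorem. If Theorem \ref{theo:1} is read as delivering finite values only at points reachable by standard sequences within $\SE{r^0}$, your Case A/Case B analysis does not close the gap, because both cases end by asserting finiteness of $\phi^{SE}$ at a point with first coordinate $M_1$.

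Two smaller remarks. First, Lemma \ref{cor:crit_order} carries the escape clause "or $\cgeq$ has an additive representation"; you should dispose of the additive case separately (there boundedness is immediate from the global additive representation), since otherwise Case B does not go through. Second, the step "$\phi_1(x_1) \leq \phi_1(M_1)$ wherever both sides are defined" needs Lemma \ref{lm:7}: for $x_1$ whose value is determined only through $\NW{}$ you cannot compare $x_1q$ and $M_1q$ inside a single $\SE{z}$, and you must pass through the alignment $V^{SE}_1 = V^{NW}_1$ on the common domain. Neither issue is fatal, but both should be said explicitly.
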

\begin{proof}
  We shall only prove the first one. First, notice that there must exist $p \in X_2$ such that $M_1p \in \SE{}$. Take $x_1 \in X_1$ and
  $v_2, w_2 \in X_2$ such that $v_2 \cgeq_2 w_2$, and $x_1v_2 \in \SE{}$. If such points cannot be found, $X$ has an additive
  representation (all $x \in \NW{}$), and the result follows. So we assume such points exist. By definition of $\SE{x_1v_2}$ it follows that
  $M_1w_2, x_1v_2, M_1v_2 \in \SE{}$. Hence, we can evoke the argument from \cite{wakker1991additive-RO} Lemma 20.

  If $M_1w_2 \cleq x_1v_2$ then we have an upper bound: $V_1(M_1) \leq V_1(x_1) + V_2(v_2) - V_2(w_2)$.  If $M_1w_2 \cgt x_1v_2$ then by
  monotonicity $M_1v_2 \cgeq M_1w_2 \cgt x_1v_2$ and hence exists $z_1 \in X_1$ such that $M_1w_2 \sim z_1v_2$, hence $z_1v_2 \cgeq \beta
  w_2$ for all $\beta \in X_1$.
\end{proof}
\begin{lemma}
  \label{lm:23}
  If $x_1x_2 \in \Theta$ and $x_1x_2$ is extreme, then 
  \begin{equation*}
    \lim _{z \in \Theta, z_2 \rightarrow x_2} \phi_2(z_2) = \lim _{z \in \Theta, z_1 \rightarrow x_1} \phi_1(z_1).    
  \end{equation*}
\end{lemma}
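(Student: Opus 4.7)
\medskip

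The strategy is to exploit the rigid coupling between the two coordinates within $\Theta$ that is forced by Lemma \ref{lm:12} and Corollary \ref{cor:crit_order}. Let $T_1 = \{z_1 : z \in \Theta\}$ and $T_2 = \{z_2 : z \in \Theta\}$. Corollary \ref{cor:crit_order}(1) tells us that the assignment $z_2 \mapsto z_1$ for $(z_1,z_2) \in \Theta$ is a well-defined bijection $\tau: T_2 \to T_1$, so approaching $(x_1,x_2)$ along $\Theta$ can be described equivalently in terms of either coordinate. By Lemma \ref{lm:12}, at every non-extreme $z \in \Theta$ we have the pointwise identity $\phi_1(z_1) = \phi_2(z_2)$, which is what will let us identify the two limits termwise.

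First, I would verify that $\tau$ is strictly monotonic with respect to $\cgeq_1$ and $\cgeq_2$. This follows because $\phi_i$ represents $\cgeq_i$ and, for any two non-extreme $w, z \in \Theta$, Lemma \ref{lm:12} gives $\phi_1(w_1) = \phi_2(w_2)$ and $\phi_1(z_1) = \phi_2(z_2)$; hence $w_1 \cgt_1 z_1 \iff \phi_1(w_1) > \phi_1(z_1) \iff \phi_2(w_2) > \phi_2(z_2) \iff w_2 \cgt_2 z_2$. Next, I would fix any sequence $\{w^{(n)}\} \subset \Theta$ of non-extreme points with $w^{(n)}_2 \to x_2$ in the order on $T_2$, approaching from the side permitted by the type of extremeness (below if $x_2$ is a maximum of $T_2$, above if a minimum, and the analogous choice for $x_1$). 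Strict monotonicity of $\tau$ then forces $w^{(n)}_1 \to x_1$ within $T_1$ on the matching side, so every sequence witnessing one of the two limits in the statement automatically witnesses the other.

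The two limits now exist: $\phi_1(w^{(n)}_1)$ is monotonic in $n$, and by Lemma \ref{lm:19} the relevant one-sided bound on $\phi_1$ at $x_1$ (or the bound inherited through $\phi_2$ at $x_2$) guarantees convergence, and symmetrically for $\phi_2(w^{(n)}_2)$. Applying Lemma \ref{lm:12} term by term gives $\phi_1(w^{(n)}_1) = \phi_2(w^{(n)}_2)$ for every $n$, so passing to the limit yields
\begin{equation*}
\lim_{z \in \Theta,\, z_1 \to x_1} \phi_1(z_1) \;=\; \lim_n \phi_1(w^{(n)}_1) \;=\; \lim_n \phi_2(w^{(n)}_2) \;=\; \lim_{z \in \Theta,\, z_2 \to x_2} \phi_2(z_2).
\end{equation*}
Independence of the choice of sequence follows from monotonicity of the $\phi_i$ and $\tau$: any two admissible sequences are interleaved by $\tau$, so they give the same limit.

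The main obstacle is a delicate one rather than a computational one: one has to be sure that ``$z_2 \to x_2$ within $T_2$'' and ``$z_1 \to x_1$ within $T_1$'' describe the \emph{same} limiting process, i.e. that $\tau$ does not have an order gap at $x_2$ that would allow a sequence with $z^{(n)}_2 \to x_2$ while $\tau(z^{(n)}_2)$ stays bounded away from $x_1$. This is ruled out precisely because $\tau$ is strictly monotonic and $(x_1,x_2) \in \Theta$: any value in $T_1$ strictly separated from $x_1$ corresponds through $\tau^{-1}$ to a value in $T_2$ strictly separated from $x_2$, contradicting $z^{(n)}_2 \to x_2$. This monotonicity argument, which replaces any explicit topological hypothesis on $X_i$, is where the work of Lemma \ref{lm:12} and Corollary \ref{cor:crit_order} really pays off.
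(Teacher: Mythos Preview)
Your proposal is correct and rests on the same core idea as the paper's proof: Lemma~\ref{lm:12} gives $\phi_1(z_1)=\phi_2(z_2)$ for all non-extreme $z\in\Theta$, and this pointwise identity passes to the limit. The paper records only this one-line justification (together with the remark that in the single-essential case the identity holds by construction of the $\phi_i$), whereas you additionally spell out, via the bijection $\tau$ from Lemma~\ref{cor:crit_order}(1) and the boundedness from Lemma~\ref{lm:19}, why ``$z_2\to x_2$ in $\Theta$'' and ``$z_1\to x_1$ in $\Theta$'' are the same limiting process and why the limits exist---points the paper leaves implicit. One small caveat: both Lemma~\ref{lm:12} and Lemma~\ref{cor:crit_order} carry the exception ``unless $\cgeq$ has an additive representation''; you should note, as the paper tacitly does, that in that degenerate case $\Theta=X$ and the statement is immediate.
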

\begin{proof}
  For the case when $\SE{}$ or $\NW{}$ have two essential variables the result follows from Lemma \ref{lm:12}, otherwise it is by
  definition of $\phi_i$ (see Section \ref{sec:both-areas-have}).
\end{proof}

\subsection{Extending value functions to extreme elements of $\Theta$}
\label{sec:extend-bound-points}

\paragraph{Extreme elements of $\Theta$ are the only representatives of maximal/minimal equivalence classes of $\SE{}$($\NW{}$).}
\label{sec:sdf}

\begin{lemma}
  \label{lm:35}
  Let $X_1$ be essential on $\SE{}$. If there exists $z \in \Theta$ such that $z_2$ is minimal, then $x \cgt z$ for all $x \in \SE{}$. If
  $X_2$ is essential on $\SE{}$ and there exists $z \in \Theta$ such that $z_1$ is maximal, then $z \cgt x$ for all $x \in
  \SE{}$. Similarly, if $X_1$ is essential on $\NW{}$ and there exists $z \in \Theta$ such that $z_2$ is maximal, then $z \cgt x$ for all $x
  \in \NW{}$. If $X_2$ is essential on $\NW{}$ and there exists $z \in \Theta$ such that $z_1$ is minimal, then $x \cgt z$ for all $x \in
  \NW{}$.
\end{lemma}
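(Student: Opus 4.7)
}

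I will prove the first assertion; the other three are symmetric via the obvious swaps of coordinates and of $\SE{}/\NW{}$. So assume $X_1$ is essential on $\SE{}$, and fix $z \in \Theta$ with $z_2$ minimal in $\cgeq_2$. The target is: for every $x \in \SE{}$ with $x \neq z$, $x \cgt z$ (equivalently, the $\cgeq$-minimum equivalence class inside $\SE{}$ is the singleton $\{z\}$, and it lies strictly below every other class). I work in the non-additive regime, which is the substantive one; in the additive case $\SE{} = X$ and the claim reduces to standard monotonicity.

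The guiding observation is that $z$, as an extreme point of $\Theta$ sitting at the bottom of $X_2$, realizes the minimum possible value of the $\SE{}$-representation, and is the unique $x \in \SE{}$ that does. All the ingredients have already been assembled in the previous sections: Corollary \ref{cor:crit_order}(2) gives the diagonal inequality $\phi_1(y_1) \geq \phi_2(y_2)$ for $y \in \SE{}$, Lemma \ref{lm:12} gives $\phi_1(w_1) = \phi_2(w_2)$ for non-extreme $w \in \Theta$, and Lemma \ref{lm:23} extends this equality to the extreme point $z$ by a limiting argument. Since $z_2$ is globally minimal and $\phi_2$ represents $\cgeq_2$, $\phi_2(z_2) = \min \phi_2$. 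Chaining:
\[
\phi_1(x_1) \;\geq\; \phi_2(x_2) \;\geq\; \phi_2(z_2) \;=\; \phi_1(z_1)
\]
for every $x \in \SE{}$, so $x_1 \cgeq_1 z_1$ and $x_2 \cgeq_2 z_2$. Pointwise monotonicity (Lemma \ref{lm:4}) then yields $x \cgeq z$.

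For the strict inequality, assume $x \in \SE{}$ satisfies $x \sim z$ and deduce $x = z$. I split into two subcases according to the representation built in Section \ref{sec:join-togeth-addit}. If both coordinates are essential on $\SE{}$, then $\phi^{SE}(x) = \frac{1}{1+k}\phi_1(x_1) + \frac{k}{1+k}\phi_2(x_2)$ with $k > 0$, and $\phi^{SE}(x) = \phi^{SE}(z)$ combined with the two weak inequalities above forces equality in each summand: $\phi_1(x_1) = \phi_1(z_1)$ and $\phi_2(x_2) = \phi_2(z_2)$. The structural assumption makes $\cgeq_i$ antisymmetric and hence $\phi_i$ injective, so $x = z$. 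If instead only $X_1$ is essential on $\SE{}$, the representation reduces to $\phi^{SE}(x) = \phi_1(x_1)$; from $x \sim z$ we get $\phi_1(x_1) = \phi_1(z_1)$, whence $x_1 = z_1$. The defining condition $x \in \SE{}$ requires $\phi_1(x_1) \geq \phi_2(x_2)$, i.e., $\phi_2(z_2) \geq \phi_2(x_2)$, which together with minimality of $z_2$ pinches $\phi_2(x_2) = \phi_2(z_2)$, and then $x_2 = z_2$ by injectivity of $\phi_2$.

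The main obstacle is the second subcase, where $X_2$ is inessential on $\SE{}$ and hence $\phi^{SE}$ is blind to the second coordinate; there is no direct way to recover $x_2 = z_2$ from $x \sim z$. The resolution is to exploit the geometric definition $\SE{} \subseteq \{\phi_1 \geq \phi_2\}$ together with $\phi_1(z_1) = \phi_2(z_2) = \min\phi_2$ to squeeze $\phi_2(x_2)$ from above. A secondary delicacy is that $z$ is an extreme point of $\Theta$, so Lemma \ref{lm:12} does not apply directly at $z$; this is precisely why Lemma \ref{lm:23} was proved first, and its application here is essential to obtain $\phi_1(z_1) = \phi_2(z_2)$.
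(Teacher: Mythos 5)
Your overall strategy (coordinatewise dominance of $z$ by every $x \in \SE{}$, followed by a separate argument for strictness) is close in spirit to the paper's, but the way you obtain both the dominance and the strictness is circular at this point in the development. The pivotal step in your chain is $\phi_2(z_2) = \phi_1(z_1)$ at the extreme point $z$. Lemma \ref{lm:23} does not give you this: it asserts equality of the \emph{limits} of $\phi_1$ and $\phi_2$ along $\Theta$ as one approaches $z$, not equality of the values at $z$. The values of $\phi_i$ at extreme elements of $\Theta$ are only assigned afterwards, in Lemmas \ref{lm:36} and \ref{lm:37} --- and those lemmas \emph{use} Lemma \ref{lm:35} in their proofs. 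Worse, in the situations covered by Lemma \ref{lm:37} the two values are genuinely allowed to differ (one of them is a free parameter constrained only by an inequality), so the equality you lean on is not merely unproved but false in general. The same circularity affects your strictness argument: you evaluate $\phi^{SE}$ at $z$, but $z \in \SE{ext}$, and Theorem \ref{theo:4} only establishes that $V^{SE}$ represents $\cgeq$ on $\SE{} \setminus \SE{ext}$; whether the representation extends to $z$ is precisely what this section is in the process of establishing. You correctly flag the subcase where $X_2$ is inessential on $\SE{}$ as the main obstacle, but the resolution you sketch runs through the same identity $\phi_1(z_1) = \phi_2(z_2) = \min\phi_2$, so the gap is not repaired there either.

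The paper's proof stays away from value functions at $z$ altogether. From $z \in \Theta$ together with extremality of the relevant coordinate it extracts coordinatewise dominance in the factor orders (in the case it writes out, $z_1 \cgeq_1 x_1$ and $z_2 \cgeq_2 x_2$ for all $x \in \NW{}$), and then invokes Lemma \ref{lm:29} --- the strong-monotonicity consequence of bi-independence and the structural assumption --- to convert weak coordinatewise dominance directly into the strict preference $z \cgt x$. That is the ingredient your argument is missing: strictness should come from Lemma \ref{lm:29}, not from injectivity of value functions evaluated at a point where they have not yet been (and in some cases cannot consistently be) defined so as to satisfy $\phi_1(z_1)=\phi_2(z_2)$.
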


\begin{proof}
  We provide the proof just for one of the cases. Let $\NW{}$ have two essential variables. Assume $z_2$ is maximal. Since $z \in \Theta$,
  for all $x \in \NW{}$ holds $z_1 \cgeq_1 x_1$ and by maximality $z_2 \cgeq_2 x_2$. Hence, by Lemma \ref{lm:29}, $z \cgt x$ for all $x \in
  \NW{}$. The case with the minimal $z_1$ is symmetric.
\end{proof}

\paragraph{Uniqueness of definition of $\phi_i$ at the  extreme elements of $\Theta$. }
\label{sec:uniq-defin-phi_i}

\begin{lemma}
  \label{lm:36}
  If both coordinates are essential on $\SE{}$ and $\NW{}$ the values of $\phi_i$ for extreme $x \in \Theta$ are uniquely defined. Moreover,
  $\phi_1(x_1) = \phi_2(x_2)$.
\end{lemma}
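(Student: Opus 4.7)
The plan is to show that for an extreme $x \in \Theta$, both $\phi_1(x_1)$ and $\phi_2(x_2)$ can be defined as common limits along sequences in $\Theta$ of non-extreme points that approach $x$, that these limits exist and are independent of the approach, and that Lemma \ref{lm:23} forces the two limits to coincide.

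First I would fix the type of extreme point. Since both coordinates are essential on $\SE{}$ and $\NW{}$, by the definition of extreme points, $x_1$ is maximal or minimal, or $x_2$ is maximal or minimal (with the appropriate $\SE{ext}/\NW{ext}$ pairing). I would treat, say, the case where $x \in \SE{ext}$ with $x_1$ maximal (other cases are symmetric and handled by reversing the role of coordinates or swapping $\SE{}$ with $\NW{}$). By Lemma \ref{lm:35}, $x$ represents the maximal equivalence class of $\SE{}$; by Lemma \ref{lm:34} and the structural/closedness assumptions, there is a sequence of non-extreme $z^{(n)} \in \Theta$ with $z^{(n)}_1 \cgeq_1 z^{(n-1)}_1$ approaching $x_1$ from below (and correspondingly $z^{(n)}_2$ approaching $x_2$ along $\Theta$).

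Next I would argue that $\phi_1(z^{(n)}_1)$ forms a monotone sequence. By Lemma \ref{lm:12}, $\phi_1(z^{(n)}_1) = \phi_2(z^{(n)}_2)$ for each $n$, so both coordinate sequences are monotone in the same direction. By Lemma \ref{lm:19}, $\phi_1$ is bounded above (in the $\SE{}$, $x_1$-maximal case), so $\lim_n \phi_1(z^{(n)}_1)$ exists as a finite real number; the same holds for $\phi_2(z^{(n)}_2)$, and the two limits are equal. I would then invoke Lemma \ref{lm:23} to conclude that this common limit is independent of which admissible sequence in $\Theta$ is chosen: any two such sequences yield the same value because Lemma \ref{lm:23} equates the one-sided limit taken along the $X_1$-coordinate with the one taken along the $X_2$-coordinate, and monotonicity forces both to agree with the supremum/infimum determined by boundedness. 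Defining $\phi_1(x_1)$ and $\phi_2(x_2)$ as this common limit then gives a well-defined extension with $\phi_1(x_1) = \phi_2(x_2)$.

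The main obstacle will be establishing that the limit is genuinely independent of the sequence chosen — especially when $x$ is extreme in one of $\SE{}/\NW{}$ but not the other, so that approaches ``from the $\SE{}$ side'' and ``from the $\NW{}$ side'' may a priori give different values. Here I would exploit that the rescalings performed in Section \ref{sec:join-togeth-addit} guarantee $\phi_1$ and $\phi_2$ agree on the common domain $\Theta \setminus (\SE{ext}\cup \NW{ext})$ irrespective of which side they were constructed from, together with Lemma \ref{lm:12} (which ties coordinates rigidly on $\Theta$) and Lemma \ref{lm:23} (which ties the $X_1$- and $X_2$-limits together). A secondary nuisance is verifying that, when an approaching sequence of non-extreme $\Theta$-points exists at all (guaranteed by order-denseness of $\cgeq$ and restricted solvability applied inside $\Theta$), such a sequence is cofinal with any other, so that no two admissible approaches produce distinct limits. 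Once uniqueness is in hand, the equality $\phi_1(x_1) = \phi_2(x_2)$ for extreme $x \in \Theta$ is immediate from Lemma \ref{lm:23}.
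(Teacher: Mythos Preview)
Your limit-based plan is workable but takes a genuinely different route from the paper, and misses the paper's much shorter observation. The paper does \emph{not} define $\phi_i(x_i)$ at an extreme $x\in\Theta$ as a limit along $\Theta$. Instead it notes that although $x$ itself is extreme, each \emph{coordinate} of $x$ already occurs in non-extreme points of one of the two regions. For instance, if $x_1$ is minimal (so $x\in\NW{ext}$), then for every $z_1\cgt_1 x_1$ the point $z_1x_2$ lies in $\SE{}\setminus\SE{ext}$, so $\phi_2(x_2)$ is already pinned down by the additive representation $V^{SE}$; symmetrically, for every $z_2\cgt_2 x_2$ the point $x_1z_2$ lies in $\NW{}\setminus\NW{ext}$ (by Lemma~\ref{cor:crit_order} no other $x_1q$ is in $\Theta$), so $\phi_1(x_1)$ is already pinned down by $V^{NW}$. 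Uniqueness then comes for free from the uniqueness of additive conjoint representations, and Lemma~\ref{lm:23} supplies $\phi_1(x_1)=\phi_2(x_2)$.

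What your approach buys is conceptual uniformity (everything is a limit along $\Theta$), but at the cost of extra work: existence and sequence-independence of the limit, cofinality of approaches, and --- the step you do not address --- consistency of your limit value with the value of $\phi_i(x_i)$ that $V^{SE}$ or $V^{NW}$ has \emph{already} assigned via non-extreme points containing $x_i$. Without that consistency check, you have defined a number but not shown it is the unique value compatible with the representations built so far, which is what ``uniquely defined'' means in this lemma. (Also, your appeal to Lemma~\ref{lm:34} is misplaced: that lemma concerns the single-essential-coordinate case, not the present one.) The paper's direct argument avoids all of this by observing that no extension is needed in the first place.
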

\begin{proof}
  Assume, for example $x_1x_2 \in \Theta$ and $x_1$ is minimal. Then any $z_1x_2$ such that $z_1 \cgeq_1 x_1$, belongs to $\SE{}$, and any
  equivalence relation within $\SE{}$ involving $z_1x_2$ uniquely defines $\phi_2(x_2)$ (see (\ref{eq:phi_def})). Similarly, any $x_1z_2$
  such that $z_2 \cgeq_2 x_2$, belongs to $\NW{}$, and any equivalence relation within $\NW{}$ involving $x_1z_2$ uniquely defines
  $\phi_1(x_1)$. By Lemma \ref{lm:23} these values are equal.
\end{proof}

\begin{lemma}
  \label{lm:37}
  If both coordinates are essential on $\SE{}$ but only one on $\NW{}$ (or vice versa) the values of $\phi_i$ for extreme $x \in \Theta$
  can be set as follows:
  \begin{itemize}
  \item If $x_1x_2 \in \Theta$, $\NW{}$ has two essential coordinates and $x_2$ is maximal, then $\phi_2(x_2)$ is uniquely defined, $\phi_1(x_1)$ can be set to
    any value greater or equal to $\phi_2(x_2)$.
  \item If $x_1x_2 \in \Theta$, $\NW{}$ has two essential coordinates and $x_1$ is minimal, then $\phi_1(x_1)$ is uniquely defined, $\phi_2(x_2)$ can be set to
    any value less or equal to $\phi_1(x_1)$.
  \item If $x_1x_2 \in \Theta$, $\SE{}$ has two essential coordinates and $x_2$ is minimal, then $\phi_2(x_2)$ is uniquely defined, $\phi_1(x_1)$ can be set to
    any value less or equal to $\phi_2(x_2)$.
  \item If $x_1x_2 \in \Theta$, $\SE{}$ has two essential coordinates and $x_1$ is maximal, then $\phi_1(x_1)$ is uniquely defined, $\phi_2(x_2)$ can be set to
    any value greater or equal to $\phi_1(x_1)$.
  \end{itemize}
\end{lemma}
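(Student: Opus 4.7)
The plan parallels the proof of Lemma \ref{lm:36} while accounting for the slack introduced by one of $\SE{}$, $\NW{}$ having only a single essential coordinate. The four cases are pairwise symmetric under $\SE{} \leftrightarrow \NW{}$, coordinate $1 \leftrightarrow 2$, and ``maximal'' $\leftrightarrow$ ``minimal'', so I would prove case~1 in detail and reduce the remaining three to it by symmetry.

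For case~1, with $\NW{}$ having two essential coordinates, $\SE{}$ only one, $x_1x_2 \in \Theta$, and $x_2$ maximal, I would first pin down $\phi_2(x_2)$ by the recipe of Lemma \ref{lm:36}: any equivalence inside $\NW{}$ involving a point $y_1z_2$ with $y_1 \cleq_1 x_1$ and $z_2$ approaching $x_2$ from below pins, through the additive representation on $\NW{}$, the value $\phi_2(x_2)$ as a limit. Lemma \ref{lm:19} makes this limit finite because $x_2$ is maximal, and Lemma \ref{lm:35} together with denserangedness ensures that enough non-extreme points approach $x$ so that the limit is uniquely identified. The inequality $\phi_1(x_1) \geq \phi_2(x_2)$ is then immediate from $x \in \Theta \subseteq \SE{}$ together with Lemma \ref{cor:crit_order} item~2.

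The novel content is to show that every value $\phi_1(x_1) \geq \phi_2(x_2)$ yields a valid representation. The key observation is that no $y_1 \cgt_1 x_1$ can appear in any point of $\NW{}$: if $y_1z_2$ were in $\NW{}$, then Lemma \ref{cor:crit_order} item~2 applied to $\NW{}$ would give $\phi_2(z_2) \geq \phi_1(y_1) > \phi_2(x_2)$ (using the lower bound just derived together with strict monotonicity of $\phi_1$ across $\cgt_1$), hence $z_2 \cgt_2 x_2$, contradicting maximality of $x_2$. Therefore $\phi_1(y_1)$ for $y_1 \cgt_1 x_1$ is constrained only by $\SE{}$ comparisons, and those comparisons --- since $\SE{}$ has a single essential coordinate --- either reduce to an ordinal ordering of $\phi_1$ values (when $X_1$ is the essential coordinate on $\SE{}$) or do not involve $\phi_1$ at all (when $X_2$ is the essential coordinate on $\SE{}$, since $\phi^{SE}$ puts weight zero on $\phi_1$ as in Section \ref{sec:areas-with-single}). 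In either subcase, any choice $\phi_1(x_1) \geq \phi_2(x_2)$ can be combined with an appropriate increasing reassignment of $\phi_1$ on $\{y_1 : y_1 \cgt_1 x_1\}$ without disrupting the global representation.

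The main obstacle will be rigorously carrying out this ordinal reassignment --- in particular, verifying that after $\phi_1$ is rescaled on the upper tail $\{y_1 : y_1 \cgt_1 x_1\}$ to lie above the chosen $\phi_1(x_1)$, the Choquet integral formula still represents $\cgeq$ on all of $X$. The crucial point is that the Choquet integral evaluation at points in $\SE{}$ with first coordinate in the upper tail depends only on the ordinal structure of $\phi_1$ there (because the $\NW{}$-additive scale exerts no force on $y_1 \cgt_1 x_1$, by the observation above), so any increasing reassignment remains admissible. The argument here mirrors, and will foreshadow, the piecewise-ordinal uniqueness statements of Cases~3 and~4 in Theorem~\ref{theo:repr}.
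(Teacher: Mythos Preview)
Your overall strategy matches the paper's: pin down $\phi_2(x_2)$ exactly as in Lemma~\ref{lm:36}, then argue that $\phi_1(x_1)$ is underdetermined because the side with two essential coordinates never ``sees'' $x_1$ except at the single extreme point $x$, while the side with one essential coordinate imposes only ordinal constraints. Your treatment of the tail reassignment is more explicit than the paper's terse proof and correctly anticipates the uniqueness cases of Theorem~\ref{theo:repr}.

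There is, however, a circularity in your key observation. To show that no $y_1 \cgt_1 x_1$ occurs in $\NW{}$, you invoke Lemma~\ref{cor:crit_order} item~2 together with the inequality $\phi_1(y_1) > \phi_1(x_1) \geq \phi_2(x_2)$. But $\phi_1(x_1)$ is precisely the value you are trying to show is free, so you cannot use it as input. The same issue affects your derivation of the lower bound $\phi_1(x_1) \geq \phi_2(x_2)$ from Lemma~\ref{cor:crit_order}: that lemma is proved using Lemma~\ref{lm:12}, which is stated for \emph{non-extreme} points and hence does not cover $x$. The bound should instead be presented as a constraint we \emph{impose} so that $x \in \SE{}$ remains consistent with $\phi_1 \geq \phi_2$, not as something derived from an already-existing $\phi$.

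The paper avoids this circularity by arguing purely order-theoretically: since $x_2$ is maximal, any $x_1z_2$ lies in $\SE{x}$, so if it were also in $\NW{}$ it would lie in $\Theta$, contradicting Lemma~\ref{cor:crit_order} item~1; hence $x$ is the \emph{only} point of $\NW{}$ with first coordinate $x_1$. Combined with Lemma~\ref{lm:35} (no equivalence in $\NW{}$ involves $x$), this immediately shows $\phi_1(x_1)$ is unconstrained by $\NW{}$, and the two subcases for $\SE{}$ (either $X_1$ essential, forcing all equivalents of $x$ to share first coordinate $x_1$, or $X_2$ essential, so $\phi^{SE}$ ignores $\phi_1$ altogether) finish the argument without touching any $\phi$-value at $x$. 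You should replace your $\phi$-based reasoning with this order-theoretic route.
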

\begin{proof}
  Consider the first case. $\phi_2(x_2)$ is defined uniquely as in the proof of Lemma \ref{lm:36}. However, this is not possible for
  $\phi_1(x_1)$. This is because $x$ is the only point in $\NW{}$ having $x_1$ as the first coordinate, and, by Lemma \ref{lm:35} there is
  no equivalence relation within $\NW{}$ which involves $x$. If $X_1$ is essential on $\SE{}$ then all points from the equivalence class
  which includes $x$ also have $x_1$ as their first coordinate, which does not allow to elicit $\phi_1(x_1)$. If only $X_2$ is essential on
  $\SE{}$, then the representations of equivalences involving $x_1$ do not include $\phi_1(x_1)$.
\end{proof}

In the case where only one coordinate is essential on both $\SE{}$ and $\NW{}$ no special treatment is required for the extreme elements of
$\Theta$.

\begin{lemma}
  \label{lm:6}
  If $x \in \SE{ext}$ then for any $y \in NW{}$ such that $x \sim y$, we have:
  \begin{equation*}
    \phi^{NW}(x) = \phi^{NW}(y).
  \end{equation*}
  If further, $x_1$ is maximal, then $\phi^{SE}(x) > \phi^{SE}(y)$ for all $y \in \SE{}$. If $x_2$ is minimal, then $\phi^{SE}(x) <
  \phi^{SE}(y)$ for all $y \in \SE{}$. \\
  \\
  If $x \in \NW{ext}$ then for any $y \in SE{}$ such that $x \sim y$, we have:
  \begin{equation*}
    \phi^{SE}(x) = \phi^{SE}(y).
  \end{equation*}
  If further, $x_1$ is minimal, then $\phi^{NW}(x) < \phi^{NW}(y)$ for all $y \in \NW{}$. If $x_2$ is maximal, then $\phi^{NW}(x) >
  \phi^{NW}(y)$ for all $y \in \NW{}$.
\end{lemma}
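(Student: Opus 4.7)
The plan is to handle the two claims separately (equality of $\phi^{NW}$-values on the $\NW{}$-side for $\SE{ext}$ points and the $\SE{}$-side for $\NW{ext}$ points, and the strict inequalities at the boundary) by splitting on the essentiality structure of $\SE{}$ and $\NW{}$ following the four sub-cases of Section \ref{sec:align-addit}. A key observation underlying everything is that any $x \in \SE{ext}$ sits in $\Theta \subseteq \NW{} \cap \SE{}$, so only its location at the boundary of $\NW{}$ is at stake for the equality statement, while its boundary role in $\SE{}$ drives the strict inequalities.

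For the equality $\phi^{NW}(x) = \phi^{NW}(y)$, I would first dispatch the case where both coordinates are essential on both regions. Then Lemma \ref{lm:36} fixes $\phi_1(x_1), \phi_2(x_2)$ uniquely via equivalences inside (the non-extreme part of) $\Theta$. Choose a standard sequence $z^{(n)} \in \Theta \setminus \SE{ext}$ with $\phi_i(z^{(n)}_i) \to \phi_i(x_i)$ (guaranteed by denserangedness together with Lemma \ref{lm:23}); for each $n$, restricted solvability and the already-established representation $\phi^{NW}$ on $\NW{} \setminus \NW{ext}$ give $y^{(n)} \in \NW{}$ with $z^{(n)} \sim y^{(n)}$ and $\phi^{NW}(z^{(n)}) = \phi^{NW}(y^{(n)})$. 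Passing to the limit (using monotonicity of the additive form $\phi^{NW}$ and the fact that the sequence of $y^{(n)}$-values is sandwiched between equivalents of $x$ and $y$) yields $\phi^{NW}(x) = \phi^{NW}(y)$. In the asymmetric case where $\NW{}$ has only one essential coordinate, say $X_2$, the representation collapses to $\phi^{NW}(\cdot) = \phi_2(\cdot_2)$ by construction (Section \ref{sec:areas-with-single} with $\lambda = 0$), and $x \sim y$ in $\NW{}$ forces $x_2 \sim_2 y_2$, so the equality reduces to $\phi_2(x_2) = \phi_2(y_2)$. The freedom in $\phi_2(x_2)$ granted by Lemma \ref{lm:37} is precisely consumed by declaring $\phi_2(x_2) := \phi_2(y_2)$, and Lemma \ref{lm:23} ensures this choice lies in the admissible range there.

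For the strict inequality, if $x_1$ is maximal then $x \in \SE{ext}$ and Lemma \ref{lm:35} (applied to $\SE{}$ with $X_2$ essential) gives $x \cgt w$ for every $w \in \SE{}$ not equivalent to $x$, so $\phi^{SE}(x) > \phi^{SE}(w)$ because $\phi^{SE}$ represents $\cgeq$ on $\SE{} \setminus \SE{ext}$ (Theorem \ref{theo:4}) and is monotone. If only $X_1$ is essential on $\SE{}$, then $\phi^{SE}$ depends only on $\phi_1$ and maximality of $x_1$ gives the inequality directly. The case of minimal $x_2$ uses Lemma \ref{lm:35} symmetrically, and the $\NW{ext}$ statements are fully symmetric, exchanging the roles of $\SE{}$ and $\NW{}$ and of ``maximal'' and ``minimal''.

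The main obstacle will be the asymmetric-essentiality cases: one must verify that the value of $\phi_i(x_i)$ dictated by matching $\phi^{NW}(x)$ to $\phi^{NW}(y)$ is simultaneously compatible with the admissible range from Lemma \ref{lm:37} and with the uniquely-defined value inherited from the $\SE{}$-side. Lemma \ref{lm:23} delivers exactly this compatibility in the limit, while Lemma \ref{lm:29} (monotonicity driven by bi-independence) prevents any ambiguity in the direction of the extension. Once that is in place, the rest is bookkeeping across the four essentiality cases.
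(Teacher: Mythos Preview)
Your proposal is correct and follows essentially the same route as the paper: a case split on the essentiality structure of $\SE{}$ and $\NW{}$, with Lemma~\ref{lm:36} handling the fully two-essential case, Lemma~\ref{lm:37} (together with the definition of $\phi$ from Section~\ref{sec:areas-with-single}) handling the asymmetric cases, and Lemma~\ref{lm:35} delivering the strict inequalities. Your explicit limit construction via $z^{(n)}\in\Theta$ is more detailed than what the paper writes, but it is not needed once Lemma~\ref{lm:36} pins down $\phi_i(x_i)$ uniquely through equivalences already inside the relevant region; the paper simply invokes that uniqueness directly.
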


\begin{proof}
  In case when $\SE{}$ and $\NW{}$ have the same number of essential variables, value functions at the extreme points are defined uniquely
  (Lemma \ref{lm:36} for case when both variables are essential and by definition otherwise) and the second parts of each statement follow
  immediately by Lemma \ref{lm:35}. If only $\SE{}$ or $\NW{}$ have two essential variables, the result follows by Lemma \ref{lm:37}, Lemma
  \ref{lm:35}, and definition of $\phi$.
\end{proof}

\begin{lemma}
  \label{lm:24}
  For any $x \in X$ we have:
  \begin{equation*}
    \phi_1(x_1) = \phi_2(x_2) \Rightarrow x \in \Theta.
  \end{equation*}
\end{lemma}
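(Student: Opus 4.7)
The plan is a proof by contradiction. Assume $\phi_1(x_1) = \phi_2(x_2)$ but $x \notin \Theta$. By Lemma \ref{lm:5}, $X = \SE{} \cup \NW{}$, so without loss of generality I take $x \in \SE{} \setminus \NW{}$; the case $x \in \NW{} \setminus \SE{}$ is fully symmetric.

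The key step is to apply Lemma \ref{lm:21} to obtain $z \in \Theta$ with $\SE{x} \subset \SE{z}$, which yields $x_1 \cgeq_1 z_1$ and $z_2 \cgeq_2 x_2$, and hence by monotonicity of the value functions $\phi_1(x_1) \geq \phi_1(z_1)$ and $\phi_2(z_2) \geq \phi_2(x_2)$. In the principal case, when $z$ is non-extreme, Lemma \ref{lm:12} supplies $\phi_1(z_1) = \phi_2(z_2)$ (its alternative, where $\cgeq$ has a purely additive representation, trivializes the claim because then $\Theta = X$). Combining,
\begin{equation*}
\phi_1(x_1) \geq \phi_1(z_1) = \phi_2(z_2) \geq \phi_2(x_2) = \phi_1(x_1),
\end{equation*}
so the whole chain collapses to equalities. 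Since $\phi_1$ and $\phi_2$ represent $\cgeq_1$ and $\cgeq_2$ respectively (as components of the additive representations $\phi^{SE}$ and $\phi^{NW}$ constructed in Sections \ref{sec:build-addit-repr}--\ref{sec:join-togeth-addit}), this yields $x_1 \sim_1 z_1$ and $x_2 \sim_2 z_2$, and then the structural assumption forces $x_1 = z_1$ and $x_2 = z_2$, so $x = z \in \Theta$, the desired contradiction.

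The hard part will be handling the case in which the $z$ furnished by Lemma \ref{lm:21} is extreme in $\Theta$, since Lemma \ref{lm:37} then permits one of $\phi_1(z_1), \phi_2(z_2)$ to be set freely on a half-line rather than being forced equal to the other. My plan for that case is to use the extreme coordinate of $z$ to pin down one coordinate of $x$ immediately: for instance, if $z_1$ is maximal, then $x_1 \cgeq_1 z_1$ together with maximality and the structural assumption forces $x_1 = z_1$, and symmetric remarks handle the other extreme configurations via Lemma \ref{lm:35}. With one coordinate of $x$ thereby matched to the corresponding coordinate of $z$, I would use Lemma \ref{lm:6} to compare $\phi^{SE}(x)$ with $\phi^{SE}(z)$ on the remaining coordinate, and then invoke the closedness extension of $X$ together with Lemma \ref{cor:crit_order}(1) (at most one $\Theta$-point per coordinate) to rule out $x \neq z$: any hypothetical $b x_2 \in \NW{} \setminus \SE{}$ would, by closedness, force the existence of some $c x_2 \in \Theta$, and uniqueness of $\Theta$-points per second coordinate then identifies this point with $z$, collapsing $x$ back to $z$ and giving the contradiction.
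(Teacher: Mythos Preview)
Your argument for the principal (non-extreme) case is correct and is exactly the paper's: pick $z\in\Theta$ with $x\in\SE{z}$ via Lemma~\ref{lm:21}, get the chain $\phi_1(x_1)\geq\phi_1(z_1)=\phi_2(z_2)\geq\phi_2(x_2)=\phi_1(x_1)$ from Lemma~\ref{lm:12}, and conclude $x=z$ by the structural assumption.

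Your treatment of the extreme case, however, does not go through as written. Two concrete problems. First, you never establish that a point $b x_2\in\NW{}\setminus\SE{}$ exists; calling it ``hypothetical'' is not enough, since the closedness clause you want to invoke needs such a point as input. Second, and more seriously, the step ``uniqueness of $\Theta$-points per second coordinate then identifies $c x_2$ with $z$'' is wrong: the point $c x_2$ has second coordinate $x_2$, while $z=z_1z_2$ has second coordinate $z_2$, and you have already argued $z_2\cgt_2 x_2$. Lemma~\ref{cor:crit_order}(1) forbids two $\Theta$-points \emph{sharing} a coordinate; it says nothing about $c x_2$ and $z$, which share none. So ``collapsing $x$ back to $z$'' does not follow. (A salvageable variant is to note that if $c x_2$ happens to be non-extreme you are back in the principal case with $z'=c x_2$, but then you must still handle the sub-case where $c x_2$ is itself extreme, and your argument loops.)

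The paper avoids all of this by a single observation: if every $z\in\Theta$ with $x\in\SE{z}$ is extreme (i.e.\ has $z_1$ maximal or $z_2$ minimal), then $x$ itself must be an extreme point of $\Theta$---but $x\notin\Theta$ by assumption. In other words, for $x\in\SE{}\setminus\Theta$ one can always choose the $z$ of Lemma~\ref{lm:21} to be non-extreme, reducing immediately to the case you already handled. That is both shorter and sidesteps the bookkeeping with Lemmas~\ref{lm:6} and~\ref{lm:37}.
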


\begin{proof}
  Assume $\phi_1(x_1) = \phi_2(x_2)$ and $x \not \in \NW{}$. By Lemma \ref{lm:21} exists $z \in \Theta$ such that $x \in \SE{z}$. By
  structural assumption we have $\phi_2(z_2) \geq \phi_2(x_2) = \phi_1(x_1) \geq \phi_1(x_1)$, with at least one inequality being strict
  (otherwise $x = z$).

  If $z$ is non-extreme then by Lemma \ref{lm:12} we have $\phi_1(z_1) = \phi_2(z_2)$ - a contradiction. If $z$ is extreme, the only cases
  when $\phi_2(z_2) > \phi_1(z_1)$ can hold is when either $z_2$ is minimal or $z_1$ is maximal. But it is easy to see that in this case the
  only points for which it is not possible to find a non-extreme $z$, are the extreme points themselves.  
\end{proof}

\begin{lemma}
  \label{lm:9}
  For all $x \in X$ such that $\phi_1(x_1) \geq \phi_2(x_2)$ we have $x \in \SE{}$. If $x \in X$ is such that $\phi_2(x_2) \geq
  \phi_1(x_1)$ then $x \in \NW{}$.
\end{lemma}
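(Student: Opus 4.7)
The statement is the converse of the second part of Lemma \ref{cor:crit_order}, so my plan is to derive it as a short consequence of the results already built up, using Lemma \ref{lm:5}, Lemma \ref{cor:crit_order} and Lemma \ref{lm:24}.

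First I would handle the case $\phi_1(x_1) \geq \phi_2(x_2)$. By Lemma \ref{lm:5} we know $X = \SE{} \cup \NW{}$, so either $x \in \SE{}$, in which case there is nothing to prove, or $x \in \NW{}$. In the latter case, Lemma \ref{cor:crit_order}(2) gives $\phi_2(x_2) \geq \phi_1(x_1)$. Combined with the hypothesis $\phi_1(x_1) \geq \phi_2(x_2)$, this forces $\phi_1(x_1) = \phi_2(x_2)$. Then Lemma \ref{lm:24} yields $x \in \Theta = \NW{} \cap \SE{}$, so in particular $x \in \SE{}$.

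The case $\phi_2(x_2) \geq \phi_1(x_1)$ is perfectly symmetric: if $x \not\in \NW{}$, then by Lemma \ref{lm:5} we must have $x \in \SE{}$, whence Lemma \ref{cor:crit_order}(2) gives $\phi_1(x_1) \geq \phi_2(x_2)$; together with the hypothesis this yields equality and Lemma \ref{lm:24} places $x$ in $\Theta \subset \NW{}$.

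There is no real obstacle here; the only subtle point is that we are implicitly relying on Lemma \ref{lm:24} being applicable to all $x \in X$ (including extreme points), which is indeed the case as stated. The additive case $\lambda = 1$ does not cause any issue because in that situation $\SE{} = \NW{} = X$ by Lemma \ref{lm:10}, so both implications hold trivially.
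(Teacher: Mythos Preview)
Your proof is correct and follows essentially the same route as the paper, combining Lemma~\ref{cor:crit_order}(2) with Lemma~\ref{lm:24}. The one organizational difference is that the paper treats extreme points separately via Lemma~\ref{lm:37}, whereas you handle them implicitly: since extreme points lie in $\Theta \subset \SE{} \cap \NW{}$, your first branch ``$x \in \SE{}$, done'' already covers them, and the remaining branch $x \in \NW{} \setminus \SE{}$ consists only of non-extreme points where Lemma~\ref{cor:crit_order}(2) is known to apply. This makes your write-up slightly cleaner than the paper's.
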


\begin{proof}
  For non-extreme points this follows from Lemma \ref{cor:crit_order} and Lemma \ref{lm:24}. Assume $\phi_1(x_1) \geq \phi_2(x_2)$. If
  $\phi_1(x_1) = \phi_2(x_2)$ then by Lemma \ref{lm:24} $x \in \Theta$, so we are done. Therefore, assume $\phi_1(x_1) > \phi_2(x_2)$. If $x
  \in \NW{}$, then by Lemma \ref{cor:crit_order} it must be $\phi_2(x_2) \geq \phi_1(x_1)$, a contradiction. Therefore, $x \in \SE{}$. For
  extreme points the result follows from Lemma \ref{lm:37}.
\end{proof}

Finally, we can formulate:

\begin{theorem} The following statements hold:
  \label{theo:3}
  \begin{itemize}
  \item If both $\NW{}$ and $\SE{}$ have two essential variables, then for all $x \in X$:
    \begin{equation*}
      x \in \Theta \iff \phi_1(x_1) = \phi_2(x_2),
    \end{equation*}
    unless $\cgeq$ can be represented by an additive function (i.e $\lambda = 1$ in (\ref{eq:phi_def})).
  \item If only $\NW{}$ or only $\SE{}$ have two essential variables, the for all non-extreme $x \in X$:
    \begin{equation*}
      x \in \Theta \Rightarrow \phi_1(x_1) = \phi_2(x_2),
    \end{equation*}
    while at extreme $x \in X$, $\phi_1(x_1)$ and $\phi_2(x_2)$ are related as in Lemma \ref{lm:37}. Finally, for all $x \in X$:
    \begin{equation*}
      \phi_1(x_1) = \phi_2(x_2) \Rightarrow x \in \Theta,
    \end{equation*}
  \item If both $\NW{}$ and $\SE{}$ have only one essential variable, then for all $x \in X$:
    \begin{equation*}
      x \in \Theta \iff \phi_1(x_1) = \phi_2(x_2).
    \end{equation*}
  \end{itemize}
\end{theorem}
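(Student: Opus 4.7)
The plan is to assemble Theorem \ref{theo:3} directly from the lemmas already in place, proceeding case by case according to the three essentiality configurations named in the statement. In every case the strategy is the same: split $X$ into non-extreme and extreme points, dispatch each half to a lemma that already handles it, and then check that the two halves fit together.

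For the first case (both $\NW{}$ and $\SE{}$ have two essential coordinates), the forward implication on non-extreme points is exactly Lemma \ref{lm:12}, which also carries the ``unless $\lambda = 1$'' clause. At extreme $x \in \Theta$, Lemma \ref{lm:36} states that $\phi_i$ is uniquely defined and satisfies $\phi_1(x_1) = \phi_2(x_2)$, so the forward implication extends to all of $\Theta$. The reverse implication $\phi_1(x_1) = \phi_2(x_2) \Rightarrow x \in \Theta$ is Lemma \ref{lm:24}, which makes no essentiality assumption, so it applies unchanged.

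For the second case (only one of $\NW{}, \SE{}$ has two essential coordinates), the non-extreme forward implication is again Lemma \ref{lm:12}. At extreme $x \in \Theta$ the value functions are set according to Lemma \ref{lm:37}: one coordinate's value is uniquely determined by equivalences inside the region with two essential coordinates, while the other may be chosen in a half-line bounded by that value. This is the content the theorem reports at extreme points. The reverse implication again follows from Lemma \ref{lm:24}; one should simply check that Lemma \ref{lm:37}'s freedom cannot create a spurious point with $\phi_1(x_1) = \phi_2(x_2)$ outside $\Theta$, which is immediate because the free value is chosen strictly beyond the fixed one unless we are exactly at the extreme $\Theta$-point itself.

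For the third case (both $\NW{}$ and $\SE{}$ have a single essential coordinate), the value functions $\phi_1, \phi_2$ were constructed in Section \ref{sec:both-areas-have} by explicitly setting $\phi_2(z_2) = \phi_1(z_1)$ for every $z \in \Theta$, using Lemma \ref{lm:25} to guarantee that $\Theta$ is rich enough for such a definition to extend to all of $X$; so the forward implication is built into the construction, with no additive exception needed. The reverse implication is once more Lemma \ref{lm:24}.

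The main obstacle, to the limited extent that there is one, is bookkeeping: making sure the case split matches the hypotheses of the lemmas being invoked (Lemma \ref{lm:12} and Lemma \ref{lm:36} each require two essential coordinates somewhere, whereas Lemma \ref{lm:24} does not), and being careful in Case 2 that the ``free'' value at an extreme point of $\Theta$ prescribed by Lemma \ref{lm:37} is compatible with both directions of the stated equivalence. Beyond this, the theorem is essentially a consolidation statement and no new argument is needed.
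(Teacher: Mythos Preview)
Your proposal is correct and matches the paper's own proof, which simply reads ``Follows from Lemmas \ref{lm:12}, \ref{lm:24}, \ref{lm:36}, \ref{lm:37}.'' You have unpacked exactly how those four lemmas distribute across the three essentiality cases and across non-extreme versus extreme points, which is precisely the intended reading.
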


\begin{proof}

Follows from Lemmas \ref{lm:12}, \ref{lm:24}, \ref{lm:36}, \ref{lm:37}.
  
\end{proof}

% \begin{lemma}
%   \label{lm:18}
%   Assume $\NW{}$ has two essential coordinates, $\SE{}$ has one and let $z \in \Theta$ be boundary. Then, if $z_2$ is maximal, $\phi^{SE}(z)
%   > \phi^{NW}(x)$ for all $x \in \NW{} \setminus z$. If $z_1$ is minimal, then $\phi^{NW}(x) > \phi^{SE}(z)$ for all $x \in \NW{} \setminus z$. Similarly, if
%   $\SE{}$ has two essential coordinates and $\NW{}$, then, if $z_1$ is maximal, $\phi^{NW}(z) > \phi^{SE}(x)$ for all $x \in \SE{} \setminus z$. If
%   $z_2$ is minimal, then $\phi^{SE}(x) > \phi^{NW}(z)$ for all $x \in \SE{} \setminus z$.
% \end{lemma}
% \begin{proof}
%   Lemma \ref{lm:37} details what values functions $\phi_i$ can take at the boundary points of $\Theta$.  Consider the case when $\NW{}$ has
%   two essential variables and $\SE{}$ has one - $X_1$. Let $z \in \Theta$ and $z_2$ be maximal. According to Lemma \ref{lm:37} $\phi_1(z_1)$ can be
%   set to any value greater or equal than some limit value $\phi^*$. If it is set to the limit values, $\phi^{NW}$ and $\phi^{SE}$ are equal
%   by Lemma \ref{lm:equal_theta}: $\phi^{NW} = \frac{1}{1+\lambda k}\phi^* + \frac{\lambda k}{1 + \lambda k}\phi^* = \phi* + 0\phi^* =
%   \phi^{SE}$. It also holds by Lemma \ref{lm:35}, that $\phi^* > \phi^{NW}(x)$ for all $x \in \SE{}$. From this, if we set $\phi_1(z_1) >
%   \phi^*$, we immediately get $\phi^{SE}(z) > \phi^{NW}(x)$ for all $x \in \NW{} \setminus z$. The other cases are proved  identically. 
% \end{proof}

% \paragraph{Redefining $\NW{}$ and $\SE{}$}
% \label{sec:redefining-nw-se}

% In the case when 

\section{Constructing a global representation on $X$}
\label{sec:vnw-vse-represent}

\begin{lemma}
  \label{lm:11}
  Assume $z^1 \cgt z^2$ for some $z^1, z^2 \in \Theta$. There exists $z$ such that $z^1 \cgt z \cgt z^2$. 
\end{lemma}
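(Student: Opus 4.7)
The goal is to produce $z\in\Theta$ strictly between $z^1$ and $z^2$. My plan is to use order density of $X_1$ to pick a first-coordinate value $a$ strictly between $z^1_1$ and $z^2_1$, and then use the closedness clause of the structural assumption to find a partner second coordinate $r\in X_2$ so that $ar\in\Theta$ lies strictly between $z^1$ and $z^2$ in the $\phi^{SE}$-value.

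First I would observe that $z^1\cgt z^2$ forces $z^1_1\cgt_1 z^2_1$ or $z^1_2\cgt_2 z^2_2$ (otherwise pointwise monotonicity, Lemma \ref{lm:4}, would give $z^2\cgeq z^1$); by symmetry assume $z^1_1\cgt_1 z^2_1$. Order density of $X_1$ (stated in the structural paragraph as a consequence of order density of $X$ together with restricted solvability) then yields $a\in X_1$ with $z^1_1\cgt_1 a\cgt_1 z^2_1$.

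Next, directly from the definitions $az^1_2\in\NW{z^1}\subset\NW{}$ and $az^2_2\in\SE{z^2}\subset\SE{}$, and strict monotonicity (Lemma \ref{lm:29}) gives $z^1\cgt az^1_2$ and $az^2_2\cgt z^2$. If either $az^1_2$ or $az^2_2$ happens to lie in $\Theta$, take it as the candidate $z$. Otherwise $az^1_2\in\NW{}\setminus\SE{}$ and $az^2_2\in\SE{}\setminus\NW{}$, so in particular $az^1_2\notin\SE{}$ and $az^2_2\notin\NW{}$; the second clause of the closedness assumption (``whenever exist $ap\notin\NW{}$ and $aq\notin\SE{}$, there exists $ar\in\Theta$'') then produces the required $r\in X_2$ with $ar\in\Theta$.

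To finish, I would verify that the candidate $z$ satisfies $z^1\cgt z\cgt z^2$ using the representations from Sections \ref{sec:join-togeth-addit}--\ref{sec:0-set}. By Theorem \ref{theo:3} (and Lemma \ref{lm:12}), on $\Theta$ we have $\phi_1=\phi_2$, so both $\phi^{SE}$ and $\phi^{NW}$ collapse to this common value there. Since $z^1,z^2,z$ all lie in $\SE{}$ and the common values satisfy $\phi_1(z^1_1)>\phi_1(a)>\phi_1(z^2_1)$ (the values on $\Theta$ at $z^1,z^2$ equal $\phi_1(z^1_1)$, $\phi_1(z^2_1)$, and the value on $\Theta$ at $z$ equals $\phi_1(a)$), the representation $\phi^{SE}$ separates them strictly, which gives $z^1\cgt z\cgt z^2$.

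The main obstacle is the bookkeeping around extreme elements of $\Theta$: if $z^1$ or $z^2$ is extreme, one must be careful that $z^1_1\cgt_1 z^2_1$ really follows (using Lemmas \ref{lm:35}--\ref{lm:37} to rule out degenerate configurations or to swap coordinates), and that the value-comparison is not muddled by the flexibility of $\phi_i$ at extreme points described in Lemma \ref{lm:37}. Once the intermediate $a$ is in hand, however, the closedness assumption on $\SE{}$ and $\NW{}$ does the real work by producing a point of $\Theta$ in the column above $a$.
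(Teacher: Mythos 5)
Your argument is correct and matches the paper's own proof, which is just the one-line citation ``by order density and closedness assumption'': you use order density of $\cgeq_1$ to find the intermediate coordinate $a$ and the closedness of $\SE{}$ and $\NW{}$ to produce a point of $\Theta$ in the column over $a$, exactly the two ingredients the paper invokes. Your elaboration (including the monotonicity and $\phi$-value checks and the caveats about extreme elements of $\Theta$) supplies the details the paper omits.
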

\begin{proof}
  By order density and closedness assumption.
\end{proof}

For all $x \in X$ let $\phi^x(x)$ be equal to $\phi^{SE}(x)$ if $x \in \SE{}$ or $\phi^{NW}(x)$ if $x  \in \NW{}$. For points in $\Theta$
values of two latter functions coincide, so $\phi^x(x)$ is well-defined.   

\begin{lemma}
  \label{lm:20}
  Let $\phi^x(x) > \phi^y(y)$. Then, $x \cgt y$. 
\end{lemma}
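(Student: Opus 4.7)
The plan is to split by where $x$ and $y$ lie. If both $x, y \in \SE{}$ then $\phi^x = \phi^{SE} = \phi^y$ and Theorem \ref{theo:4} (which says $\phi^{SE}$ globally represents $\cgeq$ on $\SE{}$) yields $x \cgt y$ immediately; the case $x, y \in \NW{}$ is symmetric. So the only substantive situation is the mixed one, say $x \in \SE{} \setminus \NW{}$ and $y \in \NW{} \setminus \SE{}$.

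For the mixed case my strategy is to slip in a bridge point $z \in \Theta$ such that
\begin{equation*}
\phi^{SE}(x) \;>\; \phi^{SE}(z) \;=\; \phi^{NW}(z) \;>\; \phi^{NW}(y).
\end{equation*}
Once such a $z$ is found, and chosen non-extreme so that the middle equality is automatic by Theorem \ref{theo:3}, the first strict inequality and $\phi^{SE}$ representing $\cgeq$ on $\SE{}$ give $x \cgt z$, the last strict inequality and $\phi^{NW}$ representing $\cgeq$ on $\NW{}$ give $z \cgt y$, and transitivity of the weak order $\cgeq$ closes the argument.

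To locate such a $z$, I would anchor $\Theta$ on both sides using Lemma \ref{lm:21}: pick $z^x \in \Theta$ with $x \in \SE{z^x}$ and $z^y \in \Theta$ with $y \in \NW{z^y}$. Combined with order density of $\cgeq$ on $X$ and Lemma \ref{lm:11} (density within $\Theta$), the image $\phi(\Theta) \subset \mathbb{R}$ is order dense in its range, so a non-extreme $z \in \Theta$ whose common value $\phi(z)$ lies in the open interval $(\phi^{NW}(y), \phi^{SE}(x))$ should exist.

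The main obstacle is the extremal subcase, where either $x \in \SE{ext}$ or $y \in \NW{ext}$ and the relevant $\phi$-value happens to attain an extremum not shared by any non-extreme element of $\Theta$, so that the bridge-point argument cannot be carried out verbatim. Here I would fall back on Lemma \ref{lm:6}: an extreme $x \in \SE{ext}$ automatically lies in $\Theta$ (hence in $\NW{}$), and by that lemma either $\phi^{SE}(x)$ exceeds all other $\phi^{SE}$ values on $\SE{}$ (if $x_1$ is maximal) or is dominated by them (if $x_2$ is minimal), with the symmetric statement for $\NW{ext}$. In such a situation the comparison $x$ vs.\ $y$ collapses to one that can be carried out entirely within a single region, reducing it to Case~1, with the value of $\phi^{NW}(x) = \phi^{SE}(x)$ guaranteed consistent by the choice dictated in Lemma \ref{lm:37}.
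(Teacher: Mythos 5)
Your proposal follows essentially the same route as the paper: reduce to the mixed case and insert a non-extreme bridge point $z \in \Theta$ with $\phi(x) > \phi(z) > \phi(y)$, then conclude within $\SE{}$ and $\NW{}$ separately and use transitivity. The one place the paper works harder than your sketch is in justifying that such a $z$ exists: it first shows (via the point $x_1y_2$ and the closedness assumption) that $x$ and $y$ cannot both strictly dominate, nor both be strictly dominated by, every element of $\Theta$, and then manufactures the bridge point from denserangedness of $\phi_1$ together with closedness; your appeal to Lemma \ref{lm:21} does not by itself order $\phi(z^x)$ against $\phi(x)$ or $\phi(y)$ (membership $x \in \SE{z^x}$ pushes $\phi_1$ up but $\phi_2$ down), so that sandwiching step still needs to be supplied before the density argument can be invoked.
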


\begin{proof}
  If $x$ and $y$ belong to $\SE{}$ or $\NW{}$ the conclusion is immediate, so we only need to look at the remaining case. Assume $x \in
  \SE{}$, $y \in \NW{}$. 

  First we will show that it can't hold that $x \cgt z, y \cgt z$ or $z \cgt x, z \cgt y$ for all $z \in \Theta$. Assume $x \cgt z, y \cgt
  z$ for all $z \in \Theta$. Let $x_1 \cgeq_1 y_1$. If $x_1y_2 \in \SE{}$, then exists $z_1y_2 \in \Theta$ such that $x_1y_2 \cgeq z_1y_2
  \cgeq y_1y_2$, a contradiction. If $x_1y_2 \in \NW{}$, then exists $x_1z_2 \in \Theta$, such that $x_1y_2 \cgeq x_1z_2 \cgeq x_1x_2$,
  again a contradiction. Other cases are symmetrical. 

  Hence, assume there exists $z^1 \in \Theta$, such that $x \cgeq z^1, y \cgeq z^1$ and $z^2 \in \Theta$ such that $z^2 \cgeq x$ or $z^2
  \cgeq y$. The only non-trivial case is $z^2 \cgt x \cgt z^1, z^2 \cgt y \cgt z^1 $ (in other cases one of the points $z^1$ or $z^2$
  immediately leads to the conclusion). We have 
  \begin{equation*}
    \phi(z^2) > \phi(x) > \phi(y) > \phi(z^1),
  \end{equation*}
  hence also $\phi_1(z^2_1) = 0.5 \phi(z^2) > 0.5 \phi(x) > 0.5 \phi(y) > 0.5 \phi(z^1) = \phi_1(z^1_1)$. By denserangedness of $\phi_1$
  (see \citep{wakker1991additive-RO} equation (16)), we can find a point $c_1$ such that $0.5 \phi(x) > \phi(c_1) > 0.5 \phi(y)$. We have
  $c_1z^1_2 \in \SE{}, c_1z^2_2 \in \NW{}$, hence there exists $c_2$ such that $c_1c_2 \in \Theta$. Since $c_1c_2$ is not extreme, we have
  $\phi(c_1c_2) = 2\phi_1(c_1)$, and hence $\phi(x) > \phi(c) > \phi(y)$. The first inequality is in $\SE{}$, while the second is in
  $\NW{}$, hence we conclude that $x \cgt y$. 
\end{proof}

\begin{lemma}
  \label{lm:16}
  Let $x \cgt y$. Then, $\phi(x) > \phi(y)$. 
\end{lemma}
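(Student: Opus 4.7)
The plan is to argue by contrapositive, combining Lemma \ref{lm:20} (the converse, already proved) with a short case analysis driven by order density. Suppose, toward contradiction, that $x \cgt y$ but $\phi^x(x) \leq \phi^y(y)$. The strict inequality $\phi^x(x) < \phi^y(y)$ is immediately dispatched by Lemma \ref{lm:20}, which would yield $y \cgt x$. So the genuine task is to rule out the equality case $\phi^x(x) = \phi^y(y) = c$ together with $x \cgt y$.

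First I would peel off the easy situation where $x$ and $y$ lie in the same region. If both are in $\SE{}$, then $\phi^{SE}$ represents $\cgeq$ on $\SE{}$, so $\phi^{SE}(x) = \phi^{SE}(y) = c$ forces $x \sim y$, contradicting $x \cgt y$; the case $x,y \in \NW{}$ is symmetric. By Lemma \ref{lm:5} this reduces the substantive case, after a WLOG swap, to $x \in \SE{} \setminus \NW{}$ and $y \in \NW{} \setminus \SE{}$.

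In this case I would use order density of $\cgeq$ to pick $w \in X$ with $x \cgt w \cgt y$. Two applications of Lemma \ref{lm:20} pin down $\phi^w(w)$: if $\phi^w(w) > c = \phi^x(x)$ the lemma gives $w \cgt x$, and if $c = \phi^y(y) > \phi^w(w)$ it gives $y \cgt w$, each contradicting the chain $x \cgt w \cgt y$. Hence $\phi^w(w) = c$. But $w \in \SE{} \cup \NW{}$ by Lemma \ref{lm:5}, and both possibilities are fatal: if $w \in \SE{}$, then $x \cgt w$ with both in $\SE{}$ forces $\phi^{SE}(x) > \phi^{SE}(w) = c$, contradicting $\phi^{SE}(x) = c$; if $w \in \NW{}$, then $w \cgt y$ with both in $\NW{}$ forces $\phi^{NW}(w) > \phi^{NW}(y) = c$, the mirror contradiction.

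I do not expect a serious obstacle, since the heavy lifting—that $\phi^{SE}$ and $\phi^{NW}$ coincide on $\Theta$ so that $\phi^w$ is unambiguous for every $w$, that $X = \SE{} \cup \NW{}$, and that the additive representations on $\SE{}$ and $\NW{}$ are genuine representations of $\cgeq$ there—is already in place from earlier sections. The one step that requires a concrete structural input is the insertion of $w$: we need a point strictly between $x$ and $y$ in the order, and order density of $\cgeq$ (not merely of each $\cgeq_i$ or of $\Theta$) is exactly the ingredient that supplies it.
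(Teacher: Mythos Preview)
Your proof is correct and is in fact cleaner than the paper's own argument. The paper proceeds differently: it first disposes of the case where one of $\SE{}$, $\NW{}$ has a single essential coordinate via Lemma~\ref{lm:34}, and in the remaining case it constructs, using coordinate-wise order density, closedness, and restricted solvability, a point of the form $z_1x_2 \in \SE{}$ (or $w_1x_2 \sim y$) strictly below $x$ in $\SE{}$, from which $\phi^{SE}(x) > \phi^{SE}(z_1x_2) \geq \phi(y)$ follows. You instead invoke global order density of $\cgeq$ directly to insert an arbitrary $w$ with $x \cgt w \cgt y$, pin its $\phi$-value to $c$ via two applications of Lemma~\ref{lm:20}, and then use $X = \SE{} \cup \NW{}$ (Lemma~\ref{lm:5}) to force $w$ into the same region as either $x$ or $y$, yielding an immediate contradiction with the within-region representation. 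Your route avoids both the essentiality case split and the solvability argument; the paper's route is more constructive in that it locates the intermediate point explicitly within $\SE{}$, but this extra structure is not needed for the conclusion.
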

\begin{proof}
  By Lemma \ref{lm:20} we have $x \cgt y \Rightarrow \phi(x) \geq \phi(y)$. Hence, we need to show that $\phi(x) \neq \phi(y)$. 
  Assume, $x \in \SE{}, y \in \NW{}$. If $\SE{}$ or $\NW{}$ have only one essential coordinate, then by Lemma \ref{lm:34} exists $z \in
  \Theta$, equivalent either to $x$ or $y$, from which the conclusion is immediate. Hence, assume both areas have two essential
  coordinates. 

  $x_1$ and $x_2$ can't be both minimal, because otherwise $x \cgt y$ cannot hold by pointwise monotonicity, so assume $x_1$ is not
  minimal. We will find a point $z$ in $\SE{}$ such that $x \cgt z \cgeq y$. Take some $z_1$ such that $x_1 \cgeq_1 z_1$ and $z_1x_2 \in
  \SE{}$ (it can be found by closedness and order density). If $z_1x_2 \cgeq y$, we have $\phi(x) > \phi(z_1x_2) \geq \phi(y)$, otherwise by
  restricted solvability we can find $w_1$ such that $w_1x_2 \sim y$, $w_1 \in \SE{}$, and hence $\phi(x) > \phi(w_1x_2) = \phi(y)$
  (equality follows from Lemma \ref{lm:20}). The case when $x_2$ is not maximal is identical.
\end{proof}

\begin{theorem}
  \label{th:5}
  For any $x, y \in X$ we have 
  \begin{equation*}
    x \cgeq y \iff \phi(x) \geq \phi(y).
  \end{equation*}
\end{theorem}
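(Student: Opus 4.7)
The plan is to derive the theorem as an immediate corollary of Lemmas \ref{lm:20} and \ref{lm:16}, which together already pin down both directions of the biconditional up to handling the equivalence case. Since $\cgeq$ is a weak order and the two lemmas give strict-to-strict implications in opposite directions, what remains is essentially to chase contrapositives and confirm that indifference on one side matches equality of $\phi$ on the other.

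First I would prove the forward direction: assume $x \cgeq y$. If $x \cgt y$, Lemma \ref{lm:16} gives $\phi(x) > \phi(y)$, hence $\phi(x) \geq \phi(y)$. If instead $x \sim y$, then both $x \cgeq y$ and $y \cgeq x$ hold; suppose for contradiction $\phi(x) \neq \phi(y)$, say $\phi(x) > \phi(y)$. Then Lemma \ref{lm:20} yields $x \cgt y$, contradicting $y \cgeq x$ (and the symmetric case is identical). Hence $\phi(x) = \phi(y)$, and in particular $\phi(x) \geq \phi(y)$.

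Next I would handle the reverse direction: assume $\phi(x) \geq \phi(y)$. If $\phi(x) > \phi(y)$, Lemma \ref{lm:20} gives $x \cgt y$, hence $x \cgeq y$. If $\phi(x) = \phi(y)$, I would argue by contradiction: since $\cgeq$ is a weak order, failing $x \cgeq y$ means $y \cgt x$; but then Lemma \ref{lm:16} forces $\phi(y) > \phi(x)$, contradicting equality. Hence $x \cgeq y$ in this case as well.

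There is no substantial obstacle here — all of the delicate work (matching $\phi^{SE}$ and $\phi^{NW}$ across $\Theta$, handling extreme points, bi-independence-driven monotonicity, and the finding of intermediate points in $\Theta$ used in Lemma \ref{lm:20}) has already been done upstream. The only thing one must be careful about is to not inadvertently invoke $\phi$ being ``well-defined'' at points where both $\phi^{SE}$ and $\phi^{NW}$ apply: Theorem \ref{theo:3} guarantees that on $\Theta$ the two expressions agree, so the notation $\phi(x) = \phi^x(x)$ introduced before Lemma \ref{lm:20} is unambiguous, and the argument above can simply quote $\phi$ without distinguishing the region.
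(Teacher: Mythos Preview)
Your proposal is correct and follows exactly the paper's approach: the paper's own proof simply reads ``Immediate by Lemmas \ref{lm:20} and \ref{lm:16}.'' You have merely spelled out the routine contrapositive-chasing that justifies that one-line claim.
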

\begin{proof}
  Immediate by Lemmas \ref{lm:20} and \ref{lm:16}.
\end{proof}

\section{The representation is a Choquet integral}
\label{sec:repr-choq-integr}

Representations $\phi^{SE}$ and $\phi^{NW}$ uniquely define a capacity $\nu$. For the case when $\SE{}$ or $\NW{}$ has two essential
coordinates, set (using (\ref{eq:phi_def})): 
\begin{equation*}
  \begin{aligned}
    \nu(\{1\}) & = \frac{1}{1+k} & \text{ (from } \phi^{SE} \text{)}\\
    \nu(\{2\}) & = \frac{\lambda}{1+\lambda k} & \text{ (from } \phi^{NW} \text{)}\\
    \nu(\{1,2\})  & = 1.
  \end{aligned}
\end{equation*}
Thus, we obtain
\begin{equation*}
  \begin{aligned}
    C(\nu,\phi(x)) = \phi^{SE}(x) & = \frac{1}{1+k}\phi_1(x_1) + \frac{k}{1+k}\phi_2(x_2), & \qquad \text{ for all }x \in \SE{}, \\
    C(\nu,\phi(x)) = \phi^{NW}(x) & = \frac{1}{1+\lambda k}\phi_1(x_1) + \frac{\lambda k}{1 + \lambda k}\phi_2(x_2), & \qquad \text{ for all }x \in \NW{}.
  \end{aligned}
\end{equation*}
Assume now, that $\SE{}$ and $\NW{}$ has only one essential coordinate. If $X_1$ is essential on $\SE{}$ set $\nu({1}) = 1$, otherwise zero.
If $X_2$ is essential on $\NW{}$ set $\nu({2}) = 1$, otherwise zero. As above, set $\nu(\{1,2\}) = 1$
We obtain:
\begin{equation*}
  \begin{aligned}
    C(\nu,\phi(x)) & = \phi_1(x_1) , & \qquad \text{ if }X_1 \text{ is essential on the area containing } x, \\
    C(\nu,\phi(x)) & = \phi_2(x_2) , & \qquad \text{ if }X_2 \text{ is essential on the area containing } x,
  \end{aligned}
\end{equation*}
in particular $C(\nu,\phi(x)) = \max(\phi_1(x_1),\phi_2(x_2))$ if $X_1$ is essential on $\SE{}$, $X_2$ is essential on $\NW{}$,
$C(\nu,\phi(x)) = \min(\phi_1(x_1),\phi_2(x_2))$ if $X_2$ is essential on $\SE{}$, $X_1$ is essential on $\NW{}$.

% \begin{lemma}
%   \label{lm:26}
%   Set $\nu(\emptyset) = \nu(1) = \nu(2) = 0, \nu({1,2})=1$. Let $\phi_1, \phi_2$ be as defined in Section \ref{sec:both-areas-have}. Then
%   $C_{\nu}(\phi_1(x_1), \phi_2(x_2)) \geq C_{\nu}(\phi_1(y_1), \phi_2(y_2))$ iff $x \cgeq y$.
% \end{lemma}
% \begin{proof}
%   By construction of $\phi_1$ and $\phi_2$, it holds $\phi_1(x_1) = \phi_2(x_2)$ if $x \in \Theta$. By definition of $\SE{}$ and $\NW{}$ and
%   definition of $\phi_1, \phi_2$ it follows $\phi_1(y_1) \geq \phi_2(y_2)$ for all $y \in \SE{}$, $\phi_2(y_2) \geq \phi_1(y_1)$ for all $y
%   \in \NW{}$.

%   If both $x$ and $y$ are in $\SE{}$ (or $\NW{}$) the result is immediate, as the integral is equal to $\phi_1(z_1)$($\phi_2(z_2)$) for all
%   $z$ in $\SE{}$. Assume then $x \in \SE{}, y \in \NW{}$. By Lemma \ref{lm:25} exists $x_1p \in \Theta$, and by assumptions of the lemma $x
%   \sim x_1p$. From this the result follows.
% \end{proof}

\section{Uniqueness}
\label{sec:uniqueness}

Uniqueness properties are similar to those obtained in the homogeneous case $X = Y^n$, but are modified to accommodate for the heterogeneous
structure of the set $X$ in this paper.

\begin{lemma}
  \label{lm:uniq}
  Representation (\ref{eq:repr}) has the following uniqueness properties:
  \begin{enumerate}
  \item If $\NW{}=\SE{}=X$ then for any functions $g_1:X_1 \rightarrow \mathbb{R}, g_2:X_2 \rightarrow \mathbb{R}$
    such that (\ref{eq:repr}) holds with $f_i$ substituted by $g_i$, we have $f_i (x_i) = \alpha g_i (x_i) + \beta_i$.
  \item If both coordinates are essential on $\NW{}$ and $\SE{}$, then for any functions $g_1:X_1 \rightarrow \mathbb{R}, g_2:X_2
    \rightarrow \mathbb{R}$ such that (\ref{eq:repr}) holds with $f_i$ substituted by $g_i$, we have $f_i (x_i) = \alpha g_i (x_i) +
    \beta$.
  \item If both coordinates are essential on $\NW{}$, but only one coordinate is essential on $\SE{}$, then for any functions $g_1:X_1 \rightarrow \mathbb{R}, g_2:X_2
    \rightarrow \mathbb{R}$ such that (\ref{eq:repr}) holds with $f_i$ substituted by $g_i$, we have : 
    \begin{itemize}
    \item $f_i (x_i) = \alpha g_i (x_i) + \beta$, for all $x$ such that $f_1(x_1) < \max f_2(x_2)$ and $f_2(x_2) > \min f_1(x_1)$;
    \item $f_i(x_i) = \phi_i(g_i(x_i))$ where $\phi_i$ is an increasing function, otherwise.
    \end{itemize}
  \item If both coordinates are essential on $\SE{}$, but only one coordinate is essential on $\NW{}$, then for any functions $g_1:X_1
    \rightarrow \mathbb{R}, g_2:X_2 \rightarrow \mathbb{R}$ such that (\ref{eq:repr}) holds with $f_i$ substituted by $g_i$, we have :
    \begin{itemize}
    \item $f_i (x_i) = \alpha g_i (x_i) + \beta$, for all $x$ such that $f_2(x_2) < \max f_1(x_1)$ and $f_1(x_1) > \min f_2(x_2)$;
    \item $f_i(x_i) = \psi_i(g_i(x_i))$ where $\psi_i$ is an increasing function, otherwise.
    \end{itemize}
  \item If one coordinate is essential on $\NW{}$ and another one on $\SE{}$, then for any functions $g_1:X_1 \rightarrow \mathbb{R},
    g_2:X_2 \rightarrow \mathbb{R}$ such that (\ref{eq:repr}) holds with $f_i$ substituted by $g_i$, we have : $f_i(x_i) =
    \psi_i(g_i(x_i))$ where $\psi_i$ are increasing functions such that $f_1(x_1) = f_2(x_2) \iff g_1(x_1) = g_2(x_2)$.
  \end{enumerate}
\end{lemma}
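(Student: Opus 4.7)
The plan is to reduce the uniqueness claims to the classical uniqueness theorem for additive conjoint measurement applied separately on $\SE{}$ and on $\NW{}$, and then to glue the resulting affine transformations using the overlap $\Theta = \SE{} \cap \NW{}$.

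Fix any alternative representation $(g_1, g_2, \mu)$ satisfying (\ref{eq:repr}). First I would argue that the capacity is forced. Since $\cgeq$ restricted to $\SE{}$ is represented both by $\phi^{SE}$ and by the additive sum in the $g_i$'s with weights $\mu(\{1\})$ and $1 - \mu(\{1\})$, the trade-off ratio between the two coordinates is a property of $\cgeq|_{\SE{}}$ and is recovered by the standard-sequence approximation used in Lemmas \ref{lm:7}--\ref{lm:8}. This pins down $\mu(\{1\})$, and symmetrically $\mu(\{2\})$ via $\NW{}$; hence $\mu = \nu$, and the constants $k$ and $\lambda k$ are common to $f$ and $g$.

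Whenever a region has two essential coordinates, the uniqueness theorem for additive conjoint measurement (\citep{krantz1971foundation}, Section 6.11) yields $g_i = \alpha_* f_i + \beta^*_i$ on the projection of that region to $X_i$, for some common scale $\alpha_* > 0$ and coordinate-dependent shifts $\beta^*_i$. In Case 1 there is only one region (all of $X$) and this gives $f_i = \alpha g_i + \beta_i$ directly. In Case 2 I apply it on $\SE{}$ and on $\NW{}$ separately, obtaining $(\alpha_{SE}, \beta^{SE}_i)$ and $(\alpha_{NW}, \beta^{NW}_i)$. The gluing step uses Theorem \ref{theo:3} applied to both $f$ and $g$: since the capacity coincides, $x \in \Theta \iff f_1(x_1) = f_2(x_2) \iff g_1(x_1) = g_2(x_2)$. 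On any non-degenerate $\cgt$-pair from $\Theta$ (which exists by Lemma \ref{lm:11} unless $\cgeq$ is additive on all of $X$), substituting $g_1(x_1) = g_2(x_2)$ into the $\SE{}$ equation gives $\beta^{SE}_1 = \beta^{SE}_2 =: \beta^{SE}$; comparing the $\SE{}$ and $\NW{}$ affine maps on the two-point range of $f|_\Theta$ then forces $\alpha_{SE} = \alpha_{NW}$ and $\beta^{SE} = \beta^{NW}$, yielding the common scale and shift of Case 2.

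For Cases 3 and 4 one region has only one essential coordinate, so the uniqueness in the inessential coordinate on that region is only ordinal. I would split $X$ according to whether $x$ lies in the stated interior or at the extremes. On the interior, Lemma \ref{lm:34} together with closedness ensures that every point is $\cgeq$-equivalent to a non-extreme element of $\Theta$, so the affine transformation obtained from the region with two essential coordinates is transported through $\Theta$ to the whole interior as in Case 2. Outside the interior, the free values of $\phi_i$ at extreme elements of $\Theta$ identified in Lemma \ref{lm:37} allow $g_i$ to differ from $f_i$ by an arbitrary increasing function, giving the stated ordinal--affine split. Case 5 is immediate from the construction in Section \ref{sec:both-areas-have}: each $g_i$ is free up to an increasing transformation subject only to the coupling $g_1(x_1) = g_2(x_2) \iff x \in \Theta$, which by Theorem \ref{theo:3} is equivalent to $f_1(x_1) = f_2(x_2) \iff g_1(x_1) = g_2(x_2)$. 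The main obstacle is the Case 2 gluing: showing that not only the scale $\alpha$ but also the shifts $\beta_1, \beta_2$ collapse to a single $\beta$, which is exactly the commensurateness conclusion advertised in Section \ref{sec:uniq-impl-comm} and depends essentially on $\lambda \neq 1$; a secondary difficulty in Cases 3 and 4 is identifying the precise interior/extreme boundary in terms of $\max f_j$ and $\min f_j$ via Lemma \ref{lm:37}.
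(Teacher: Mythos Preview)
Your approach is essentially the paper's, just more fully spelled out: Case 1 by additive conjoint uniqueness; Case 2 by additive uniqueness on each region plus Theorem \ref{theo:3}/Lemma \ref{lm:36} for the gluing (the paper cites Lemma \ref{lm:36} and Theorem \ref{th:5} and leaves the $\beta_1=\beta_2$ computation implicit, whereas you write it out); Cases 3--4 via the two-essential region and Lemmas \ref{lm:34}, \ref{lm:37}; Case 5 from the construction in Section \ref{sec:both-areas-have}. Your explicit argument that the capacity is forced by trade-off ratios is a useful addition the paper only states in Theorem \ref{theo:repr}.

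One small slip: in Cases 3--4 you say Lemma \ref{lm:34} gives that every interior point is $\cgeq$-equivalent to a non-extreme element of $\Theta$. That is not what Lemma \ref{lm:34} asserts; it says each coordinate value occurring in the one-essential region has a $\Theta$-partner along the other coordinate. The paper's mechanism is that the interior points are exactly those $x$ for which some $ax_2$ or $x_1p$ lies in the two-essential region, so the affine map obtained there determines $f_i(x_i)$ coordinate-wise; outside that set only ordinal information survives, hence the $\psi_i$. Your conclusion is right, but the transport goes through shared coordinate values rather than through $\cgeq$-equivalence.
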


\begin{proof}
  \begin{enumerate}
  \item Direct by uniqueness properties of additive representations.
  \item Direct by uniqueness properties of additive representations, Lemma \ref{lm:36}, Theorem \ref{th:5}.
  \item Assume $\NW{}$ has two essential coordinates. If there exists an element in $\Theta$ such that $x_1$ is minimal or $x_2$ is maximal,
    then, by Lemma \ref{lm:19}, there exist respectively a minimal $\phi_1(x_1)$ and maximal $\phi_2(x_2)$. Points $x$ such that $f_1(x_1) <
    \max f_2(x_2)$ and $f_2(x_2) > \min f_1(x_1)$ are precisely the elements for which there exist $a \in X_1$ or $p \in X_2$ such that
    either $ax_2 \in \NW{}$ or $x_1p \in \NW{}$. From this follows that uniqueness of $\phi_i$ for these points is defined by the uniqueness
    properties of $\NW{}$ and definition of $\phi_i$ on $\SE{}$, i.e. $f_i(x_i) = \alpha g_i(x_i) + \beta$. For the remaining points
    (including extreme elements of $\Theta$), uniqueness is derived from the uniqueness of ordinal representations and Lemma \ref{lm:34}.
  \item The proof is identical to the one in the previous point.
  \item Uniqueness properties are derived from the uniqueness of ordinal representations, Lemma \ref{lm:34} and definition of $\phi_i$
    (Section ~\ref{sec:both-areas-have}). 
  \end{enumerate}
\end{proof}

Uniqueness part of Theorem \ref{theo:repr} directly follows from Lemma \ref{lm:uniq}.

\section{Necessity of axioms}
\label{sec:neccessity-axioms}

\paragraph{A4.}
Let $ap \cleq bq, ar \cgeq bs, cp \cgeq dq$ and assume $cr \clt ds$. Let also $ap, bq, cp, dq \in \NW{}$, $ar,bs,cr,ds \in \SE{}$ and $X_1$
to be symmetric in $\NW{}$ (the other cases are symmetric). We obtain: 

\begin{equation*}
  \begin{aligned}
    \alpha_1 f_1(a) + \alpha_2 f_2(p) & \leq \alpha_1 f_1(b) + \alpha_2 f_2(q) \\
    \alpha_1 f_1(c) + \alpha_2 f_2(p) & \geq \alpha_1 f_1(d) + \alpha_2 f_2(q) \\
    \beta_1 f_1(a) + \beta_2 f_2(r) & \geq \beta_1 f_1(b) + \beta_2 f_2(s) \\
    \beta_1 f_1(c) + \beta_2 f_2(r) & < \beta_1 f_1(d) + \beta_2 f_2(s)
  \end{aligned}
\end{equation*}
From the first two inequalities and essentiality of $X_1$ ($\alpha_1 \neq 0$) follows $f_1(a) + f_1(d) \leq f_1(b) + f_1(c)$. Last two inequalities imply
$f_1(a) + f_1(d) > f_1(b) + f_1(c)$, a contradiction.

We also give the ``necessity'' proof of the condition in Lemma \ref{lm:A5}, since comparing it with the necessity proof of \textbf{A5}
allows to elicit some interesting implications of essentiality. 

\paragraph{Lemma \ref{lm:A5}}
Let $\{g^{(i)}_1 : g^{(i)}_1y_0 \sim g^{(i+1)}_1y_1, g^{(i)}_1 \in X_1, i \in N \}$ and $\{h^{(i)}_2 : x_0h^{(i)}_2 \sim x_1h^{(i+1)}_2,
h^{(i)}_2 \in X_x, i \in N \}$ be two standard sequences, the first entirely contained in $\NW{}$ and the second in $\SE{}$. Assume also,
that there exist $z_1, z_2 \in X$, $p,q \in X_2, a,b \in X_1$ such that $g^{(i)}_1p, g^{(i)}_1q \in \NW{}$, and $ah^{(i)}_2, bh^{(i)}_2 \in
\SE{}$ for all $i$, and $g^{(i)}_1p \sim bh^{(i)}_2$ and $g^{(i+1)}_1p \sim bh^{(i+1)}_2$. Finally, assume, $g^{(i+2)}_1p \cgt
bh^{(i+2)}_2$.  Other cases are symmetric. 

\begin{equation*}
  \begin{aligned}
    \alpha_1 f_1(g^{(i)}_1) + \alpha_2 f_2(y_0) & = \alpha_1 f_1(g^{(i+1)}_1) + \alpha_2 f_2(y_1) \\
    \alpha_1 f_1(g^{(i)}_1) + \alpha_2 f_2(y_0) & = \alpha_1 f_1(g^{(i+1)}_1) + \alpha_2 f_2(y_1) \\
    \\
    \beta_1 f_1(x_0)  + \beta_2 f_2(h^{(i)}_1) & = \beta_1 f_1(x_1)  + \beta_2 f_2(h^{(i+1)}_1) \\
    \beta_1 f_1(x_0)  + \beta_2 f_2(h^{(i+1)}_1) & = \beta_1 f_1(x_1)  + \beta_2 f_2f_2(h^{(i+2)}_1) \\
    \\
    \alpha_1 f_1(g^{(i)}_1) + \alpha_2 f_2(p) & = \beta_1 f_1(b)  + \beta_2 f_2(h^{(i)}_1) \\
    \alpha_1 f_1(g^{(i+1)}_1) + \alpha_2 f_2(p) & = \beta_1 f_1(b)  + \beta_2 f_2(h^{(i+1)}_1) 
  \end{aligned}
\end{equation*}

First two equations imply $\alpha_1 (f_1(g^{(i)}_1) - f_1(g^{(i+1)}_1)) = \alpha_1 (f_1(g^{(i+1)}_1) - f_1(g^{(i+2)}_1))$. The following two imply
$\beta_1(f_2(h^{(i)}_1) - f_2(h^{(i+1)}_1)) = \beta_1 (f_2(h^{(i+1)}_1) - f_2(h^{(i+2)}_1))$. Finally, the last two equations imply $\alpha_1 (f_1(g^{(i)}_1) -
f_1(g^{(i+1)}_1)) = \beta_1(f_2(h^{(i)}_1) - f_2(h^{(i+1)}_1))$. Apparently $\alpha_1 (f_1(g^{(i+1)}_1) - f_1(g^{(i+2}_1)) < \beta_1(f_2(h^{(i+1)}_1) - f_2(h^{(i+2)}_1))$ is
then a contradiction.
\linebreak

If we were to add an essentiality condition to Lemma \ref{lm:A5}, the statement can be made stronger as shown below. 
\paragraph{A5.}
Assume $ap \cleq bq, cp \cgeq dq$ and $ay_0 \sim x_0\pi(a), by_0 \sim x_0\pi(b), cy_1 \sim x_1\pi(c), dy_1 \sim x_1\pi(d)$, and also
$e\pi(a) \cgeq g\pi(b)$. Also, $X_1$ is essential on the set ($\NW{}$ or $\SE{}$) which includes $ap,bq,cp,dq$, and $X_2$ is essential on
the set ($\NW{}$ or $\SE{}$), which includes $x_0\pi(a)$ and $x_0\pi(b)$. 
Finally, assume $e\pi(c) \clt g\pi(d)$.

We get 
\begin{equation*}
  \begin{aligned}
    \alpha_1 f_1(a) + \alpha_2 f_2(p) & \leq \alpha_1 f_1(b) + \alpha_2 f_2(q) \\
    \alpha_1 f_1(c) + \alpha_2 f_2(p) & \geq \alpha_1 f_1(d) + \alpha_2 f_2(q) \\
\\
    \beta_1 f_1(e) + \beta_2 f_2(\pi(a)) & \geq \beta_1 f_1(g) + \beta_2 f_2(\pi(b)) \\
    \beta_1 f_1(e) + \beta_2 f_2(\pi(b)) & < \beta_1 f_1(g) + \beta_2 f_2(\pi(d)) \\
\\
    \gamma_1 f_1(a) + \gamma_2 f_2(y_0) & = \delta_1 f_1(x_0) + \delta_2 f_2(\pi(a)) \\
    \gamma_1 f_1(b) + \gamma_2 f_2(y_0) & = \delta_1 f_1(x_0) + \delta_2 f_2(\pi(b)) \\
\\
    \gamma_1 f_1(c) + \gamma_2 f_2(y_1) & = \delta_1 f_1(x_1) + \delta_2 f_2(\pi(c)) \\
    \gamma_1 f_1(d) + \gamma_2 f_2(y_1) & = \delta_1 f_1(x_1) + \delta_2 f_2(\pi(d))
  \end{aligned}
\end{equation*}

First two inequalities and the essentiality of $X_1$ ($\alpha_1 \neq 0$) imply $f_1(a) - f_1(b) \leq f_1(c) - f_1(d)$. Second pair of inequalities yields
$f_2(\pi(c)) - f_2(\pi(d)) < f_2(\pi(a)) - f_2(\pi(b))$, while the final pair of equations leads to $\gamma_1(f_1(c) - f_1(d)) =
\delta_2(f_2(\pi(c)) - f_2(\pi(d)))$.  Combining these results and due to essentiality of $X_2$ (hence $\delta_2 \neq 0$) we get:
\begin{equation*}
  \gamma_1(f_1(a) - f_1(b)) \leq \gamma_1(f_1(c) - f_1(d)) = \delta_2(f_2(\pi(c)) - f_2(\pi(d))) < \delta_2(f_2(\pi(a)) - f_2(\pi(b))),
\end{equation*}
which contradicts the third pair of inequalities above, which yield  $\gamma_1(f_1(a) - f_1(b)) = \delta_2(f_2(\pi(a)) -
f_2(\pi(b)))$.

\bibliographystyle{elsarticle-num-names}
\bibliography{cite_lib}

\end{document}